\newif\ifSubmission
\newcommand{\lleq}{~{}\leq{}~}
\newcommand{\eeq}{~{}={}~}
\newcommand{\nneq}{~{}\neq{}~}
\newcommand{\SKIP}{\mathtt{skip}}
\newcommand{\ABORT}{\mathtt{abort}}
\newcommand{\ASSIGN}[2]{{#1}\mathrel{\coloneqq}{#2}}
\newcommand{\ITE}[3]{\mathtt{if} \left( {#1} \right) \left\{ {#2} \right\} \text{else} \left\{ {#3} \right\}}
\newcommand{\PCHOICE}[3]{\left\{ {#1} \right\} \mathrel{\PC{#2}} \left\{ {#3} \right\}}
\newcommand{\NDCHOICE}[2]{\left\{ {#1} \right\} \mathrel{\Box} \left\{ {#2} \right\}}
\newcommand{\COMPOSE}[2]{{#1};\:{#2}}
\newcommand{\WHILEDO}[2]{\mathtt{while} \left( {#1}\right) \left\{ {#2} \right\}}
\newcommand{\WHILE}[1]{\mathtt{while} \left( {#1}\right) \{}
\newcommand{\RECDO}[2]{\left\{\mathtt{rec}~#1\kern.1em\Spot#2\right\}}
\newcommand{\REC}[1]{\left\{\mathtt{rec}~#1\kern.1em\Spot\right.}
\newcommand{\IF}[1]{\mathtt{if}~{#1}\colon}
\newcommand{\IFELSE}[3]{\mathtt{if}~{#1}\colon~#2~\mathtt{else}~#3}
\newcommand{\To}{\rightarrow}
\newcommand{\iverson}[1]{\left[ {#1} \right]}
\newcommand{\wpsymbol}{\mathsf{wp}}
\renewcommand{\wp}[2]{\wpsymbol\,\textbf{.}\,{#1}\,\textbf{.}\,{#2}}
\newcommand{\wpCom}[1]{\wpsymbol\,\textbf{.}\,{#1}}
\newcommand{\awpsymbol}{\mathsf{awp}}
\newcommand{\awp}[2]{\awpsymbol\,\textbf{.}\,{#1}\,\textbf{.}\,{#2}}
\newcommand{\one}{\boldsymbol{1}}
\newcommand{\zero}{\boldsymbol{0}}
\newcommand{\charwp}[3]{\tensor*[^{#1}_{#2}]{F}{_{{#3}}}}
\newcommand{\charwpn}[4]{\tensor*[^{#1}_{#2}]{F}{_{{#3}}^{#4}}}
\newcommand{\subst}[2]{[{#1}/{#2}]}
\DeclareMathOperator\lfp{\textsf{lfp}}
\DeclareMathOperator\mymin{\textsf{\upshape{min}}}
\DeclareMathOperator\mymax{\textsf{\upshape{max}}}
\renewcommand{\min}{\mymin}
\renewcommand{\max}{\mymax}
\newcommand{\Max}[2]{\max \left\{ {#1},\, {#2} \right\}}
\newcommand{\Nats}{\mathbb{N}}
\newcommand{\Ints}{\mathbb{Z}}
\newcommand{\Rpos}{\mathbb{R}_{\geq 0}}
\newcommand{\Rspos}{\mathbb{R}_{> 0}}
\newcommand{\Rgen}{\mathbb{R}}
\newcommand{\Rposinf}{\mathbb{R}_{\geq 0}^{\infty}}
\newcommand{\PGCL}{\mathsf{pGCL}}
\newcommand{\GCL}{\mathsf{GCL}}
\newcommand{\E}{\mathbb{E}}
\newcommand{\Vars}{\textsf{Vars}}
\newcommand{\true}{\textsf{true}}
\newcommand{\false}{\textsf{false}}
\newcommand\AS			{\textit{AS}}
\newcommand\AST			{\textit{AST}}
\newcommand\AT			{@} 
\newcommand\Defs			{{:=}\,}
\newcommand\Expec[2]		{{\mathcal E}_{#1}{#2}} 
\newcommand\False			{\textit{false}}
\newcommand\Implies		{\mathbin{\Rightarrow}} 
\newcommand\LHS			{\textit{lhs}	}
\newcommand\Mod			{\mathbin{\rm mod}}
\newcommand\NF[2]			{\nicefrac{#1}{#2}}
\newcommand\PC[1]			{\mathbin{{}_{#1}\kern-.05em\oplus}}
\newcommand\PCF			{\PC{\NF{1}{2}}}
\newcommand\pGCL		{\textit{pGCL}}
\newcommand\RHS			{\textit{rhs}}
\DeclareMathSymbol\RRightarrow{\mathrel}{AMSa}{"56}
\newcommand\True			{\textit{true}}
\newcommand\Wide[1] 		{\hspace*{2em}#1\hspace*{2em}} 
\newcommand\WIDERM[1]	{\Wide{\WideRm{#1}}} 
\newcommand\WideRm[1]	{\hspace*{2em}\textrm{#1}\hspace*{2em}} 
\newcommand\Spot		{\mathbin{\boldmath{\cdot}}} 
\def\Vhrulefill{\leavevmode\leaders\hrule height 0.7ex depth \dimexpr0.4pt-0.7ex\hfill\kern0pt}
\newcommand\ImageInText[4]%
\def\Up{#1}\def\Right{#2}
\raisebox{\Up}[0pt][0pt]{
\makebox[0pt][l]{\hspace{\Right}
\includegraphics[scale=#3]{#4}%
}}}}}
\newcommand\ImageInTextBlock[7]%
\def\Height{#1}
 \def\Depth{#2}
 \def\Width{#3}
 \def\Up{#4}
 \def\Right{#5}
\raisebox{\Up}[\Height][\Depth]{\makebox[\Width][l]{\hspace{\Right}\includegraphics[scale=#6]{#7}
}}\\} 
\newcommand\FirstLineOfTheorem[1][\empty] {\ifx#1\empty\quad\else~\hrulefill~#1\\\fi}
\newcommand\MakeFact[2]{\newenvironment{#1}[2]{\begin{#2}\label{##2}\textit{##1}\rm\noexpand\noexpand\FirstLineOfTheorem}{\hfill$\Box$\end{#2}}}
\newcommand\Cor[1] {Cor.~\ref{#1}}
\newcommand\Def[1] {Def.~\ref{#1}}
\newcommand\Eqn[1] {(\ref{#1})}
\newcommand\Fig[2][\empty] {\ifx#1\empty Fig.~\ref{#2}\else Fig.~\ref{#2}(#1)\fi}
\newcommand\Itm[1] {(\ref{#1})}
\newcommand\Lem[1] {Lem.~\ref{#1}}
\newcommand\Note[1] {\FnSym\ref{#1}}
\newcommand\Sec[1] {\S\ref{#1}}
\newcommand\App[1] {App.~\ref{#1}}
\newcommand\Thm[1] {Thm.~\ref{#1}}
\newenvironment{Reason}{\begin{tabbing}\hspace{4em}\= \hspace{1cm} \= \kill}
    {\end{tabbing}\vspace{-1em}}
\newcommand\Step[2] {#1 \> $\begin{array}[t]{@{}llll}#2\end{array}$ \\}
\newcommand\StepR[3] {#1 \> $\begin{array}[t]{@{}llll}#3\end{array}$
    \` {\RF \makebox[0pt][r]{\begin{tabular}[t]{r}``#2''\end{tabular}}} \\}
\newcommand\WideStepR[3] {#1 \>
    $\begin{array}[t]{@{}ll}~\\#3\end{array}$ \`
    {\RF \makebox[0pt][r]{\begin{tabular}[t]{r}``#2''\end{tabular}}} \\}
\newcommand\RF {\small}
\definecolor{PurplePlum}{rgb}{0.1,0,0.55} 
\definecolor{Brown}{rgb}{0.5,.25,0}
\definecolor{Orange}{rgb}{1,.6,0}
\definecolor{Gray}{rgb}{.7,.7,.7}
\definecolor{DarkGreen}{rgb}{0,.6,0}
\newif\ifBleck
\newcommand\Bleck {\Blecktrue} 
\newcommand\Colour[1] {\color{#1}}
\newcommand\ToCLinks {
 \ifBleck\else 
   \expandafter\def\csname@oddfoot\endcsname{
    {\Colour{blue}\mbox{\hyperlink{w1619}{\sf$\rightarrow$~top}\quad
    				    \hyperlink{w1031}{\sf$\rightarrow$~toc}\quad
    				    \hyperlink{w1148}{\sf$\rightarrow$~lof}\quad
                                      \hyperlink{GreenRoom}{\sf$\rightarrow$~gr}\quad
                                      \hyperlink{EndNotes}{\sf$\rightarrow$~en}\quad
                                      \hyperlink{Sargasso}{\sf$\rightarrow$~sg}\quad
                                      \hyperlink{GlossaryOfMacros}{\sf$\rightarrow$~gm}}\hfill
   }}
   \expandafter\def\csname@evenfoot\endcsname{
    {\Colour{blue}\mbox{\hyperlink{w1619}{\sf$\rightarrow$~top}\quad
                                      \hyperlink{w1031}{\sf$\rightarrow$~toc}\quad
    				    \hyperlink{w1148}{\sf$\rightarrow$~lof}\quad
                                      \hyperlink{GreenRoom}{\sf$\rightarrow$~gr}\quad
                                      \hyperlink{EndNotes}{\sf$\rightarrow$~en}\quad
                                      \hyperlink{Sargasso}{\sf$\rightarrow$~sg}\quad
                                      \hyperlink{GlossaryOfMacros}{\sf$\rightarrow$~gm}}\hfill
   }}
 \fi
}
\newif\ifEndNotes 
\newcommand\FnSym{{\scriptsize\PencilLeftDown\kern.1em}} 
\newcommand\EnSym {{$\bigtriangledown$}}
\newif\ifMakeMarkupsCalled 
\newif\ifSuppress 
\newcommand\MakeMarkups[3][.]{
 \Suppressfalse 
 \ifBleck\Suppresstrue\fi
 \ifx0#1\Suppresstrue\fi
 \ifx1#1\Suppressfalse\fi
 \expandafter\providecommand\csname#2x\endcsname {} 
 \ifSuppress\expandafter\renewcommand\csname#2x\endcsname{\relax}\else
                   \expandafter\renewcommand\csname#2x\endcsname{#3}\fi 
 \expandafter\providecommand\csname#2\endcsname {} 
 \ifSuppress\expandafter\renewcommand\csname#2\endcsname[1]{##1}\else
                   \expandafter\renewcommand\csname#2\endcsname[1]{{\csname#2x\endcsname##1}}\fi 
 \expandafter\providecommand\csname#2d\endcsname {} 
 \ifSuppress\expandafter\renewcommand\csname#2d\endcsname[1]{\relax}\else
                   \expandafter\renewcommand\csname#2d\endcsname[1]{{\csname#2x\endcsname\sout{##1}}}\fi 
 \expandafter\providecommand\csname#2r\endcsname {} 
 \ifSuppress\expandafter\renewcommand\csname#2r\endcsname[2]{{##2}}\else
                   \expandafter\renewcommand\csname#2r\endcsname[2]{\csname#2d\endcsname{##1} \csname#2\endcsname{##2}}\fi 
 \expandafter\providecommand\csname#2t\endcsname {} 
 \ifSuppress\expandafter\renewcommand\csname#2t\endcsname[1]{\relax}\else
                   \expandafter\renewcommand\csname#2t\endcsname[1]{{\csname#2x\endcsname{$\langle\!\langle$##1$\rangle\!\rangle$}}}\fi 
 \expandafter\providecommand\csname#2b\endcsname {} 
 \ifSuppress\expandafter\renewcommand\csname#2b\endcsname[1][empty]{\relax}\else 
                   \expandafter\renewcommand\csname#2b\endcsname[1][\empty]{\ifx\empty##1\empty
                   	\label{#2-bookmark} 
                          \marginpar [\raggedleft\csname#2\endcsname{{\footnotesize\fbox{#2 working here}}~$\Rightarrow$}]
                                            {\csname#2\endcsname{$\Leftarrow$~{\footnotesize\fbox{#2 working here}}}}
                   \else 
                   	\marginpar [\raggedleft\csname#2\endcsname{\ifx\empty##1\empty\else\fbox{\tiny\parbox{6em}{\raggedright##1}}~\fi$\Rightarrow$}]
                                            {\csname#2\endcsname{$\Leftarrow$\ifx\empty##1\empty\else~{\tiny\fbox{\parbox{6em}{\raggedright##1}}}\fi}}\fi}\fi
 \expandafter\providecommand\csname#2TD\endcsname {} 
 \ifSuppress\expandafter\renewcommand\csname#2TD\endcsname{\relax}\else
                   \expandafter\renewcommand\csname#2TD\endcsname{\csname#2\endcsname{\fbox{#2 to do}}}\fi 
 \expandafter\providecommand\csname#2Bar\endcsname {} 
 \ifSuppress\expandafter\renewcommand\csname#2Bar\endcsname{\relax}\else
                   \expandafter\renewcommand\csname#2Bar\endcsname{\csname#2\endcsname{\scriptsize\XSolidBrush}}\fi 
%
 \expandafter\providecommand\csname#2f\endcsname {} 
 \ifSuppress\expandafter\renewcommand\csname#2f\endcsname[2][]{\relax}\else
  \expandafter\renewcommand\csname#2f\endcsname[2][\empty]{ 
    {\mbox{\csname#2x\endcsname\tiny$\boxtimes$}\marginpar{\csname#2x\endcsname\fbox{\FnSym\footnotemark}}\relax 
    \footnotetext{\csname#2x\endcsname##2}}}\fi
%
 \expandafter\providecommand\csname#2e\endcsname {} 
 \ifSuppress\expandafter\renewcommand\csname#2e\endcsname[1]{\relax}\else
  \expandafter\renewcommand\csname#2e\endcsname[1]{
   \global\EndNotestrue 
   \mbox{\scriptsize\csname#2x\endcsname$\boxtimes$}\relax
   \marginpar{\csname#2x\endcsname\fbox{\EnSym\endnotemark
                      \hypertarget{ENmark\thepage-\theendnote}{}~\hyperlink{ENtext\thepage-\theendnote}{{\Colour{blue}$\downarrow$}}}
   }
   { 
    \def\zz{\noexpand#3}
    \edef\z{~{\zz[Endnote \theendnote\ on p.\noexpand\hypertarget{ENtext\thepage-\theendnote}{}\thepage
                ~\noexpand\hyperlink{ENmark\thepage-\theendnote}{{\noexpand\Colour{blue}$\uparrow$}}]}
    }
    \expandafter\endnotetext\expandafter{\z\vspace{2ex}\\ ##1\newpage}
   } 
  }\fi
%
 \expandafter\providecommand\csname#2fe\endcsname {} 
 \ifSuppress\expandafter\renewcommand\csname#2fe\endcsname[2][]{\relax}\else 
  \expandafter\renewcommand\csname#2fe\endcsname[2][]{ 
   \def\File{##1}\relax
   \ifx\File\empty\csname#2f\endcsname{##2}\else 
    \global\EndNotestrue 
    \mbox{\scriptsize\csname#2x\endcsname$\boxtimes$}
    \marginpar{\csname#2x\endcsname\fbox{\FnSym\footnotemark}}\relax
    \footnotetext{~\csname#2x\endcsname##2\
                         --- See [\EnSym\endnotemark\hypertarget{ENmark\thepage-\theendnote}{}
                         \kern-.2em\hyperlink{ENtext\thepage-\theendnote}{{\Colour{blue}$\downarrow$}}].}\relax
   { 
     \def\zz{\noexpand#3}
     \edef\z{~{\zz[Footnote~\thefootnote~on~p.\noexpand\hypertarget{ENtext\thepage-\theendnote}{}\thepage
                 ~\noexpand\hyperlink{ENmark\thepage-\theendnote}
                 {{\noexpand\Colour{blue}\kern-0.1em$\uparrow$}]}}
                 {\noexpand\footnotesize\noexpand\newline\noexpand\hspace*{2em} (~from file {\noexpand\tt\File.tex}~)}
     }    
     \expandafter\endnotetext\expandafter{\z~\par\input{##1}\newpage}
    } 
   \fi 
  } 
 \fi 
%
 \ifSuppress\relax\else
  \csname#2f\endcsname{
   $\backslash$\texttt{#2}$\cdots$\ markups are in \textbf{this} colour\ifx#1..\else\ifx1#1.\else, for #1.\fi\fi
   \ifMakeMarkupsCalled\relax\else 
    \begin{quote}\begin{tabular}{l@{\hspace{2em}}p{.8\linewidth}}
     \multicolumn{2}{l}{\texttt{$\backslash$MakeMarkups\ifx#1.\relax\else[#1]\fi\{#2\}\{{\it$\langle$colour command\/$\rangle$}\}}
     				 --- Defines the macros below:}\\
         & see comments at \texttt{$\backslash$MakeMarkups} definition. \\[1ex]
     \texttt{$\backslash$#2\{$\langle$text$\rangle$\}} & Sets \texttt{$\langle$text$\rangle$} in \texttt{#2}'s colour. \\
     \texttt{$\backslash$#2x} & Changes to \texttt{#2}'s colour (until end of context). \\
     \texttt{$\backslash$#2d\{$\langle$text$\rangle$\}} & Sets \texttt{$\langle$text$\rangle$} in \texttt{#2}'s colour with a strikethrough (i.e.\ delete). \\
     \texttt{$\backslash$#2r\{$\langle$this$\rangle$\}\{$\langle$that$\rangle$\}} &
      Strikes through \texttt{$\langle$this$\rangle$} and inserts \texttt{$\langle$that$\rangle$} (i.e.\ replace). \\
     \texttt{$\backslash$#2f\{$\langle$text$\rangle$\}} & Meta-comment: puts \texttt{$\langle$text$\rangle$} in a \texttt{#2}-footnote with a {\tiny$\boxtimes$} in the main text. \\
     \texttt{$\backslash$#2t\{$\langle$text$\rangle$\}} & Use for meta when  \texttt{$\backslash$#2f} isn't allowed (``Not in outer-par mode.'') \\
     \texttt{$\backslash$#2b[$\langle$optional$\rangle$]} & Marginal pointer, with label for hyper-linking directly there. \\
     \texttt{$\backslash$#2e\{$\langle$text$\rangle$\}} & Puts \texttt{$\langle$text$\rangle$} in a \texttt{#2}-endnote with a (big) $\boxtimes$ in the main text. \\[.5ex]
     \texttt{$\backslash$#2fe[$\langle$this$\rangle$]\{$\langle$that$\rangle$\}} & Makes a  \texttt{$\backslash$#2f\{$\langle$that$\rangle$\}} that refers to a \\
       & \texttt{$\backslash$#2e\{$\langle$contents of file this.tex$\rangle$\}}. \\ 
       & Without the optional argument, acts as \texttt{$\backslash$#2f\{$\langle$that$\rangle$\}}. \\[.5ex]
     \texttt{$\backslash$#2Bar} & Inserts ``burn after reading'' symbol \csname#2Bar\endcsname, meaning
      \begin{quote}\begin{itemize}\setlength\itemsep{0pt}
       \item If yours is the only \csname#2Bar\endcsname\ in this (presumably someone else's) footnote, and you are happy that the footnote has been addressed,
       go ahead and comment-out the whole footnote. (The \csname#2Bar\endcsname\ is their request for you to ``approve and remove''.)
       \item If you are not happy, delete only your \csname#2Bar\endcsname\ and follow-on in the footnote
        (in your colour, i.e.\ with \texttt{$\backslash$#2x}) saying why you are not happy.
       \item If you are happy, but there are others' burn-after-reading symbols as well as yours, just delete yours; the other people have not yet responded.
      \end{itemize}
      \end{quote}
      The idea is that when everyone's happy, the last person will comment-out the meta-text. \\[0.5ex]
     \texttt{$\backslash$#2TD} & Inserts {\csname#2TD\endcsname}\ . \\
    \end{tabular}\end{quote}
   \fi
  }
  \MakeMarkupsCalledtrue 
 \fi
}
\newcommand\Divider {~\vfill~\bigskip\hrule~\\\hrule~\bigskip} 
\newif\ifNoGreenRoom
\newcommand\MakeGreenRoom {\ifBleck\relax\else\ifNoGreenRoom\relax\else
 \hrule
 ~\\\begin{center}\Huge \hypertarget{GreenRoom}{Green Room}
 \end{center}~\\
 \hrule
\fi\fi}
\newif\ifNoEndNotes
\newcommand\MakeEndNotes {\ifBleck\relax\else\ifNoEndNotes\relax\else
 \hrule~\\\begin{center}\ifEndNotes\Huge Endnotes\else\textit{No endnote macros were called in this run.}\fi\end{center}~\\\hrule
 {\parindent 0pt \parskip 2ex \def\enotesize{\normalsize} \def\notesname{} 
 \ifEndNotes\theendnotes\fi} 
\fi\fi}
\newif\ifNoSargasso\newcommand\NoSargasso{\NoSargassotrue}
\newcommand\MakeSargasso {
 \hypertarget{Sargasso}{}
 \newcommand\NewLabel[1] {\OldLabel{Sargasso-##1}} 
 \newcommand\NewRef[1] 
 {\expandafter\ifx\csname r@Sargasso-##1\endcsname\relax\OldRef{##1}\else\OldRef{Sargasso-##1}\fi}
 \let\OldLabel\label \let\label\NewLabel
 \let\OldRef\ref \let\ref\NewRef
\ifBleck\end{document}\else\ifNoSargasso
\begin{center}\huge \textit{Sargasso suppressed in this run}\end{center}
\else
  \hrule
  ~\\\begin{center}\Huge Sargasso
  \end{center}~\\
  \hrule
 \fi\fi
}
\newcommand\EndDocument {\ifBleck\end{document}\fi} 
\newcommand\Cite[2][\empty] {{\color{red}\ifx#1\empty[#2]\else[#2,~#1]\fi}}
\begin{document}

\MakeMarkups[Carroll]{C}{\color{blue}}
\MakeMarkups[Final]{F}{\color{blue!60!black}} 
\MakeMarkups[Annabelle]{A}{\color{red}}
\MakeMarkups[Benjamin]{B}{\color{green!60!black}}
\MakeMarkups[Joost-Pieter]{J}{\color{orange}}
\MakeMarkups[Grey]{G}{\color{gray}}


\newif\ifShowLabels 
\newcommand\ShowLabels {\ShowLabelstrue}
\newcommand\NoShowLabels {\ShowLabelsfalse}

\newcommand\SL[1] {\mbox{~\tt\scriptsize[#1]~}} 
\newcommand\LabelWithShow[1] {\ifShowLabels\SL{#1}\fi\Label{#1}} 
\let\Label\label\let\label\LabelWithShow 

\newif\ifShowLabelsWas

\newenvironment{Equation}{\ShowLabelsWasfalse\ifShowLabels\ShowLabelsWastrue\fi\equation
\ifShowLabels\let\LabelWas\Label\let\Label\label\let\label\LabelWithShow\fi}{\endequation
\ifShowLabelsWas\let\label\Label\let\Label\LabelWas\fi\ShowLabelsWasfalse}

\newenvironment{Align}{\ShowLabelsWasfalse\ifShowLabels\ShowLabelsWastrue\fi\align
\ifShowLabels\let\LabelWas\Label\let\Label\label\let\label\LabelWithShow\fi}{\endalign
\ifShowLabelsWas\let\label\Label\let\Label\LabelWas\fi\ShowLabelsWasfalse}

\ifSubmission\relax\else\ShowLabels\fi
\newpage

\title[A New Proof Rule for Almost-Sure Termination]{A New Proof Rule for Almost-Sure Termination}


\author{Annabelle McIver}
\affiliation{
  \institution{Macquarie University}
  \country{Australia}
}
\email{annabelle.mciver@mq.edu.au}

\author{Carroll Morgan}
\affiliation{
  \institution{University of New South Wales}
  \country{Australia}
}
\affiliation{
  \institution{Data61, CSIRO}
  \country{Australia}
}
\email{carroll.morgan@unsw.edu.au}

\author{Benjamin Lucien Kaminski}
\affiliation{
 \institution{RWTH Aachen University}
 \country{Germany}
}
\affiliation{
  \institution{UCL}
  \country{UK}
}
\email{benjamin.kaminski@informatik.rwth-aachen.de}

\author{Joost-Pieter Katoen}
\affiliation{
 \institution{RWTH Aachen University}
 \country{Germany}
}
\affiliation{
 \institution{IST}
 \country{Austria}
}
\email{katoen@cs.rwth-aachen.de}

\begin{abstract}\F{
An important question for a probabilistic program is whether the probability mass of all its diverging runs is zero, that is that it terminates ``almost surely''. Proving that can be hard, and this paper presents a new method for doing so; it is expressed in a program logic, and so applies directly to source code. The programs may contain both probabilistic- and demonic choice, and the probabilistic choices may depend on the current state.\par
As do other researchers, we use variant functions (a.k.a.\ ``super-martingales'') that are real-valued and probabilistically might decrease on each loop iteration; but our key innovation is that the amount as well as the probability of the decrease are \emph{parametric}.\par
We prove the soundness of the new rule, indicate where its applicability goes beyond existing rules, and explain its connection to classical results on denumerable (non-demonic) Markov chains.}
\end{abstract}

%

\keywords{Almost-sure termination, demonic non-determinism, program logic pGCL.}

\begin{CCSXML}
<ccs2012>
<concept>
<concept_id>10003752.10010124.10010138.10010142</concept_id>
<concept_desc>Theory of computation~Program verification</concept_desc>
<concept_significance>500</concept_significance>
</concept>
<concept>
<concept_id>10003752.10003753.10003757</concept_id>
<concept_desc>Theory of computation~Probabilistic computation</concept_desc>
<concept_significance>300</concept_significance>
</concept>
<concept>
<concept_id>10003752.10010124.10010131.10010135</concept_id>
<concept_desc>Theory of computation~Axiomatic semantics</concept_desc>
<concept_significance>300</concept_significance>
</concept>
</ccs2012>
\end{CCSXML}

\ccsdesc[500]{Theory of computation~Program verification}
\ccsdesc[300]{Theory of computation~Probabilistic computation}
\ccsdesc[300]{Theory of computation~Axiomatic semantics}

\setcopyright{acmlicensed}
\acmJournal{PACMPL}
\acmYear{2018} \acmVolume{2} \acmNumber{POPL} \acmArticle{33} \acmMonth{1} \acmPrice{}\acmDOI{10.1145/3158121}

\maketitle

\section{Introduction}\label{s0906}
This paper concerns termination proofs for sequential, imperative \emph{probabilistic} programs, i.e.\ those that, in addition to the usual constructs, include a binary operator for probabilistic choice. Writing ``standard'' to mean ``non-probabilistic'', we recall that the standard technique for loop termination is to find an \emph{integer-valued} function over the program's state space, a ``variant'', that satisfies the ``progress'' condition that each iteration is guaranteed to decrease the variant strictly and further that the loop guard and invariant imply that the variant is bounded below by a constant (typically zero). Thus it cannot continually decrease without eventually making the guard false; and so existence of such a variant implies the loop's termination.

For probabilistic programs, the definition of loop termination is often weakened to ``almost-sure termination'', or ``termination with probability one'', by which is meant that (only) the probability of the loop's iterating forever is zero. For example if you flip a fair coin repeatedly until you get heads, \F{it is almost sure that you will eventually stop --- for the probability of flipping tails forever is $\NF{1}{2}{\cdot}\NF{1}{2}{\cdots}$, i.e.\ zero}. We will write \AS\ for ``almost sure'' and \AST\ for ``almost-sure termination'' or ``almost-surely terminating''.

But the standard variant rule we mentioned above is too weak for \AST\ in general. Write $\mathit{Com}\PC{p}\mathit{Com}'$ for choice of $\mathit{Com},\mathit{Com}'$ with probability $p,1{-}p$ resp.\ and consider the \AST\ program
\begin{Equation}\label{e1029}
	\COMPOSE{\ASSIGN{x}{1}}{\quad\WHILEDO{x{\neq}0}{~\ASSIGN{x}{(x{+}1)\Mod3}
	~\PCF~ \ASSIGN{x}{(x{-}1)\Mod3}~}}~.
\end{Equation}
It has no standard variant, because that variant would have to be decreased strictly by both updates to $x$. Also the simple \AST\ program
\begin{Equation}\label{e1030}
	\textrm{1dSRW:}\hspace{3em}
	\WHILEDO{x{\neq}0}{~\ASSIGN{x}{x{+}1} ~\PCF~ \ASSIGN{x}{x{-}1}~} ~,
\end{Equation}
the \emph{symmetric random walk} over integers $x$, is beyond the reach of the standard rule.

Thus we need \AST-rules for properly probabilistic programs, and indeed many exist already. One such, designed to be as close as possible to the standard rule, is that an integer-valued variant must be bounded \emph{above} as well as below, and its strict decrease need only occur with \emph{non-zero probability} on each iteration, i.e.\ not necessarily every time \citep[Lem.2.7.1]{McIver:05a}.
\footnote{Over an infinite state space, the second condition becomes ``with some probability bounded away from zero''.}
That rule suffices for Program \Eqn{e1029} above, with variant $x$ and upper bound 2; but still it does not suffice for Program \Eqn{e1030}.

The 1dSRW is however an elementary Markov process, and it is frustrating that a simple termination rule like the above (and some others' rules too) cannot deal with its \AST. This (and other examples) has led to many variations in the design of \AST-rules, a competition in which the rules' assumptions are weakened as much as one dares, to increase their applicability beyond what one's colleagues can do; and yet of course the assumptions must not be weakened so much that the rule becomes unsound. This is our first principal \textbf{Theme (A)} --- the power of \AST-rules.
 
A second \textbf{Theme (B)} in the design of \AST-rules is their applicability at the source level (of program texts), i.e.\ whether they are expressible and provable in a (probabilistic) program logic without ``descending into the model''.
We discuss that practical issue in \Sec{s1314} and \App{s1031} --- it is important e.g.\ for enabling theorem proving.

Finally, a third \textbf{Theme (C)} is the characterisation of the kinds of iteration for which a given rule is guaranteed to work, i.e.\ a completeness result stating for which \AST\ programs a variant is guaranteed to exist, even if it is hard to find. Typical characterisations are ``over a finite state space''  \citep{Hart:83},\citep[Lem.\,7.6.1]{McIver:05a} or ``with finite expected time to termination'' \citep{Fioriti:2015}.
\footnote{\Fx The difficult feature of the 1dSRW is that its expected time to termination is infinite.}

\textbf{The contribution of this paper} is to cover those three themes. We give a novel rule for \AST, one that:
(A) proves almost-sure termination in some cases that lie beyond what some other rules can do; (B) is applicable directly at the source level to probabilistic programs \emph{even if they include demonic choice}, for which we give examples; and (C) is supported by mathematical results from pre- computer-science days that even give some limited completeness criteria. In particular, one of those classical works shows that our new rule  must work for the \emph{two-dimensional} random walk: a variant is guaranteed to exist, and to satisfy all our criteria. That guarantee notwithstanding, we have yet to find a 2dSRW-variant in closed form.

\section{Overview}\label{s1314}
Expressed very informally, the new rule is this:
\begin{quote}
Find a non-negative \emph{real-valued} variant function $V$ of the state such that: (1) iteration cannot increase $V$'s expected value; (2) on each iteration the actual value $v$ of $V$ must decrease by at least $d(v)$ with probability at least $p(v)$ for some fixed non-increasing strictly positive real-valued functions $d,p$;
\footnote{As \Sec{s1643} explains, functions $d,p$ must have those properties for \emph{all} positive reals, not only the $v$'s that are reachable.}
and (3) iteration must cease if $v{=}0$.
\end{quote}
The formal statement of the rule, and a more detailed but still informal explanation, is given in \Sec{s0946}.

Section \ref{s1203} gives notation, and a brief summary of the programming logic we use. Section \ref{s1656} uses that logic to prove the new rule rigorously; thus we do not reason about transition systems directly in our proof. Instead we rely on the logic's \emph{being valid} for transition systems (e.g.\ valid for Markov decision processes), for the following two reasons:

\begin{description}
	\item[\rm Recall Theme (A) ---] The programming logic we use --its theorems to which we appeal-- are valid even for programs that contain demonic choice. And so our result is valid for demonic choice as well. (In \Sec{s1130} and \App{a1713} we discuss the degree of demonic choice that is permitted.)
	\item[\rm Recall Theme (B) ---] Expressing the termination rule in terms of a programming logic means that it can be applied to source code directly and that theorems can be (machine-) proved about it: there is no need to translate the program first into a transition system or any other formalism. The logic we use is a probabilistic generalisation of (standard) Hoare/Dijkstra logic \citep{Dijkstra:76}, due to \citet{Kozen:85} and later extended by \citet{Morgan:96d} and \citet{McIver:05a} to (re-)incorporate demonic choice.
\end{description}

Section \ref{s0918} carefully applies the rule to several small examples, illustrating its power and the logical manipulations it induces. Section \ref{s1530} explores the classical literature on \AST. Section \ref{s1521} examines other contemporary \AST\ rules. Section \ref{s1648} treats some theoretical aspects and limitations.

\section{Preliminaries}\label{s1203}
\subsection{Programming Language and Semantics}\label{s0940}

\F{$\PGCL$ is an imperative language based on Dijkstra's guarded command language $\GCL$ \citeyear{Dijkstra:76} but with an additional operator of binary probabilistic choice $\PC{p}$ introduced by \citet{Kozen:85} and extended by \citet{Morgan:96d} and \citet{McIver:05a} to restore demonic choice: the combination of the two allows one easily to write ``with probability no more than, or no less than, or between.''}
\footnote{\Fx Kozen's ground-breaking work \emph{replaced} demonic choice with probabilistic choice.}
Its forward, operational model is functions from states to sets of discrete distributions on states, \F{where any non-singleton sets represent demonic nondeterminism}: this is essentially Markov decision processes, but also probabilistic/demonic transition systems. (In \Sec{s1130} we describe some of the conditions imposed on the ``demonic'' sets.) Its backwards, logical model is functions from so-called ``post-expectations'' to ``pre-expectations'', non-negative real valued functions on the state that generalise the postconditions and preconditions of Hoare/Dijkstra \citep{Hoare:69} that are Boolean functions on the state: that innovation, and the original link between the forwards and backwards semantics, due to \citet{Kozen:85} but using our terminology here, is that $A=\wp{\mathit{Com}}{B}$, for $\PGCL$ program $\mathit{Com}$ and post-expectation $B$, means that pre-expectation $A$ is a function that gives for every initial state the expected value of $B$ in the final distribution reached by executing $\mathit{Com}$. The demonic generalisation of that \citep{Morgan:96d,McIver:05a} is that $A$ gives the \emph{infimum} over all possible final distributions of $B$'s expected value. Both of these generalise the ``standard'' Boolean interpretation exactly if false is interpreted as zero, true as one, implication as $(\leq)$ and therefore conjunction as infimum.

$\PGCL$'s weakest pre-expectation logic, like Dijkstra's weakest precondition logic, is designed to be applied at the source-code level of programs, as the case studies in \Sec{s0918} illustrate. Its theorems etc.\ are also expressed at the source-code level, but apply of course to whatever semantics into which the logic is (validly) interpreted.

We now set out more precisely the framework in which we operate. Let $\Sigma$ be the set of program states. We call a subset $G$ of $\Sigma$ a \textit{predicate}, equivalently a function from $\Sigma$ to the Booleans. If $\Sigma$ is the Cartesian product of named-variable types, we can describe functions on $\Sigma$ as expressions in which those variables appear free, and predicates are then Boolean-valued expressions.

We use Iverson bracket notation $\iverson{G}$ to denote the \textit{indicator function} of a predicate $G$, that is with value 1 on those states where $G$ holds and 0 otherwise.

An \textit{expectation} is a random variable that maps program states to non-negative reals:

\begin{definition}[Expectations~\textnormal{\citep{McIver:05a}}] \label{def:expectations}
	The set of expectations on $\Sigma$, denoted by $\E$, is defined as $\left\{f ~\middle|~ f\colon \Sigma \To \Rpos \cup \{\infty\}\right\}$ ~. We say that expectation $f$ is \textit{bounded} iff there exists a (non-negative) real $b$ such that $f(\sigma) \leq b$ for all states $\sigma$. The natural complete partial order $\leq$ on $\E$ is obtained by pointwise lifting, that is
\begin{align*}
	f_1\leq f_2 \WIDERM{iff} \forall \sigma\in\Sigma\colon\quad f_1(\sigma)\leq f_2(\sigma) ~. \tag*{$\triangle$}
\end{align*}
\end{definition}%
\noindent 
Thus Iverson brackets $\iverson{-}$ map predicates to expectations, and $(\Implies)$ to $(\leq)$ similarly --- that is, we have  $\iverson{A}{\leq}\iverson{B}$ just when $A{\Implies}B$.

Following \citet{Kozen:85}, here we are are based on Dijkstra's guarded-command language $\GCL$ \citep{Dijkstra:76}, but it is extended with a probabilistic-choice operator $\PC{p}$ between program (fragments) that chooses its left operand with probability $p$ (and its right complementarily). Beyond Kozen however, we use $\PGCL$ where demonic choice is \emph{retained} \citep{Morgan:96d,McIver:05a} --- i.e.\ $\PGCL$ contains \emph{both} probabilistic- and demonic choice. The syntax of $\PGCL$ is given in \autoref{table:wp}, and its semantics of \emph{expectation transformers}, the generalisation of predicate transformers, is defined as follows:

\begin{table}[t]
\renewcommand{\arraystretch}{1.5}
\caption{Rules for the expectation-transformer $\wpsymbol$.}
\begin{tabular}{@{\hspace{1em}}l@{\hspace{2em}}l@{\hspace{1em}}}
	\hline\hline
	$\boldsymbol{C}$			& $\boldsymbol{\textbf{\textsf{wp}}\,\textbf{.}\,C\,\textbf{.}\, f}$\\
	\hline\hline
	$\SKIP$					& $f$ \\
	$\ASSIGN{x}{e}$			& $f\subst{x}{e}$ \\
	$\ITE{G}{C_1}{C_2}$		& $\iverson{G} \cdot \wp{C_1}{f} + \iverson{\neg G} \cdot \wp{C_2}{f}$ \\
	$\PCHOICE{C_1}{p}{C_2}$	& $p \cdot \wp{C_1}{f} + (1 - p) \cdot \wp{C_2}{f}$ \\
	$\NDCHOICE{C_1}{C_2}$	& $\min\left\{ \wp{C_1}{f},\, \wp{C_2}{f} \right\}$ \\
	$\COMPOSE{C_1}{C_2}$		& $\wp{C_1}{\big(\wp{C_2}{f}\big)}$ \\
	 $\WHILEDO{G}{C'}$		& $\lfp X\textbf{.}~\iverson{\neg G} \cdot f + \iverson{G} \cdot \wp{C'}{X}$\\
	\hline
\end{tabular}

\begin{quote}
\par\bigskip In the table above $C$ is a $\PGCL$ program, and $f$ is an expectation. The notation $f\subst{x}{e}$ is function $f$ overridden at argument  $x$ by the value $e$.
A period ``$.$'' denotes (Curried) function application, so that for example $\wp{C_1}{f}$ is semantic-function $\wpsymbol$ applied to the syntax $C_1$;
the resulting transformer is then applied to the ``post-expectation'' $f$.\/\
A centred dot is multiplication, either of scalars or of an expectation by a scalar.

In $\PC{p}$ the probability $p$ can be an expression in the program variables (equivalently a $[0,1]$-valued function of $\Sigma$). Often however it is a constant.

The operator $\Box$ is demonic choice.
\end{quote}
\label{table:wp}
\end{table}

\begin{definition}[The $\wpsymbol$-Transformer~\textnormal{\citep{McIver:05a}}] \label{def:wp}
	The weakest pre-expectation transformer semantic function $\wpsymbol\colon \PGCL \To (\E \To \E)$ is defined in \autoref{table:wp} by induction on all $\PGCL$ programs.
\hfill$\triangle$
\end{definition}

If $f$ is an expectation on the \emph{final} state, then $\wp{\mathit{Com}}{f}$ is an expectation on the \emph{initial} state: thus $\wp{\mathit{Com}}{f}\textbf{.}\,\sigma$ is the infimum, over all distributions of final states that $\mathit{Com}$ can reach from $\sigma$\C{,} of the expected value of $f$ on each of them: there will be more than one just when $\mathit{Com}$ contains demonic choice. In the special case where $f$ is $\iverson{B}$ for predicate $B$, that value is thus the least guaranteed probability with which $\mathit{Com}$ from $\sigma$ will reach a final state satisfying $B$.

The natural connection between the standard world of predicate transformers (Dijkstra) and the probabilistic expectation transformers (Kozen/$\PGCL$) is the indicator function: for example $\iverson{\False}$ is $0$ and $\iverson{\True}$ is $1$,
\footnote{We will blur the distinction between Booleans and constant predicates, so that $\False$ is just as well the predicate that holds for no state. The same applies to reals and constant expectations.}
and the predicate implication $A\Implies B$ is equivalent to the expectation inequality $\iverson{A}\leq\iverson{B}$. The standard $A\Implies\wp{\mathit{Com}}{B}$, using standard $\wpsymbol$ and program $\mathit{Com}$ (i.e.\ without probabilistic choice in $\mathit{Com}$), becomes $\iverson{A}\leq\wp{\mathit{Com}}{\iverson{B}}$ when using the $\wpsymbol$ we adopt here. Finally, the idiom
\begin{align} \label{e0804}
	p\cdot\iverson{A} \Wide{\leq} \wp{\mathit{Com}}{\iverson{B}}\ ,
\end{align}
where ``$\cdot$'' is real-valued multiplication (pointwise lifted if necessary), means ``with probability at least $p$ the program $\mathit{Com}$ will take an initial state satisfying $A$ to a final state satisfying $B$'', where $p$ is a $[0,1]$-valued expression on (or equivalently a function of) the program state: in most cases however $p$ is constant. (See \App{a1221}.) This is because if the initial state $\sigma$ does not satisfy $A$, i.e.\ $A(\sigma)$ is $\False$, then the \LHS of \Eqn{e0804} is zero so that the inequality is trivially true; and if $\sigma$ does satisfy $A$ then the \LHS is $p\cdot1=p$ (or $p(\sigma)$ more generally) and the \RHS\ is the least guaranteed probability of reaching $B$, because the expected value of $\iverson{B}$ over a distribution is the probability that distribution assigns to $B$. (The ``least'' is, again, because of possible demonic nondeterminism.)

There are many properties of $\PGCL$'s probabilistic $\wpsymbol$ that are analogues of $\wpsymbol$ for standard programs; but one that is \emph{not} an analogue is ``scaling'' \citep[Def.~1.6.2]{McIver:05a}, an intrinsically numeric property whose justification rests ultimately on the distribution of multiplication through expected value from elementary probability theory. For us it is that for all commands $\mathit{Com}$, post-expectations $\mathit{Post}$ and non-negative reals $c$ we have
\begin{align}\label{e1752}
	\wp{\mathit{Com}}{(c\cdot\mathit{Post})} \Wide{=} c\cdot (\wp{\mathit{Com}}{\mathit{Post}})~.
\end{align}
We use it in the proof of \Thm{t1651} below. (See also \App{a0950}.)

\subsection{Probabilistic Invariants, Variants, and Termination with Probability 1}\label{s1559}
With the above correspondence, the following probabilistic analogues of standard termination and invariants are natural.

\begin{definition}[Probabilistic Invariants~\protect{~\textnormal{\citep[\textnormal{p.\ 39, Definition 2.2.1}]
{McIver:05a}}}]\label{d0923}
	Let $\mathit{Guard}$ be a predicate, a loop guard, and $\mathit{Com}$ be a $\PGCL$ program, a loop body. Then bounded expectation $\mathit{Inv}$ is a \textit{probabilistic invariant} of the loop\quad$\WHILEDO{\mathit{Guard}}{\mathit{Com}}$\quad just when
\begin{align} \label{e1209}
	\iverson{\mathit{Guard}} \cdot \mathit{Inv} \Wide{\leq} \wp{\mathit{Com}}{\mathit{Inv}}~. 
\end{align}
In this case we say that $\mathit{Inv}$ is \emph{preserved} by each iteration of $\WHILEDO{\mathit{Guard}}{\mathit{Com}}$.
\footnote{If (real valued) expectation $\mathit{Inv}$ were equal to $\iverson{\mathit{Inv}'}$ for some predicate $\mathit{Inv}'$, we'd have $\iverson{\mathit{Guard}\land \mathit{Inv}'} \leq \wp{\mathit{Com}}{\iverson{\mathit{Inv}'}}$, exactly the standard meaning of ``preserves $\mathit{Inv}'$''.}
\hfill$\triangle$
\end{definition}%
When some predicate $\mathit{Inv}'$ is such that $\mathit{Inv} = \iverson{\mathit{Inv'}}$ is a probabilistic invariant, we can equivalently say that $\mathit{Inv}'$ itself is a \emph{standard invariant} (predicate).
\footnote{\label{n1615}For any standard program $\mathit{Com}$, i.e.\ without probabilistic choice, Dijkstra's $\GCL$ judgement\quad $\mathit{Inv}\Implies\wp{\mathit{Com}}{\mathit{Inv}}$\quad is equivalent to our $\PGCL$ judgement\quad $\iverson{\mathit{Inv}}\leq\wp{\mathit{Com}}{\iverson{\mathit{Inv}}}$\quad for any predicate $\mathit{Inv}$.}

In \Sec{s0906} we recalled that the standard method of proving (standard) loop termination is to find an integer-valued variant function $\mathit{VInt}$ on the state such that the loop's guard (and the invariant, if one is given) imply  that $\mathit{VInt}{\geq}0$ and that $\mathit{VInt}$ strictly decreases on each iteration. A probabilistic analogue of loop termination is ``terminates with probability one'', i.e.\ terminates almost-surely, and one (of many) probabilistic analogue(s) of the standard loop-termination rule is the following:

\begin{theorem}[Variant rule for loops~\protect{(existing:~\textnormal{\citep[\textnormal{p.\ 55, Lemma 2.7.1}]{McIver:05a}})}]
\label{thm:lem-2-7-1}
	Let $\mathit{Inv}, \mathit{Guard} \subseteq \Sigma$ be predicates;
	let $\mathit{VInt}\colon\Sigma{\To}\Ints$ be an integer-valued function on the state space;
	let $\mathit{Low}, \mathit{High}$ be fixed integers;
	let $0{<}\varepsilon{\leq}1$ be a fixed strictly positive probability
		that bounds away from zero the probability that $\mathit{VInt}$ decreases; and
	let $\mathit{Com}$ be a $\PGCL$ program.
	Then the three conditions are
	\begin{enumerate}[(i)]
		\item\label{i1657-2}
			$\mathit{Inv}$ is a standard invariant (equiv.\ $\iverson{\mathit{Inv}}$ an invariant)
			of\quad$\WHILEDO{\mathit{Guard}}{\mathit{Com}}$\,, and
		\item\label{i1657-1}
			$\mathit{Guard}\land\mathit{Inv} \,\Implies\, \mathit{Low}{<}\mathit{VInt}{\leq}\mathit{High}$, and
			\footnote{The original rule \citep[Lem.~2.7.1]{McIver:05a} had $\mathit{Low}{\leq}\mathit{VInt}{<}\mathit{High}$. We make this inessential change for later neatness.}
		\item\label{i1657-3}
			for any constant integer $N$ we have\quad$\varepsilon \cdot \iverson{\mathit{Guard} \land Inv \land\mathit{VInt}{=}N} ~\leq~ \wp{\mathit{Com}}{\iverson{\mathit{VInt}{<}N}}$
	\end{enumerate}
	and, when taken all together, they imply
	\quad$\iverson{\mathit{Inv}} \leq \wp{\WHILEDO{\mathit{Guard}}{\mathit{Com}}}{\one}$~,
	that from any initial state satisfying $\mathit{Inv}$ the loop terminates \AS.
\end{theorem}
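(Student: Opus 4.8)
The plan is to prove the implication entirely inside the $\wpsymbol$-calculus, never descending to the operational model, using only monotonicity of $\wpsymbol$, the scaling law~\eqref{e1752}, and the standard sublinearity of $\wpsymbol$ (which in particular turns lower bounds on $\wp{\mathit{Com}}{\iverson{A}}$ and $\wp{\mathit{Com}}{\iverson{B}}$ into a lower bound on $\wp{\mathit{Com}}{\iverson{A \land B}}$). The picture behind it is classical: while the loop runs with $\mathit{Inv}$ true, the integer $\mathit{VInt}$ is confined by condition~(ii) to the finite interval $(\mathit{Low},\mathit{High}]$, so it can strictly decrease at most $K := \mathit{High}-\mathit{Low}$ times before being forced out of that interval --- and being forced out while $\mathit{Inv}$ holds means $\mathit{Guard}$ has become false; since condition~(iii) makes each iteration strictly decrease $\mathit{VInt}$ with probability at least $\varepsilon$ and condition~(i) makes each iteration preserve $\mathit{Inv}$, any run of $K$ consecutive iterations from an $\mathit{Inv}$-state leaves the loop with probability at least $\varepsilon^{K}$, so the probability of surviving $n$ such blocks is at most $(1-\varepsilon^{K})^{n}\to 0$. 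We may assume $\mathit{Guard}\land\mathit{Inv}$ is satisfiable --- otherwise $\mathit{Inv}\Implies\neg\mathit{Guard}$ and the loop exits at once --- so that $K\geq 1$.

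First I would make ``a block of $K$ iterations'' precise in the logic. Write $B := \ITE{\mathit{Guard}}{\mathit{Com}}{\SKIP}$ for the guarded body (which runs $\mathit{Com}$ when $\mathit{Guard}$ holds and does nothing otherwise) and $B^{m}$ for its $m$-fold sequential composition, with $B^{0}:=\SKIP$. Introduce the descending family of predicates $P_{j} := \mathit{Inv}\land(\neg\mathit{Guard}\lor \mathit{VInt}\leq\mathit{High}-j)$ for $0\leq j\leq K$, so that (using~(ii)) $P_{0}\equiv\mathit{Inv}$ and $P_{K}\equiv\mathit{Inv}\land\neg\mathit{Guard}$. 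I would then establish the one-step progress inequality $\varepsilon\cdot\iverson{P_{j}}\leq\wp{B}{\iverson{P_{j+1}}}$ for $0\leq j<K$: on $\neg\mathit{Guard}$-states $B$ behaves as $\SKIP$ and $P_{j}\Implies P_{j+1}$ there, so it is immediate; on a state where $\mathit{Guard}\land P_{j}$ holds, $B$ runs $\mathit{Com}$ from a state with some fixed value $\mathit{VInt}=N\leq\mathit{High}-j$, and instantiating~(iii) at that $N$, instantiating~(i), and combining the two by sublinearity and monotonicity (note $\mathit{Inv}\land\mathit{VInt}<N$ implies $P_{j+1}$) gives $\varepsilon\leq\wp{\mathit{Com}}{\iverson{P_{j+1}}}$. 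Chaining these $K$ inequalities by the scaling law~\eqref{e1752} and monotonicity of $\wpsymbol$ then yields the two facts needed below,
\[
\varepsilon^{K}\cdot\iverson{\mathit{Inv}}\;\leq\;\wp{B^{K}}{\iverson{\neg\mathit{Guard}}}\qquad\text{and}\qquad\iverson{\mathit{Inv}}\;\leq\;\wp{B^{K}}{\iverson{\mathit{Inv}}}\,,
\]
the second using only condition~(i).

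Next I would unfold the loop $W := \WHILEDO{\mathit{Guard}}{\mathit{Com}}$: by its $\lfp$-clause in \autoref{table:wp} and continuity, $\wp{W}{\one}$ is the supremum of the Kleene approximants $\Phi^{n}(\zero)$, where $\Phi(X)=\iverson{\neg\mathit{Guard}}+\iverson{\mathit{Guard}}\cdot\wp{\mathit{Com}}{X}$. These approximants are essentially the straight-line programs $B^{m}$ --- a short calculation gives $\Phi^{m+1}(\zero)=\wp{B^{m}}{\iverson{\neg\mathit{Guard}}}$ --- and, feeding the two facts from the previous step through super-additivity of $\wpsymbol$, an induction on the number $n$ of completed blocks of $K$ iterations shows that with $p:=\varepsilon^{K}\in(0,1]$ the approximants dominate $(1-(1-p)^{n})\cdot\iverson{\mathit{Inv}}$: each block contributes a fresh $p$-fraction of termination (from $\wp{B^{K}}{\iverson{\neg\mathit{Guard}}}\geq p\cdot\iverson{\mathit{Inv}}$) while the surviving $\mathit{Inv}$-mass is carried through $B^{K}$ by $\wp{B^{K}}{\iverson{\mathit{Inv}}}\geq\iverson{\mathit{Inv}}$. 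This step is, in effect, a zero--one law for loops. Letting $n\to\infty$ with $p>0$ gives $\wp{W}{\one}\geq\iverson{\mathit{Inv}}$, which is the claim.

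The genuinely probabilistic content --- the part with no Boolean analogue --- sits in two places, and I expect the second to be the main obstacle. The first is the chaining in the block step: condition~(iii) only bounds the probability that $\mathit{VInt}$ drops below its \emph{current} value, so one cannot simply iterate ``$\mathit{VInt}$ decreases by one''; it is the descending ladder $P_{0}\supseteq\cdots\supseteq P_{K}$, together with monotonicity of $\wpsymbol$, that makes the $K$-fold composition go through. The second is the passage from ``termination probability bounded away from zero within a block of fixed length'' to ``almost-sure termination'': this is a genuine limit argument needing the $\lfp$/Kleene characterisation of the loop, the precise relationship between its finite approximants and the straight-line program $B^{K}$, and the sublinearity, super-additivity and scaling laws of $\wpsymbol$ --- none of which have non-vacuous analogues for ordinary, non-probabilistic $\wpsymbol$. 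The bookkeeping there, especially bounding $\Phi$ iterated $K$ times from below and carrying the geometric induction, is where care is most needed.
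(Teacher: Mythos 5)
Your proposal is sound, but note first that the paper does not prove \Thm{thm:lem-2-7-1} at all: it is imported verbatim as an existing result \citep[Lem.~2.7.1]{McIver:05a}, and (as \Sec{s1130} stresses) its validity is taken from the underlying semantic space of that reference, where it is established via a zero--one law for probabilistic loops rather than by syntactic $\wpsymbol$-manipulation. What you have done is, in effect, re-derive that lemma --- and the zero--one argument beneath it --- entirely inside the expectation-transformer calculus: the descending ladder $P_j$, the $K$-fold guarded body $B^K$ with $\varepsilon^K\cdot\iverson{\mathit{Inv}}\leq\wp{B^K}{\iverson{\neg\mathit{Guard}}}$ and $\iverson{\mathit{Inv}}\leq\wp{B^K}{\iverson{\mathit{Inv}}}$, and the geometric bound on the Kleene approximants. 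I checked the delicate points and they go through: the ladder $P_0\equiv\mathit{Inv}$, $P_K\equiv\mathit{Inv}\land\neg\mathit{Guard}$ is correct given condition (ii); the per-state instantiation of (iii) combined with (i) via sub-conjunctivity does give $\varepsilon\leq\wp{\mathit{Com}}{\iverson{P_{j+1}}}$ on $\mathit{Guard}\land P_j$ states; the identity between the $(m{+}1)$-st Kleene iterate and $\wp{B^m}{\iverson{\neg\mathit{Guard}}}$ holds (using strictness and that $B$ acts as $\SKIP$ on $\neg\mathit{Guard}$ states); and the geometric induction closes once one splits $\max\{\iverson{\neg\mathit{Guard}},\,q\cdot\iverson{\mathit{Inv}}\}$ as a sum $q\cdot\iverson{\mathit{Inv}\lor\neg\mathit{Guard}}+(1{-}q)\cdot\iverson{\neg\mathit{Guard}}$ and applies super-additivity --- exactly the bookkeeping you flag as needing care. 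The trade-off between the two routes is this: the paper's citation-based approach keeps the development lean and guarantees the rule for \emph{every} element of the healthy semantic space (not just $\wpsymbol$-images of $\PGCL$ syntax), which matters for its later use; your approach buys a self-contained, source-level derivation, but it quietly relies on healthiness properties the paper never states --- sublinearity (for the conjunction step and super-additivity) and continuity of $\wpsymbol$ (for the Kleene characterisation of the loop) --- so those would still have to be imported from \citep{McIver:05a}, the very reference the theorem is cited from; you should say so explicitly if you intend the proof to stand on its own.
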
%
The ``\emph{for any integer $N$}'' in \Eqn{i1657-3} above is the usual Hoare-logic technique for capturing an expression's initial value (in this case $\mathit{VInt}$'s) for use in the postcondition: we can write ``$\mathit{VInt}{<}N$'' there for ``the current value $\mathit{VInt}$, here in the final state, is strictly less than the value $N$ it had in the initial state.''
\footnote{In greater detail: if the universally quantified $N$ is instantiated to anything other than $\mathit{VInt}$'s initial value then the left-hand side of  \Itm{i1657-3} is zero, satisfying the inequality trivially since the right-hand side is non-negative by definition of expectations.}
Recalling \Eqn{e0804}, we see that assumption \Itm{i1657-3} thus reads
\begin{quote}
	On every iteration $\mathit{Com}$ of the loop the variant $\mathit{VInt}$
	is guaranteed to decrease strictly with probability at least some (fixed) strictly positive $\varepsilon$~.
\end{quote}

The probabilistic variant rule above differs from the standard rule in two essential respects: the probabilistic variant must be bounded \emph{above} as well as below (which tends to make the rule weaker); and the decrease need not be certain, rather only bounded away from zero (which tends to make the rule stronger). Although this rule does have wide applicability \citep[Chp.~3]{McIver:05a}, it nevertheless is not sufficient for example to show \AST\ of the symmetric random walk, Program \Eqn{e1030}.
\footnote{Any variant that works for \citep[\textnormal{p.\ 55, Lemma 2.7.1}]{McIver:05a} must be bounded above and -below, and integer-valued. And it must be able (with some non-zero probability) to decrease strictly on each step. If its bounds were say $L,H$, then it must therefore be able to terminate from \emph{anywhere} in no more than $H{-}L$ steps, a fixed and finite number. But \Eqn{e1030} does not have that property.}

The advance incorporated in our new rule, as explained in the next section, is to \emph{strengthen \Thm{thm:lem-2-7-1} in three ways}: (1) we remove the need for an upper bound on the variant; (2) we allow the probability $\varepsilon$ to vary; and (3) we allow the variant to be real-valued.  (\Thm{thm:lem-2-7-1} is itself used as a lemma in the proof of soundness of the new rule.)

We will need the following theorem, a probabilistic analogue of the standard technique that partial correctness plus termination gives total correctness, and with similar significance: proving ``only'' that a standard loop terminates certainly indeed does not necessarily give information about the loop's efficiency; but the termination proof is still an essential prerequisite for other proofs about the loop's functional correctness. The same applies in the probabilistic case.
\begin{theorem}[Almost-sure termination for probabilistic loops~
	\protect{(existing: ~\textnormal{\citep[\textnormal{p.\ 43, Lemma 2.4.1, Case 2.}]{McIver:05a}})}]
	\label{thm:lem-2-4-1}\Label{T0909}
	
	Let $\mathit{Term}$
satisfy
	\quad$\iverson{\mathit{Term}} \leq \wp{\WHILEDO{\mathit{Guard}}{\mathit{Com}}}{\one}$\,, that is that from any initial state satisfying $\mathit{Term}$ the 	loop terminates \AS\ (termination), and let \emph{\underline{bounded}} expectation $\mathit{Sub}$ be preserved by $\mathit{Com}$ whenever $\mathit{Guard}$ holds, i.e.\ it is a probabilistic invariant of \quad$\WHILEDO{\mathit{Guard}}{\mathit{Com}}$\quad (partial correctness).
	Then 
	\begin{align*}
		\iverson{\mathit{Term}} \cdot \mathit{Sub}
		\quad\leq\quad
		\wp{\WHILEDO{\mathit{Guard}}{\mathit{Com}}}{(\iverson{\neg\mathit{Guard}} \cdot \mathit{Sub})}~. 
		\tag{total correctness}
	\end{align*}
\end{theorem}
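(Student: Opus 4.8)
The plan is to stay inside the logic and work with the least-fixed-point unfolding of the loop from \autoref{table:wp}. Write $W$ for $\WHILEDO{\mathit{Guard}}{\mathit{Com}}$ and fix a real $b$ with $\mathit{Sub}\leq b\cdot\one$, so that $b\cdot\one-\mathit{Sub}$ is a genuine (non-negative) expectation; this is the one place where boundedness of $\mathit{Sub}$ is indispensable. By continuity of the transformer $\wp{\mathit{Com}}{\cdot}$ (a healthiness property of $\PGCL$) and Kleene's fixed-point theorem, $\wp{W}{(\iverson{\neg\mathit{Guard}}\cdot\mathit{Sub})}=\sup_n g_n$ and $\wp{W}{\one}=\sup_n s_n$, where $g_0=s_0=\zero$, $g_{n+1}=\iverson{\neg\mathit{Guard}}\cdot\mathit{Sub}+\iverson{\mathit{Guard}}\cdot\wp{\mathit{Com}}{g_n}$ and $s_{n+1}=\iverson{\neg\mathit{Guard}}+\iverson{\mathit{Guard}}\cdot\wp{\mathit{Com}}{s_n}$. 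I would then prove, by induction on $n$, the single inequality $g_n+(b\cdot\one-\mathit{Sub})\geq b\cdot s_n$. Granting it, taking suprema (the bracketed term does not depend on $n$) gives $\wp{W}{(\iverson{\neg\mathit{Guard}}\cdot\mathit{Sub})}+(b\cdot\one-\mathit{Sub})\geq b\cdot\wp{W}{\one}\geq b\cdot\iverson{\mathit{Term}}$ by the termination hypothesis; on states where $\mathit{Term}$ holds the two copies of $b$ cancel and we get $\wp{W}{(\iverson{\neg\mathit{Guard}}\cdot\mathit{Sub})}\geq\mathit{Sub}$ there, which (with $\iverson{\mathit{Term}}\leq\one$ and non-negativity) is exactly $\iverson{\mathit{Term}}\cdot\mathit{Sub}\leq\wp{W}{(\iverson{\neg\mathit{Guard}}\cdot\mathit{Sub})}$.

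For the induction the base case and the case $\neg\mathit{Guard}$ (where $g_{n+1}=\mathit{Sub}$ and $s_{n+1}=\one$, so both sides equal $b$) are immediate. On $\mathit{Guard}$, from the hypothesis $g_n+(b\cdot\one-\mathit{Sub})\geq b\cdot s_n$ I rewrite as $g_n+b\cdot\one\geq b\cdot s_n+\mathit{Sub}$, apply the monotone $\wp{\mathit{Com}}{\cdot}$, and then chain: by sublinearity of $\wp{\mathit{Com}}{\cdot}$ in the ``truncated-minus-a-constant'' form \citep{McIver:05a}, $\wp{\mathit{Com}}{(g_n+b\cdot\one)}\leq\wp{\mathit{Com}}{g_n}+b$; by super-additivity together with the scaling law \Eqn{e1752}, $\wp{\mathit{Com}}{(b\cdot s_n+\mathit{Sub})}\geq b\cdot\wp{\mathit{Com}}{s_n}+\wp{\mathit{Com}}{\mathit{Sub}}$; and because $\mathit{Sub}$ is a probabilistic invariant, $\wp{\mathit{Com}}{\mathit{Sub}}\geq\iverson{\mathit{Guard}}\cdot\mathit{Sub}=\mathit{Sub}$ on $\mathit{Guard}$. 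Collecting, $\wp{\mathit{Com}}{g_n}+b\geq b\cdot\wp{\mathit{Com}}{s_n}+\mathit{Sub}$ on $\mathit{Guard}$, which rearranges to precisely $g_{n+1}+(b\cdot\one-\mathit{Sub})\geq b\cdot s_{n+1}$ there.

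I expect the main obstacle to be demonic nondeterminism. Since $\wp{\mathit{Com}}{\cdot}$ is only \emph{sub}linear --- super-additive but not additive --- the tempting ``complementation'' route (lower-bound $\wp{W}{(\iverson{\neg\mathit{Guard}}\cdot\mathit{Sub})}$ by subtracting an upper bound for $\wp{W}{(\iverson{\neg\mathit{Guard}}\cdot(b\cdot\one-\mathit{Sub}))}$ from $b\cdot\wp{W}{\one}$) collapses, because sub-additivity of $\wp{W}$ genuinely fails with demonic choice. Carrying the additive slack $b\cdot\one-\mathit{Sub}$ through the induction on the finite unfoldings sidesteps this: every step uses only monotonicity, scaling, and the sublinearity laws, all of which hold in the demonic model, and at no point must one split $\wp{\mathit{Com}}{\cdot}$ across a sum in the forbidden direction. (Alternatively one could appeal to soundness of the logic over the operational model together with convergence of the bounded super-martingale $b\cdot\one-\mathit{Sub}$ under every scheduler; but the in-logic argument is preferable here, consistent with the paper's aim of not descending into the model.)
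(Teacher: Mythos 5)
Your proof is correct, and it is worth noting that the paper itself gives no proof of this theorem: it is imported verbatim as Lemma~2.4.1 (Case~2) of \citep{McIver:05a}, so the only thing to compare against is the book's argument. Your route --- unfolding the loop into its Kleene iterates $g_n,s_n$ and carrying the additive slack $b\cdot\one-\mathit{Sub}$ through an induction --- is sound: the base case and the $\neg\mathit{Guard}$ case are immediate, and on $\mathit{Guard}$ the chain $\wp{\mathit{Com}}{g_n}+b \geq \wp{\mathit{Com}}{(g_n+b)} \geq \wp{\mathit{Com}}{(b\cdot s_n+\mathit{Sub})} \geq b\cdot\wp{\mathit{Com}}{s_n}+\wp{\mathit{Com}}{\mathit{Sub}} \geq b\cdot\wp{\mathit{Com}}{s_n}+\mathit{Sub}$ uses only monotonicity, scaling \Eqn{e1752}, super-additivity and the truncated-subtraction instance of sublinearity ($\wp{\mathit{Com}}{(f+c)}\leq\wp{\mathit{Com}}{f}+c$ for constant $c$, which holds because termination mass is at most one under every demonic resolution), all valid in the demonic model; the final cancellation of $b$ on $\mathit{Term}$-states then delivers exactly the claimed inequality, and you correctly identify boundedness of $\mathit{Sub}$ as the hypothesis that makes the slack a genuine expectation. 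The book's own proof works at the level of the loop's fixed point directly, combining the invariant with the termination probability via sublinearity (its probabilistic-conjunction idiom $a\,\&\,b = a+b\ominus 1$, suitably scaled by the bound $b$), so the two arguments hinge on the same ingredients --- sublinearity plus boundedness --- but yours makes the bookkeeping explicit through the finite unfoldings instead of a one-shot fixed-point manipulation; the price is the appeal to Kleene's theorem, i.e.\ to continuity of $\wp{\mathit{Com}}{\cdot}$, which is a healthiness property of the McIver--Morgan space (and is where its closure conditions on demonic choice quietly enter), unproblematic here since all expectations in play are bounded by $b$ or $\one$. Your closing remark is also apt: the naive complementation argument does fail under demonic choice precisely because $\wp{}{}$ is only super-additive, so threading the slack through the iterates (or using the book's $\&$-inequality) is genuinely needed.
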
%
The intuitive import of this theorem is that if bounded $\mathit{Sub}$ is a probabilistic invariant preserved by each iteration of the loop body, then also the whole loop ``preserves" $\mathit{Sub}$ from any state where the loop's termination is \AS. This holds even if $\mathit{Com}$ contains demonic choice.

Bounding $\mathit{Sub}$ is required by \citep{McIver:05a}, where \Thm{thm:lem-2-4-1} is found, and it is necessary here (\Sec{s1413}).

\section{A New Proof Rule for Almost-Sure Termination}\label{s1153}
\subsection{Martingales}
Important for us in extending the \AST\ rule is reasoning about ``sub- and super-martingales''.

A \emph{martingale} is a sequence of random variables for which the expected value of each random variable next in the sequence is equal to the current value (irrespective of any earlier values). A \emph{super}-martingale is more general: the current value may be larger than the expected subsequent value; and a \emph{sub}-martingale is the complementary generalisation. In probabilistic programs, as we treat them here, such a sequence of random variables is some expectation evaluated over the succession of program states as a loop executes,
and an exact/super/sub -martingale is an expectation whose exact value at the beginning of an iteration (a single state) is equal-to/no-less-than/no-more-than its expected value at the end of that iteration.

A trivial example of a sub-martingale is the invariant predicate of a loop in standard programming, provided we interpret $\False{\leq}\True$, for if the invariant is true at the beginning of the loop body it must be true at the end --- provided the loop guard is true. More generally in \Def{d0923} above we defined a probabilistic invariant, and at \Itm{e1209} there we see that it is a sub-martingale, again provided the loop guard holds. (If the loop guard does not hold, then $\iverson{G}$ is 0 and the inequality is trivial.) To take the loop guard $G$ into account, we say in that case that $\mathit{Inv}$ is a \emph{sub-martingale \underline{on $G$}}.

\subsection{Introduction, Informal Explanation and Example of the New Rule}\label{s0946}
The new rule is presented here, with an informal explanation; just below it we highlight the way in which it differs from the existing rule referred to in \Thm{thm:lem-2-7-1}; then we give an overview of the new rule's proof; and finally we give an informal example. The detailed proof follows in  Section \Sec{s1656}, and fully worked-out examples are given in \Sec{s0918}. To distinguish material in this section from the earlier rules above, here we use single-letter identifiers for predicates and expectations.

We say that a function is \emph{antitone} just when $x{\leq}y \Implies f(x){\geq}f(y)$ for all $x,y$.

\begin{theorem}[New Variant Rule for Loops]\label{t1651}\label{t1928}
\label{T1928} 
	Let $I,G \subseteq \Sigma$ be predicates;
	let $\mathit{V}\colon \Sigma{\To}\Rpos$ be a non-negative real-valued function
	\emph{\underline{not} necessarily bounded};
	let $p$ (for ``probability'') be a fixed function of type $\Rpos{\To}(0,1]$;
	let $d$ (for ``decrease") be a fixed function of type $\Rpos{\To}\Rspos$,
		both of them antitone on strictly positive arguments; and
	let $\mathit{Com}$ be a $\PGCL$ program.
	
	Suppose the following four conditions hold:
	\begin{enumerate}[(i)]
		\item\label{i1651-2}
			$I$ is a standard
			invariant of \quad$\WHILEDO{G}{\mathit{Com}}$~, and
		\item\label{i1651-1}
			$G\land I\Implies V{>}0$~, and
		\item\label{i1651-3}
			For any $R{\in}\Rspos$ we have
			$p(R) \cdot \iverson{G \land I \land V{=}R}
			~\leq~
			\wp{\mathit{Com}}{\iverson{\mathit{V} \leq R{-}d(R)}}$~, and
		\item\label{i1651-4}
			$V$ satisfies the ``super-martingale'' condition that
			\begin{align*}
				\textrm{for any constant $H$ in $\Rspos$ we have} \hspace{3em}
				\iverson{G\land I} \cdot (H{\ominus}V) \leq \wp{\mathit{Com}}{(H{\ominus}V)}~,
			\end{align*}
			where $H{\ominus}V$ is defined as $\Max{H{-}V}{0}$.
	\end{enumerate}
	Then we have\quad
	$\iverson{I} \leq \wp{\WHILEDO{G}{\mathit{Com}}}{1}$\quad, \F{i.e.\ \AST\ from any initial state satisfying $I$.}
\end{theorem}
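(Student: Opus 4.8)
The plan is to reduce, one ``height'' $H$ at a time, to the existing integer-variant rule \Thm{thm:lem-2-7-1}, and then to let $H\to\infty$. Roughly: for a fixed real $H$ we truncate the loop so that it also halts as soon as $V$ reaches $H$; on the truncated loop $V$ is bounded and can be discretised to an integer variant, so \Thm{thm:lem-2-7-1} shows the truncated loop is \AST; the super-martingale hypothesis \Itm{i1651-4} then bounds the probability of ``escaping'' above $H$ by roughly $V/H$, and letting $H\to\infty$ drives that escape probability to $0$, leaving termination probability~$1$.

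As a preliminary I would absorb the invariant. Since \Itm{i1651-2} makes $I$ a standard invariant of $\WHILEDO{G}{\mathit{Com}}$, a run started in $I$ stays in $I$, so we may restrict the state space to the states satisfying $I$ --- this changes neither $\wp{\WHILEDO{G}{\mathit{Com}}}{1}$ on $I$ nor any of \Itm{i1651-2}--\Itm{i1651-4} --- and henceforth $I$ is all of $\Sigma$ and ``$G$'' abbreviates ``$G\land I$''. (Incidentally \Itm{i1651-4}, instantiated at heights tending to infinity and combined with scaling \Eqn{e1752}, already forces $\mathit{Com}$ itself to be \AST\ on $G$ --- a miniature of the main argument --- and that is exactly what the invariant premise of \Thm{thm:lem-2-7-1} will require.) Now fix $H\in\Rspos$ and set $L_H := \WHILEDO{G\land V{<}H}{\mathit{Com}}$. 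I would apply \Thm{thm:lem-2-7-1} to $L_H$ with integer variant $\mathit{VInt} := \lceil\Min{V}{H}/d(H)\rceil$, bounds $\mathit{Low}:=0$ and $\mathit{High}:=\lceil H/d(H)\rceil$, and decrease-probability $\varepsilon := p(H)$, strictly positive since $p$ is $(0,1]$-valued. Its premise \Itm{i1657-2} is the observation just made; \Itm{i1657-1} holds because on the guard $0{<}V$ (by \Itm{i1651-1}) gives $\mathit{VInt}\geq 1$ while $\Min{V}{H}\leq H$ gives $\mathit{VInt}\leq\mathit{High}$; and for the progress premise \Itm{i1657-3}, at a state satisfying $G\land V{<}H\land\mathit{VInt}{=}N$ one instantiates \Itm{i1651-3} at $R:=V$: since $R{<}H$ and $d,p$ are antitone on $\Rspos$ we have $d(R)\geq d(H)$ and $p(R)\geq p(H)$, so $\wp{\mathit{Com}}{\iverson{V\leq R{-}d(H)}}\geq p(H)$ there, and $V\leq R{-}d(H)$ together with $R\leq N{\cdot}d(H)$ forces the final $\mathit{VInt}$ to be $<N$. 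Hence \Thm{thm:lem-2-7-1} yields $1\leq\wp{L_H}{1}$: the truncated loop is \AST.

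Next I would feed $L_H$ into \Thm{thm:lem-2-4-1} with $\mathit{Term}:=\Sigma$ (licensed by the previous step) and $\mathit{Sub}:=H{\ominus}V$, which is bounded by $H$ and, by \Itm{i1651-4} together with $\iverson{G\land V{<}H}\leq\iverson{G}$, a probabilistic invariant of $L_H$. Since $V{\geq}H$ makes $H{\ominus}V$ vanish, the post-expectation it produces is just $\iverson{\neg G}\cdot(H{\ominus}V)$, so by monotonicity and scaling \Eqn{e1752}, $H{\ominus}V \leq \wp{L_H}{\iverson{\neg G}\cdot(H{\ominus}V)} \leq \wp{L_H}{H\cdot\iverson{\neg G}} = H\cdot\wp{L_H}{\iverson{\neg G}}$. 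Finally $\wp{L_H}{\iverson{\neg G}}\leq t$, where $t:=\wp{\WHILEDO{G}{\mathit{Com}}}{1}$: the left-hand side is by definition the least fixed point of $\Phi(X):=\iverson{\neg(G\land V{<}H)}\cdot\iverson{\neg G}+\iverson{G\land V{<}H}\cdot\wp{\mathit{Com}}{X}$, and $t$ is a pre-fixed point of $\Phi$ because $\Phi(t)=\iverson{\neg G}+\iverson{G\land V{<}H}\cdot\wp{\mathit{Com}}{t}\leq\iverson{\neg G}+\iverson{G}\cdot\wp{\mathit{Com}}{t}=t$. Chaining the inequalities, $H{\ominus}V\leq H\cdot t$ at every state; wherever $V{<}H$ this reads $t\geq 1-V/H$, and as $V$ is real- (hence finite-) valued, letting $H\to\infty$ gives $t\geq 1$, i.e.\ $\iverson{I}\leq\wp{\WHILEDO{G}{\mathit{Com}}}{1}$ --- the claim.

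The step I expect to be the main obstacle is getting the interface between \Itm{i1651-4} and the truncated loop exactly right: one must ensure that $H{\ominus}V$ is a probabilistic invariant \emph{of $L_H$} --- not merely of the untruncated loop, and not merely on $G\land I$, which is why it pays to absorb $I$ first --- and that the ``loop-cutting'' comparison $\wp{L_H}{\iverson{\neg G}}\leq t$ is sound, for which the least-fixed-point semantics of the while-loop and monotonicity of $\wp{\mathit{Com}}{\cdot}$ are precisely the tools. The remaining manipulations are routine; the conceptual heart is that the super-martingale premise converts into an $O(1/H)$ bound on escaping above height $H$, and that is what frees the new rule from any a priori upper bound on $V$.
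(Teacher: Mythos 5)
Your proof is correct and follows essentially the same route as the paper's: truncate the guard at height $H$, discretise $V$ via $\lceil \cdot / d(H)\rceil$ so that \Thm{thm:lem-2-7-1} applies with $\varepsilon = p(H)$ (using antitonicity of $p,d$), then invoke \Thm{thm:lem-2-4-1} with the bounded sub-martingale $H{\ominus}V$ to bound the probability of escaping above $H$ by $V/H$, and finally compare the truncated loop with the original one and let $H\to\infty$. Your only departures are cosmetic --- absorbing the standard invariant $I$ into the state space at the outset (the paper instead carries $\iverson{I}$ through, instantiating $\mathit{Inv}\Defs I$) and inlining the guard-strengthening comparison as a direct Park-induction against $\wp{\WHILEDO{G}{\mathit{Com}}}{1}$, which the paper isolates as \Lem{l0814} in \App{a1139}.
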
%
Note that our theorem is stated (and will be proved) in terms of $H{\ominus}V$. Our justification however for calling \textit{\Itm{i1651-4}} a ``super-martingale condition'' on $V$ is that decrease (in expectation) of\kern.3em$V$ is equivalent to increase of $H{\ominus}V$. (\App{a1216} gives more detail.) Further, in our coming appeal to \Thm{thm:lem-2-4-1} the expectation $\mathit{Sub}$ must be bounded --- and $V$ is not (necessarily). Thus we use $H{\ominus}V$ for arbitrary $H$ instead, each instance of which is bounded by $H$; and $V$ decreases when $H{\ominus}V$ increases.
			
The other reason for using the ``inverted'' formulation is that $\PGCL$ interprets demonic choice by \emph{minimising} over possible final distributions, and so the direction of the inequality in \Thm{thm:lem-2-4-1} means we must express the ``super-martingale property'' of\/$V$ in this complementary way.

As in \Thm{thm:lem-2-7-1}\Itm{i1657-3}, we have written in the Hoare style $V{=}R$ in the pre-expectation at \Itm{i1651-3} above to make $V$'s initial value available (as the real $R$\/) in the post-expectation. The overall effect is
	\begin{quote}
	If a predicate $I$ is a standard invariant,
	and there is a non-negative real-valued variant function $V$, on the state,
	that is a super-martingale on $I{\land}G$ with the progress condition
	that every iteration $\mathit{Com}$ of the loop
	decreases it by at least $d()$ of its initial value with probability at least $p()$ of its initial value,
	then the loop\quad$\WHILEDO{G}{\mathit{Com}}$\quad
	terminates \AS\ from any inital state satisfying $I$.
	\end{quote}

The differences from the earlier variant rule \Thm{thm:lem-2-7-1} are these:
\begin{enumerate}
	\item The variant $V$ is now real-valued, with no upper bound (but is bounded below by zero). We call $V$ a \emph{quasi-variant} to distinguish it from traditional integer-valued variants.
	\item Quasi-variants are \emph{not} required to decrease by a fixed non-zero amount with a fixed non-zero probability. Instead there are two functions $p,d$ that give for each variant-value how much $\textit{Com}$ must decrease it (at least) and with what probability (at least). The only restriction on those functions (aside from the obvious ones) is that they be antitone, i.e.\ that for larger arguments they must give equal-or-smaller (but never zero) values. The reason for requiring $p$ and $d$ to be antitone is to exclude Zeno-like behavior where the variant decreases less and less, and/or with less and less probability. Otherwise, each loop iteration could decrease the variant by a positive amount with positive probability --bringing it ever closer to zero-- but never actually reaching the zero that implies negation of the guard, and thus termination.
	\item Quasi-variants \emph{are} required to be super-martingales: that from every state satisfying $G{\land}I$ the expected value of the quasi-variant after $\mathit{Com}$ cannot increase.\par Note that \Thm{thm:lem-2-7-1} did not have a super-martingale assumption: although the probability that $\textit{VInt}$ decreased by at least 1 was required there to be at least $\varepsilon$, the change in expected value of $\textit{VInt}$ was unconstrained. For example, if with the remaining probability $1{-}\varepsilon$ it increased by a lot (but still not above $\textit{High}$), then its expected value could actually increase as well.
\end{enumerate}

A simple example of the power of \Thm{t1651} (Theme A in \Sec{s0906}) is in fact the symmetric random walk mentioned earlier. Let the state-space be the integers $x$, and let each loop iteration when $x{\neq}0$ either decrease $x$ by 1 or increase it by 1 with equal probability. \AST\ is out of reach of the earlier rule \Thm{thm:lem-2-7-1} because $x$ is not bounded above, and out of reach of some others' rules too, because the expected time to termination is infinite \citep{Fioriti:2015}. Yet termination at $x{=}0$ is shown immediately with \Thm{t1651} by taking $V{=}|x|$, trivially an exact martingale when $x{\neq}0$, and $p{=}\NF{1}{2}$ and $d{=}1$.

\subsection{Rigorous Proof of \Thm{t1651}}\label{s1656}
We begin with an informal description of the strategy of the proof that follows.
\begin{enumerate}[A.]
	\item\label{i1656-1} We choose an arbitrary real value $H{>}0$ and temporarily strengthen the loop's guard by conjoining $V{\leq}H$. From the antitone properties of $p,d$ we know that each execution of $\textit{Com}$ with that strengthened guard decreases quasi-variant $V$ by at least $d(H)$ with probability at least $p(H)$. Using that to ``discretise'' $V$, making it an integer bounded above and below, we can appeal to the earlier \Thm{thm:lem-2-7-1} to show that this guard-strengthened loop terminates \AS\ for any $H$.
	\item\label{i1656-2} Using the super-martingale property of $V$, we argue that the probability of ``bad'' escape to $V{>}H$ decreases to zero as $H$ increases: for escape from the strengthened loop to $V{>}H$ with some probability $e$ say implies a contribution of at least $e\cdot H$ to $V$'s expected value at that point. But that expected value cannot exceed $V$'s original value, because $V$ is a super-martingale. (For this we appeal to \Thm{thm:lem-2-4-1} after converting $V$ into a sub-martingale as required there.) Thus as $H$ gets larger $e$ must get smaller.
	\item\label{i1656-3} Since $e$ approaches 0 as $H$ increases indefinitely, we argue finally that, wherever we start, we can make the probability of escape to $V{>}H$ as small as we like by increasing $H$ sufficiently; complementarily we are making the only remaining escape probability, i.e.\ of ``good'' escape to $\neg G$, as close to 1 as we like. Thus it equals 1, since $H$ was arbitrary. Because this last argument depends essentially on increasing $H$ without bound, it means that $p,d$ must be defined, non-zero and antitone on \emph{all} positive reals, not only on those resulting from $V(\sigma$) on some state $\sigma$ the program happens to reach. This is particularly important when $V$ is bounded. (See \Sec{s1643}.)
\end{enumerate}

We now give the rigorous proof of \Thm{t1651}, following the strategy explained just above.
\begin{proof} (of \Thm{t1651})\\
	Let $V$ be a quasi-variant for\quad$\WHILEDO{G}{\mathit{Com}}$~, satisfying $p,d$ progress for some $p,d$ as defined in the statement of the theorem,
and recall that $I$ is a standard invariant for that loop.

\paragraph{\ref{i1656-1}} \hrulefill~\emph{For any $H$, the loop \Eqn{e1412} below terminates \AS\ from any initial state satisfying $I$.} \\
Fix arbitrary $H$ in $\Rspos$, and strengthen the loop guard $G$ of \quad$\WHILEDO{G}{\mathit{Com}}$\quad with the conjunct $V{\leq}H$. We show that
\begin{align}
	\iverson{I}\Wide{\leq}\wp{\WHILEDO{G\land V{\leq}H}{\mathit{Com}}}\one~, \label{e1412}
\end{align}
i.e.\ that standard invariant $I$ describes a set of states from which the loop \Eqn{e1412} terminates \AS.

We apply Thm.~\ref{thm:lem-2-7-1} to \Eqn{e1412}, after using ceiling $\lceil-\rceil$ to make an integer-valued variant $\mathit{VInt}$, and with other instantiations as follows:
\begin{Equation}\label{e1003}\begin{array}{c}
	\mathit{Inv}\Defs I \qquad
	\mathit{Guard}\Defs G\land V{\leq}H \\[1ex]
	\mathit{VInt}\Defs \left\lceil \frac{V}{d(H)} \right\rceil \qquad
	\mathit{Low}\Defs 0 \qquad 
	\mathit{High}\Defs \left\lceil \frac{H}{d(H)} \right\rceil \qquad 
	\varepsilon\Defs p(H) 
\end{array}\end{Equation}
The $\mathit{VInt}$ can be thought of as a \emph{discretised} version of $V$ ---
the original $V$ moves between $0$ and $H$ with down-steps of at least $d(H)$
while integer $\mathit{VInt}$ moves between $0$ and $\mathit{High}$ with down-steps of at least $1$.
In both cases, the down-steps occur with probability at least $p(H)$.

We now verify that our choices \Eqn{e1003} satisfy the assumptions of Thm.~\ref{thm:lem-2-7-1}:
\begin{enumerate}
	\item $\mathit{Inv}$ is a standard invariant of \Eqn{e1412} because  $I$ is by assumption a standard invariant of the loop\quad$\WHILEDO{G}{\mathit{Com}}$\,, and the only difference is that \Eqn{e1412} has a stronger guard.
	\item  Now note that $V{\leq}H$ implies $\left\lceil\sfrac{V}{a} \right\rceil \leq \left\lceil\sfrac{H}{a} \right\rceil$ for any strictly positive $a$. Then
		\begin{align*}
			& \mathit{Guard}\land\mathit{Inv} \\
			\Longleftrightarrow\quad & (G\land V{\leq}H)\land I \tag*{instantiations $\mathit{Guard},\mathit{Inv}$} \\
			\implies\quad & 0{<}V{\leq}H \tag*{$G\land I\Implies 0{<}V$ assumed at \Thm{t1651} (\ref{i1651-1})} \\
			\implies\quad & 0<\left\lceil \NF{V}{d(H)} \right\rceil\leq\left\lceil \NF{H}{d(H)} \right\rceil 
				\tag*{remark above and $d(H){>}0$} \\
			\implies\quad & \mathit{Low}<\mathit{VInt}\leq\mathit{High}~.
				\tag*{instantiations $\mathit{Low},\mathit{VInt},\mathit{High}$}
		\end{align*}
	\item In this final section of Step \Itm{i1656-1} we will write in an explicit style that relies less on Hoare-logic conventions and more on exposing clearly the types involved and the role of the initial- and final state. In this style, our assumption for appealing to \Thm{thm:lem-2-7-1} is that for all (initial) states $\sigma$ we have
\begin{align}
	& p(H)\cdot\iverson{G(\sigma)\land V(\sigma){\leq}H\land I(\sigma)} \label{e0945L} \\
	\leq\quad & \wp{\mathit{Com}}{(\lambda\sigma'.\iverson{\mathit{VInt}(\sigma')<\mathit{VInt}(\sigma)})}(\sigma)~. \label{e0945R}
\end{align}
Here both the \LHS and \RHS\ are real-valued expressions in which an arbitrary initial state $\sigma$ appears free. On the left $G,I$ are predicates on $\Sigma$, and $V$ is a non-negative real-valued function on $\Sigma$, and $p,H$ are constants of type $\Rspos{\to}\Rspos$ and $\Rspos$ respectively.

On the right\quad $\wp{\mathit{Com}}{(-)}$\quad is a (weakest pre-) expectation, a real-valued function on $\Sigma$; applying it to the initial state --the final $(\sigma)$ in \Eqn{e0945R} at \RHS-- produces a non-negative real scalar.

The second argument $(-)$ of\quad $\wp{\mathit{Com}}{(-)}$\quad is a post-expectation, again a function of type $\Sigma{\to}\Rpos$, but\quad $\wpCom{\mathit{Com}}$\quad takes that $(-)$'s expected value over the \emph{final} distribution(s) that $\mathit{Com}$ reaches from $\sigma$ --- for mnemonic advantage, we bind its states with $\sigma'$. And using $\sigma'$ also allows us to refer in $(-)$ to the initial state as $\sigma$, not captured by $(\lambda\sigma'.\cdots)$, so that we can compare the initial  $\mathit{VInt}(\sigma)$ and final $\mathit{VInt}(\sigma')$ values of $\mathit{VInt}$ as required.

What we have now is our assumption of progress for the original loop\quad$\WHILEDO{G}{\mathit{Com}}$\,, which was

\begin{Equation}\label{e0949}\begin{array}{rl}
	& p(V(\sigma))\cdot\iverson{G(\sigma)\land I(\sigma)} \\
	\leq\quad & \wp{\mathit{Com}}{(\lambda\sigma'.\iverson{V(\sigma')\leq V(\sigma){-}d\big(V(\sigma)\big)})}(\sigma)~,
\end{array}\end{Equation}%
and we must use \Eqn{e0949}, together with the antitone properties of $p,d$ to show $\Eqn{e0945L}{\leq}\Eqn{e0945R}$. We begin with \Eqn{e0945L} and reason

\begin{align*}
	& p(H)\cdot\iverson{G(\sigma)\land V(\sigma){\leq}H\land I(\sigma)} \tag*{\Eqn{e0945L} above}\\
	=\quad & p(H)\cdot\iverson{G(\sigma)\land 0{<}V(\sigma){\leq}H\land I(\sigma)} \tag*{$G\land I\Implies V{>}0$ by assumption \Thm{t1651}\Itm{i1651-1}} \\
	\leq\quad & p(V(\sigma))\cdot\iverson{G(\sigma)\land 0{<}V(\sigma){\leq}H\land I(\sigma)} \tag*{$V(\sigma){\leq}H$; $p$ antitone and defined on $V(\sigma)$ \footnotemark} \\
	\leq\quad & p(V(\sigma))\cdot\iverson{G(\sigma)\land I(\sigma)} \tag*{drop conjunct: $\iverson{A\land B\land C}\leq\iverson{A\land C}$} \\
	\leq\quad &  \wp{\mathit{Com}}{(\lambda\sigma'.\iverson{V(\sigma')\leq V(\sigma){-}d\big(V(\sigma)\big)})}(\sigma)\textrm{~.} \tag*{assumption \Eqn{e0949} above} \\
\end{align*}
\footnotetext{\label{n0752}Here potentially the value of $p(0)$ is used on the left, when $V(\sigma)$ is zero; but because $\iverson{\cdots0{<}V(\sigma)\cdots} = 0$ in that case, it makes no different what $p(0)$'s value is. The antitone property applies only for positive arguments.}
Now continuing only within the $\iverson{-}$ of the post-expectation we have
\footnote{\label{n1010}This reduces clutter, and in general $A{\Implies}B$ implies $\iverson{A}{\leq}\iverson{B}$, and $\wp{\mathit{Com}}{(-)}$ is itself monotonic for any $\mathit{Com}$.}

\begin{align*}
	& V(\sigma')\leq V(\sigma){-}d\big(V(\sigma)\big) \\
	\implies\quad & \big\lceil V(\sigma')/d(H)\big\rceil\leq \big\lceil V(\sigma)/d(H)-d\big(V(\sigma)\big)/d(H)\big\rceil \tag*{$d(H){>}0$, $\lceil-\rceil$ monotonic}\\
	\implies\quad & \big\lceil V(\sigma')/d(H)\big\rceil\leq\big\lceil V(\sigma)/d(H)\big\rceil-1 \tag*{$V(\sigma) \leq H$, $d$ antitone, \LHS \Eqn{e0945L}} \\
	\implies\quad & \big\lceil V(\sigma')/d(H)\big\rceil<\big\lceil V(\sigma)/d(H)\big\rceil \\
	\implies\quad & \textit{VInt}(\sigma')<\textit{VInt}(\sigma)~. \tag*{definition $\mathit{VInt}$}
\end{align*}%
Placing the last line back within\quad$\wp{\mathit{Com}}{(\lambda\sigma'.\iverson{-}})(\sigma)$\quad gives what was required at \Eqn{e0945R} and establishes \Eqn{e1412} --- that escape from $0{<}V{\leq}H$ occurs \AS\ from any initial state satisfying $I$.
\end{enumerate}

\paragraph{\ref{i1656-2}} \hrulefill~\emph{Loop \Eqn{e1412}'s probability of termination at $\neg G$ tends to 1 as $H{\rightarrow}\infty$.} \\
For the probabilistic invariant, i.e.\ sub-martingale $\mathit{Sub}$ in \autoref{thm:lem-2-4-1}, we choose $H{\ominus}V$. Note that, as required by \Thm{thm:lem-2-4-1}, expectation $\mathit{Sub}$ is bounded (by $H$\,). Let predicate $\mathit{Term}$ be $I$ which from \Eqn{e1412} we know ensures \AST\ of the modified loop. Thus the assumptions of \Thm{thm:lem-2-4-1} are satisfied: reasoning from its conclusion we have

\begin{align*}
	& \iverson{I}\cdot H{\ominus}V ~\leq~ \wp{\WHILEDO{G\land V{\leq}H}{\mathit{Com}}}{(\iverson{\neg (G\land V{\leq}H)}\cdot H{\ominus}V)} \\
	\Longleftrightarrow\quad & \iverson{I}\cdot H{\ominus}V ~\leq~ \wp{\WHILEDO{G\land V{\leq}H}{\mathit{Com}}}{(\iverson{\neg G}\cdot H{\ominus}V)} \tag*{$V{>}H\Implies H{\ominus}V{=}0$} \\
	\Longleftrightarrow\quad & \iverson{I}\cdot 1{\ominus}\NF{V}{H} ~\leq~ \wp{\WHILEDO{G\land V{\leq}H}{\mathit{Com}}}{(\iverson{\neg G}\cdot 1{\ominus}\NF{V}{H})} \tag*{scaling \Eqn{e1752} by $\NF{1}{H}$} \\
	\implies\quad & 1{\ominus}\NF{V}{H}\cdot\iverson{I}  ~\leq~ \wp{\WHILEDO{G\land V{\leq}H}{\mathit{Com}}}{\iverson{\neg G}}~, \tag*{monotonicity}
\end{align*}%
that is, recalling \Eqn{e0804},  that from any initial state satisfying $I$ the loop \Eqn{e1412} terminates in a state satisfying $\neg G$ with probability at least $1{\ominus}\NF{V}{H}$. As required, that probability (for fixed initial state) tends to 1 as $H$ tends to infinity.

\paragraph{\ref{i1656-3}} \hrulefill~\emph{The original loop terminates \AS\ from any initial state satisfying $I$.} \\
From \App{a1139}, instantiating $A\Defs G$ and $B\Defs V{\leq}H$,
we have for any $H$ that
\[
	\wp{\WHILEDO{G\land V{\leq}H}{\mathit{Com}}}{\iverson{\neg G}}
	\Wide{\leq}\wp{\WHILEDO{G}{\mathit{Com}}}{\iverson{\neg G}}
\] 
and, referring to the last line in \Itm{i1656-2} just above, we conclude $(1{\ominus}\NF{V}{H})\cdot\iverson{I} \leq\wp{\WHILEDO{G}{\mathit{Com}}}{\iverson{\neg G}}$\,. Since that holds for any $H$ no matter how large, we have finally that
\[
  \iverson{I} \Wide{\leq} \wp{\WHILEDO{G}{\mathit{Com}}}{\iverson{\neg G}} \Wide{\leq} \wp{\WHILEDO{G}{\mathit{Com}}}{\one}~,
\]
that is that from any initial state satisfying $I$ the loop\quad$\WHILEDO{G}{\mathit{Com}}$\quad terminates \AS.
\end{proof}

\section{Case Studies}\label{s0918}
In this section, we examine a few (mostly) non-trivial examples to show the effectiveness of \Thm{t1651}.
For all examples we provide a $p,d$ quasi-variant $V$ that proves \AST;
and we will always choose $p,d$ so that they are strictly positive and antitone.
We will not provide proofs of the $p,d$ properties, because they will be self-evident and are in any case ``external'' mathematical facts.
We do however carefully set-out any proofs that depend on the program text: that $V{=}0$ indicates termination, that $V$ satisfies the super-martingale property, and that $p$, $d$, and $V$ satisfy the progress condition.

For convenience in these examples, we define a derived expectation transformer $\awpsymbol$, over terminating straight-line programs only (as our loop bodies are, in this section), that ``factors out'' the $(H{\ominus})$; it has the same definition as of $\wpsymbol$ in \autoref{table:wp} except that nondeterminism is interpreted angelically rather than demonically: that is, we define
\begin{align*}
	\awp{\NDCHOICE{C_1}{C_2}}{f}	\Wide{=} \max\left\{ \awp{C_1}{f},\, \awp{C_2}{f} \right\} ~,
\end{align*}
and otherwise as for $\wpsymbol$ (except for loops, which we do not need here).
A straightforward structural induction then shows that for straight-line programs $\mathit{Com}$, constant $H$ and any expectation $V$  that
\begin{Align}\label{e1134}
	H\ominus \awp{\mathit{Com}}{V} \Wide{\leq} \wp{\mathit{Com}}{(H{\ominus}V)} ~.
\end{Align}
And from there we have immediately that
\begin{Align}
	V\geq \awp{\mathit{Com}}{V} \Wide{\implies} H{\ominus}V\leq \wp{\mathit{Com}}{(H{\ominus}V)} ~, \label{e0908}
\end{Align}
and finally therefore that
\begin{Align}
	V\geq \iverson{G\land I}\cdot \awp{\mathit{Com}}{V} \Wide{\implies} \iverson{G\land I}\cdot(H{\ominus}V)\leq \wp{\mathit{Com}}{(H{\ominus}V)} ~,\label{e0909}
\end{Align}
since if $G\land I$ holds then \Eqn{e0909} reduces to \Eqn{e0908} and, if it does not hold, both sides of \Eqn{e0909} are trivially true. Thus when the loop body is a straight-line program, by establishing \LHS \Eqn{e0909} we establish also \RHS\ \Eqn{e0909} as required by \Thm{t1651}\Itm{i1651-4}. We stress that $\awpsymbol$ is used here for concision and intuition only: applied only to finite, non-looping programs, it can always be replaced by $\wpsymbol$.

Thus \LHS \Eqn{e0909} expresses clearly and directly that $V$ is a super-martingale when $G\land I$ holds, and handles any nondeterminism correctly in that respect: because $\awpsymbol$ \emph{maximises} rather than minimises over nondeterministic outcomes (the opposite of $\wpsymbol$), the super-martingale inequality $(\geq)$ holds for every individual outcome, as required.

In \Sec{s1127} we discuss the reasons for not using $\awpsymbol$ in \Thm{t1651} directly, i.e.\ not eliminating ``$H\ominus$'' at the very start: in short, it is because our principal reference \citep{McIver:05a} does not support $\awpsymbol$.

\subsection{The Negative-Binomial Loop}\label{ss1118}

Our first example is also proved by other \AST\ rules, so we do not need the extra power of \Thm{t1651} for it; but we begin with this to illustrate Theme B with a familar example how \Thm{t1651} is used in formal reasoning over program texts.

\paragraph{Description of the loop.}
Consider the following while-loop over the real-valued variable $x$:
\begin{equation}\label{e1143}
	\WHILEDO{x{\neq}0}{~\ASSIGN{x}{x{-}1} ~\PCF~ \SKIP~} ~.
\end{equation}
An interpretation of this loop as a transition system is illustrated in \autoref{fig:geoloop}.
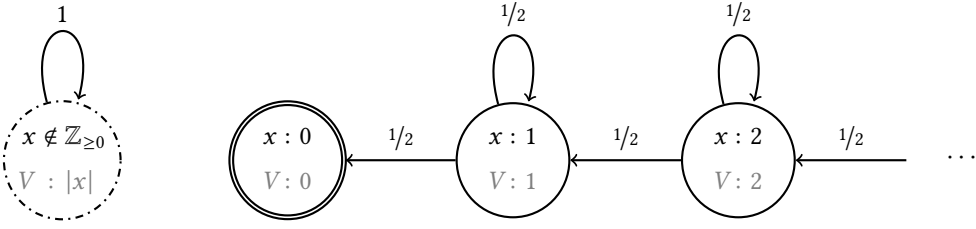
\begin{figure}[t]
	\begin{center}
		\let\oldarraycolsep\arraycolsep
		\arraycolsep=0pt
		\begin{tikzpicture}[every state/.append style={thick, inner sep=0pt, minimum size=1.5cm}]
			\draw[help lines, use as bounding box,white] (-3.75,-0.75) grid (9.75, 2);
			
			\node[state, dashdotted] (notZ) at (-3, 0) {$%
				\begin{array}{rcl}%
					x & {}\not\in{} & \mathbb{Z}_{\geq 0} \\[.5em]
					\textcolor{gray}{V} & \textcolor{gray}{:} & \textcolor{gray}{|x|}
				\end{array}%
			$};
			
			\node[state, accepting] (c0) at (0, 0) {$%
				\begin{array}{ll}%
					x & \colon 0 \\[.5em]
					\textcolor{gray}{V} & \textcolor{gray}{\colon 0}
				\end{array}%
			$};
			
			\node[state] (c1) at (3, 0) {$%
				\begin{array}{ll}%
					x & \colon 1 \\[.5em]
					\textcolor{gray}{V} & \textcolor{gray}{\colon 1}
				\end{array}%
			$};
			
			\node[state] (c2) at (6, 0) {$%
				\begin{array}{ll}%
					x & \colon 2 \\[.5em]
					\textcolor{gray}{V} & \textcolor{gray}{\colon 2}
				\end{array}%
			$};
			
			\node[state, draw=none] (dots) at (9, 0) {$\cdots$};

			\path[->,thick] (notZ) edge[loop above] node[above] {$1$} (c1);
			
			\path[->,thick] (c1) edge[loop above] node[above] {$\NF{1}{2}$} (c1);
			\path[->,thick] (c1) edge[left] node[above] {$\NF{1}{2}$} (c0);
			
			\path[->,thick] (c2) edge[loop above] node[above] {$\NF{1}{2}$} (c2);
			\path[->,thick] (c2) edge[left] node[above] {$\NF{1}{2}$} (c1);
			
			\path[->,thick] (dots) edge[left] node[above] {$\NF{1}{2}$} (c2);
		\end{tikzpicture}
		\arraycolsep=\oldarraycolsep
	\end{center}
	\caption{Execution of the negative binomial loop. 
	The solid nodes represent program states and moreover the doubly-circled node represents a state in which the loop has terminated.
	The leftmost dash-dotted node represents the \emph{collection} of all states in which the value of $x$ is not a non-negative integer (from where the random walk will indeed not terminate).
	Inside the nodes we give the variable valuations as well as the values of the variant $V = |x|$ in each state. 
	The values of $p$ and $d$ are constantly $\NF{1}{2}$ and $1$, respectively.}
	\label{fig:geoloop}
\end{figure}
%
Intuitively, this loop keeps flipping a coin until it flips, say, heads $x$ times (not necessarily in a row); every time it flips tails, the loop continues without changing the program state.

We call it the negative binomial loop because its runtime is distributed according to a negative binomial distribution (with parameters $x$ and $\NF {1}{2}$), and thus the expected runtime is linear (on average $2x$ loop iterations) even though it allows for infinite executions, namely those runs of the program that flip heads fewer than $x$ times and then keep flipping tails ad infinitum.

A subtle intricacy is that this loop will not terminate at all, if $x$ is initially not a \emph{non-negative integer}, because then the execution of the loop never reaches a state in which $x{=}0$.
This is where we use Theorem~\ref{t1651}'s ability of incorporating an invariant into the \AST~proof, as standard arguments over loop termination do.

\paragraph{Proof of almost-sure termination}
The guard is given by \hfill $G \eeq x{\neq}0$~,\\
and the loop body by \hfill $\mathit{Com} \eeq \PCHOICE{\ASSIGN{x}{x - 1}}{\NF{1}{2}}{\SKIP}$~.\\
And with the standard invariant \hfill $I \eeq x {\in} \mathbb{Z}_{\geq 0}$~,\\
we can now prove \AST\ of the loop with
an appropriate $p,d$ and quasi-variant $V$:
\begin{align*}
	V \eeq |x|, \qquad\textnormal{for } d \eeq 1 \quad\textnormal{and}\quad p \eeq1/2 ~.
\end{align*}
Notice that $d,p$ are strictly speaking constant functions mapping any positive real $v$ to $1,\NF{1}{2}$ respectively.
Intuitively, this choice of $I$, $V$, $p$, and $d$ tells us that if $x$ is a positive integer different from $0$, then after one iteration of the loop body (a) $x$ is still a non-negative integer (by invariance of $I$) and (b) the distance of $x$ from $0$ has decreased by at least $1$ with probability at least~$\NF{1}{2}$~(implied by the progress condition).

We first check that $I = x {\in} \mathbb{Z}_{\geq 0}$ is indeed an invariant:
\begin{align*}
	\iverson{G} \cdot \iverson{I} 
	\eeq \iverson{x \neq 0} \cdot \iverson{x \in \mathbb{Z}_{\geq 0}}
	\eeq &\iverson{x \in \mathbb{Z}_{> 0}} \\
	\lleq &\frac{1}{2} \big(\iverson{x \in \mathbb{Z}_{> 0}} + \iverson{x \in \mathbb{Z}_{\geq 0}} \big) \\
	\eeq &\frac{1}{2} \big(\iverson{x{-}1 \in \mathbb{Z}_{\geq 0}} + \iverson{x \in \mathbb{Z}_{\geq 0}} \big) \\
	\eeq &\wp{\PCHOICE{\ASSIGN{x}{x - 1}}{\NF{1}{2}}{\SKIP}}{\iverson{x \in \mathbb{Z}_{\geq 0}}} \\
	\eeq &\wp{\mathit{Com}}{\iverson{I}}~.
\end{align*}
Next, the second precondition of Theorem~\ref{t1651} is satisfied because of
\begin{align*}
	G \land I ~\iff~ x{\neq}0 \land x {\in} \mathbb{Z}_{\geq 0} ~\implies~ x{\neq}0 ~\implies~ |x|{>}0 ~\iff~ V{>}0~.
\end{align*}
Furthermore, $V$ satisfies the super-martingale property:
\begin{align*}
		\iverson{G\land I} \cdot \awp{\mathit{Com}}{V}  &\eeq \iverson{x{\neq}0 \land x {\in} \mathbb{Z}_{\geq 0}} \cdot \awp{\left( \PCHOICE{\ASSIGN{x}{x - 1}}{\NF{1}{2}}{\SKIP} \right)}{|x|} \\
		& \eeq \iverson{x \in \mathbb{Z}_{> 0}} \cdot \frac{1}{2} \cdot \big( |x - 1| + |x| \big) \\
		& \eeq \iverson{x \in \mathbb{Z}_{> 0}} \cdot  \left(|x| - \frac{1}{2} \right) \\
		& \lleq \iverson{x \in \mathbb{Z}_{> 0}} \cdot  |x| \\
		& \lleq |x| \\
		& \eeq V ~.
\end{align*}
Lastly, $V$, $p$, and $d$ satisfy the progress condition for all $R$:%
\begin{align*}
		&p(R) \cdot \iverson{G \land I \land V{=}R} \lleq \wp{\mathit{Com}}{\iverson{V \leq R - d(R)}} \\
		\Longleftrightarrow \quad &\frac{1}{2} \cdot \iverson{x {\neq} 0 \land x {\in} \mathbb{Z}_{\geq 0} \land |x|{=}R} \lleq \wp{\PCHOICE{\ASSIGN{x}{x - 1}}{\NF{1}{2}}{\SKIP}}{\iverson{|x| \leq R{-}1}} \\
		\Longleftrightarrow \quad &\frac{1}{2} \cdot \iverson{x {\in} \mathbb{Z}_{>0} \land |x|{=}R} \lleq \wp{\PCHOICE{\ASSIGN{x}{x - 1}}{\NF{1}{2}}{\SKIP}}{\iverson{|x| \leq R{-}1}} \\
		\Longleftrightarrow \quad &\frac{1}{2} \cdot \iverson{x {\in} \mathbb{Z}_{>0} \land |x|{=}R} \lleq \frac{1}{2} \cdot \big( \iverson{|x{-}1| \leq R{-}1} + \iverson{|x| \leq R{-}1} \big) \\
		\Longleftrightarrow \quad &\iverson{x {\in} \mathbb{Z}_{>0} \land |x|{=}R} \lleq \big( \iverson{|x{-}1| \leq R{-}1} + \iverson{|x| \leq R{-}1} \big) \\
		\Longleftrightarrow \quad & \iverson{x {\in} \mathbb{Z}_{>0} \land |x|{=}R} \lleq \iverson{x {\in} \mathbb{Z}_{>0} \land |x|{=}R} \cdot \big( \iverson{|x{-}1| \leq R{-}1} + \iverson{|x| \leq R{-}1} \big) \\
		\Longleftrightarrow \quad & \iverson{x {\in} \mathbb{Z}_{>0} \land |x|{=}R} \lleq \iverson{x {\in} \mathbb{Z}_{>0} \land |x|{=}R} \cdot ( 1 + 0 ) \\
		\Longleftrightarrow \quad & \iverson{x {\in} \mathbb{Z}_{>0} \land |x|{=}R} \lleq \iverson{x {\in} \mathbb{Z}_{>0} \land |x|{=}R} \\
		\Longleftrightarrow \quad & \true~.
\end{align*}
This shows that all preconditions of Theorem~\ref{t1651} are satisfied: thus we have ~$\iverson{x {\in} \mathbb{Z}_{\geq 0}}\leq \wp{\textrm{\Eqn{e1143}}}{\one}$~,
i.e.\ that the negative binomial loop terminates almost-surely from all initial states in which $x$ is a non-negative integer.

\subsection{The Demonically Fair Random Walk}\label{ss1311}

Next, we consider a while-loop that contains both probabilistic- and demonic choice.

\paragraph{Description of the loop.}
Consider the following while-loop:
\begin{align*}
	& \WHILE{x > 0} \\
	& \qquad \PCHOICE{\vphantom{\big(}\ASSIGN{x}{x - 1}}{\NF{1}{2}}{\NDCHOICE{\ASSIGN{x}{x + 1}}{\SKIP}\vphantom{\big(}} \\
	& \}
\end{align*}
In order not to clutter the reasoning below, we assume without loss of generality that for this example $x$ is of type $\Nats$.
The execution of the loop is illustrated in \autoref{fig:fairloop}. 
\begin{figure}[t]
	\begin{center}
		\let\oldarraycolsep\arraycolsep
		\arraycolsep=0pt
		\begin{tikzpicture}[every state/.append style={thick, inner sep=0pt, minimum size=1cm}]
			\draw[help lines, use as bounding box,white] (-0.5,-0.5) grid (6.5, 2);
			\node[state, accepting] (c0) at (0, 0) {$%
				\begin{array}{rc}%
					x\colon &  0
				\end{array}%
			$};
			
			\node[state] (c1) at (2, 0) {$%
				\begin{array}{rc}%
					x\colon &  1
				\end{array}%
			$};
			
			\node[state] (c2) at (4, 0) {$%
				\begin{array}{rc}%
					x\colon &  2
				\end{array}%
			$};
			
			\node[] (dots) at (6, 0) {$\cdots$};
			
			
			\node[inner sep=2pt] (nd1) at (2, 2) {$\Box$}; 
			
			\node[inner sep=2pt] (nd2) at (4, 2) {$\Box$}; 
			
			\node[inner sep=2pt] (nddots) at (6, 2) {\phantom{$\Box$}}; 

			
			\path[->,thick] (c1) edge[bend left] node[left,pos=0.7] {$\NF{1}{2}$} (nd1);
			\path[->,thick,dashed] (nd1) edge[left] node[above] {} (c2);
			\path[->,thick,dashed] (nd1) edge[bend left] node[left] {} (c1);

			\path[->,thick] (c2) edge[bend left] node[left,pos=0.7] {$\NF{1}{2}$} (nd2);
			\path[->,thick,dashed] (nd2) edge[left] node[above] {} (dots);
			\path[->,thick,dashed] (nd2) edge[bend left] node[left] {} (c2);

			
			\path[->,thick] (c1) edge[left] node[above] {$\NF{1}{2}$} (c0);
			\path[->,thick] (c2) edge[left] node[above] {$\NF{1}{2}$} (c1);
			\path[->,thick] (dots) edge[left] node[above] {$\NF{1}{2}$} (c2);
		\end{tikzpicture}
		\arraycolsep=\oldarraycolsep
	\end{center}
	\caption{Execution of the demonically fair random walk. 
	The $\Box$ nodes together with the dashed arrows represent demonic choices. 
	The value of the variant is equal to the value of $x$ in each state. 
	The values of $p$ and $d$ are constantly $\NF{1}{2}$ and $1$, respectively.}
\label{fig:fairloop}
\end{figure}
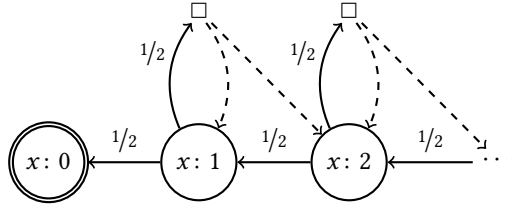

The motivation for this loop is the recursive procedure $P$ inspired by an example of~\citet{Olmedo:2016aa}; its definition is
\begin{align*}
	P~ \rhd \quad \PCHOICE{\vphantom{\big(}\SKIP}{\NF{1}{2}}{\,\mathtt{call}~P;~\NDCHOICE{\mathtt{call}~P}{\SKIP}\vphantom{\big(}}~,
\end{align*}
and we have rewritten it as a loop by viewing it as a random walk of a particle $x$ whose position represents the height of the call stack.
Intuitively, the loop keeps moving $x$ in a random and demonic fashion until the particle hits the origin $0$ (empty call stack, all procedure calls have terminated).
For that at each stage it either with probability $\NF{1}{2}$ decrements the position of $x$ by one (procedure call terminates after $\SKIP$; call stack decremented by one), or with probability $\NF{1}{2}$ it performs a demonic choice between incrementing the position of $x$ by one (perform two consecutive procedure calls, then terminate; call stack in effect incremented by one (${}+2 - 1 = {}+1$)) or letting $x$ remain at its position (perform one procedure call, then terminate; call stack in effect unchanged (${}+1 - 1 = 0$)).

\paragraph{Proof of almost-sure termination.}

The loop guard is given by\quad$G \eeq x {>} 0$\quad
and the loop body by
\begin{align*}
	\mathit{Com} \eeq \PCHOICE{\vphantom{\big(}\ASSIGN{x}{x - 1}}{\NF{1}{2}}{\NDCHOICE{\ASSIGN{x}{x + 1}}{\SKIP}\vphantom{\big(}}~.
\end{align*}

We now prove \AST\ of the loop by choosing the standard invariant $I = \true$ \footnote{Predicate $\true$ is an invariant for any loop whose body is terminating, e.g.\ is itself loop-free.} and an appropriate $p,d$ and quasi-variant $V$:
\begin{align*}
	V \eeq x, \qquad\textnormal{for } d \eeq 1 \quad\textnormal{and}\quad p \eeq 1/2~.
\end{align*}
Intuitively this choice of $V$, $p$, and $d$ tells us that the value of $x$ decreases with probability at least $\NF{1}{2}$ by at least $1$  through an iteration of the loop body in the case that initially $x{>}0$.

The second precondition of Theorem~\ref{t1651} is satisfied because\quad$G \land I ~\Longleftrightarrow~ x{>}0 ~\Longleftrightarrow~ V{>}0$~. Furthermore, $V$ satisfies the super-martingale property:
\begin{align*}
		\iverson{G \land I} \cdot \awp{\mathit{Com}}{V}  &\eeq \iverson{x{>}0} \cdot \awp{\PCHOICE{\ASSIGN{x}{x - 1}}{\NF{1}{2}}{\NDCHOICE{\ASSIGN{x}{x + 1}}{\SKIP}}}{x} \\
		& \eeq \iverson{x{>}0}\cdot \frac{1}{2} \cdot \left( x - 1 + \Max{x + 1}{x} \right) \\
		& \eeq \iverson{x{>}0}\cdot \frac{1}{2} \cdot \left( x - 1 + x + 1\right) \\
		& \eeq \iverson{x{>}0}\cdot x \\
		& \lleq x \\
		& \eeq V~.
\end{align*}
Lastly, $V$, $p$, and $d$ satisfy the progress condition for all $R$:
\begin{align*}
		&p(R) \cdot \iverson{G \land I \land V{=}R} \lleq \wp{\mathit{Com}}{\iverson{V \leq R - d(R)}} \\
			\Longleftrightarrow \quad &\frac{1}{2} \cdot \iverson{x{>}0 \land \True \land x{=}R} \lleq \wp{\PCHOICE{\vphantom{\big(}\ASSIGN{x}{x - 1}}{\NF{1}{2}}{\NDCHOICE{\ASSIGN{x}{x + 1}}{\SKIP}\vphantom{\big(}}}{\iverson{x \leq R{-}1}} \\
		\Longleftrightarrow \quad &\frac{1}{2} \cdot \iverson{x{>}0\land x{=}R} \lleq \frac{1}{2} \cdot \big( \iverson{x{-}1 \leq R{-}1} + \Max{\iverson{x{+} 1\leq R{-}1}}{\iverson{x\leq R{-}1}}\big) \\
		\Longleftrightarrow \quad &\iverson{x{>}0\land x{=}R} \lleq \iverson{x \leq R} + \iverson{x\leq R{-}1}\\
		\Longleftrightarrow \quad &\iverson{x{>}0\land x{=}R} \lleq \iverson{x \leq x} + \iverson{x\leq x{-}1}\\
		\Longleftrightarrow \quad &\iverson{x{>}0\land x{=}R} \lleq 1 + 0\\
		\Longleftrightarrow \quad & \true~.
\end{align*}%
This shows that all preconditions of Theorem~\ref{t1651} are satisfied and as a consequence the demonic random walk loop above terminates almost-surely.
Interestingly, the procedure $P'$ given by
\begin{align*}
	P'~ \rhd \quad \PCHOICE{\SKIP}{\NF{1}{2}}{\mathtt{call}~P';
	~\mathtt{call}~P';~\NDCHOICE{\mathtt{call}~P'}{\SKIP}}~,
\end{align*}%
i.e.\ potentially three consecutive procedure calls instead of two~\citep{Olmedo:2016aa}, is not \AST: it terminates with probability only $\NF{(\sqrt{5} - 1)}{2} < 1$.

\subsection{The Fair-in-the-Limit Random Walk}\label{s0827}

While so far we have considered constant probabilities and constant decreases, we now consider a while-loop requiring use of a non-constant decrease function $d$.

\paragraph{Description of the loop.}

Consider the following while-loop:
\begin{align*}
	& \WHILE{x > 0} \\
	& \qquad \ASSIGN{q}{\NF{x}{2x{+}1}};\\
	& \qquad \PCHOICE{\ASSIGN{x}{x - 1}}{q}{\ASSIGN{x}{x + 1}} \\
	& \}
\end{align*}
Assume again that $x{\in}\Nats$.
The execution of the loop is illustrated in \autoref{fig:fair-in-the-limit-loop}.
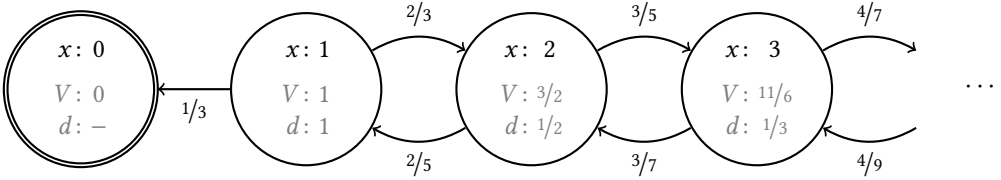
\begin{figure}[t]
	\begin{center}
		\let\oldarraycolsep\arraycolsep
		\arraycolsep=0pt
		\begin{tikzpicture}[every state/.append style={thick, inner sep=0pt, minimum size=2cm}]
			\draw[help lines, use as bounding box, white] (-1,-1.25) grid (12, 1.25);
			\node[state, accepting] (c0) at (0, 0) {$%
				\begin{array}{rc}%
					x\colon &  0\\[.5em]
					\textcolor{gray}{V\colon} & \textcolor{gray}{ 0}\\
					\textcolor{gray}{d\colon} & \textcolor{gray}{{-}}
				\end{array}%
			$};
			
			\node[state] (c1) at (3, 0) {$%
				\begin{array}{rc}%
					x\colon &  1\\[.5em]
					\textcolor{gray}{V\colon} & \textcolor{gray}{ 1}\\
					\textcolor{gray}{d\colon} & \textcolor{gray}{ {1}}
				\end{array}%
			$};
			
			\node[state] (c2) at (6, 0) {$%
				\begin{array}{rc}%
					x\colon &  2\\[.5em]
					\textcolor{gray}{V\colon} & \textcolor{gray}{ \NF{3}{2}}\\
					\textcolor{gray}{d\colon} & \textcolor{gray}{ \NF{1}{2}}
				\end{array}%
			$};
			
			\node[state] (c3) at (9, 0) {$%
				\begin{array}{rc}%
					x\colon &  3\\[.5em]
					\textcolor{gray}{V\colon} & \textcolor{gray}{ \NF{11}{6}}\\
					\textcolor{gray}{d\colon} & \textcolor{gray}{ \NF{1}{3}}
				\end{array}%
			$};
			
			\node[state, draw=white] (dots) at (12, 0) {$\cdots$};

			
			\path[->,thick] (c1) edge[left] node[below] {$\NF{1}{3}$} (c0);
			\path[->,thick] (c2) edge[bend left] node[below] {$\NF{2}{5}$} (c1);
			\path[->,thick] (c3) edge[bend left] node[below] {$\NF{3}{7}$} (c2);
			\path[->,thick] (dots) edge[bend left] node[below] {$\NF{4}{9}$} (c3);

			\path[->,thick] (c1) edge[bend left] node[above] {$\NF{2}{3}$} (c2);
			\path[->,thick] (c2) edge[bend left] node[above] {$\NF{3}{5}$} (c3);
			\path[->,thick] (c3) edge[bend left] node[above] {$\NF{4}{7}$} (dots);
		\end{tikzpicture}
		\arraycolsep=\oldarraycolsep
	\end{center}
	\caption{Execution of the fair-in-the-limit random walk. 
	Inside the nodes we give the valuations of variable $x$ as well as the values of the variant $V$ and the decrease function $d$. 
	The value of $p$ is constantly $\NF{1}{3}$. Note that in \Thm{t1928} it does not matter what $d$'s value is when $V{=}0$, because the \LHS\ of \Itm{i1651-3} is zero in that case. \Ct{Check.}}
	\label{fig:fair-in-the-limit-loop}
\end{figure}

Intuitively, the loop models an asymmetric random walk of a particle $x$,
terminating when the particle hits the origin $0$.
In one iteration of the loop body, the program either with probability $\NF{x}{2x{+}1}$ decrements the position of $x$ by one, or with probability $\NF{x{+}1}{2x{+}1}$ increments the position of $x$ by one.
The further the particle $x$ is away from 0, the more fair becomes the random walk since $\NF{x}{2x{+}1}$ approaches $\NF{1}{2}$ asymptotically.
Yet, it is not so obvious that this random walk indeed also terminates with probability 1.

\paragraph{Proof of almost-sure termination.}
The loop guard is given by\quad$G \eeq x {>} 0$\quad
and the loop body by
\begin{align*}
	\mathit{Com} \eeq \COMPOSE{\ASSIGN{q}{\NF{x}{2x{+}1}}}{\PCHOICE{\ASSIGN{x}{x - 1}}{q}{\ASSIGN{x}{x + 1}}}~.
\end{align*}
We now prove almost-sure termination of the loop by choosing standard invariant $I = \true$ and an appropriate $p,d$ quasi-variant $V$:
\begin{align*}
	V \eeq H_x, \qquad\textnormal{for } d(v) \eeq \begin{cases}
		\frac{1}{x}, & \text{if } v > 0 \text{ and } v \in (H_{x-1},\, H_x] \\
		1, & \text{if } v = 0
	\end{cases}
	\quad\textnormal{and}\quad p(v) \eeq \frac{1}{3}~,
\end{align*}
where $H_x$ is the $x$-th harmonic number.\footnote{$H_x = \sum_{n = 1}^{x} \frac{1}{n}$. Notice that $H_0 = 0$.}
Notice that the variant $V$ is non-affine here, i.e.\ not of the form $a + bx + cq$, and we will show below that no affine variant can satisfy a super-martingale property.
Intuitively our choice of $p$ and $d$ tells us that the variant $V$, i.e.\ the harmonic number of the value of $x$, decreases with probability at least $\NF{1}{3}$ by at least $\frac{1}{x}$  through an iteration of the loop body in case that initially $x > 0$.

The second precondition of Theorem~\ref{t1651} is satisfied because
\begin{align*}
	G \land I ~\Longleftrightarrow~ x{>}0 ~\Longleftrightarrow~ H_x{>}0 ~\Longleftrightarrow~ V{>}0~.
\end{align*}
Furthermore, $V$ satisfies the super-martingale property:
\begin{align*}
		\iverson{G} \cdot \awp{\mathit{Com}}{V}  &\eeq \iverson{x{>}0} \cdot \awp{\COMPOSE{\ASSIGN{q}{\NF{x}{2x{+}1}}}{\PCHOICE{\ASSIGN{x}{x - 1}}{q}{\ASSIGN{x}{x + 1}}}}{H_x} \\
		&\eeq \iverson{x{>}0} \cdot \awp{\ASSIGN{q}{\NF{x}{2x{+}1}}}{\left( q \cdot H_{x - 1} + (1{-}q) \cdot H_{x + 1} \right)} \\
		&\eeq \iverson{x{>}0} \cdot \left(\frac{x}{2x{+}1} \cdot H_{x-1} + \left(1-\frac{x}{2x{+}1}\right) \cdot H_{x + 1}\right)\\
		&\eeq \iverson{x{>}0} \cdot \left(\frac{x}{2x{+}1} \cdot \left(H_x  - \frac{1}{x} \right) + \left(\frac{x{+}1}{2x{+}1}\right) \cdot \left( H_x + \frac{1}{x{+}1}\right)\right)\\
		&\eeq \iverson{x{>}0} \cdot \left(\left(\frac{x}{2x{+}1} + \frac{x{+}1}{2x{+}1}\right) \cdot H_x  - \frac{1}{2x{+}1} + \frac{1}{2x{+}1} \right)\\
		&\eeq \iverson{x{>}0} \cdot H_x\\
		& \lleq H_x \\
		& \eeq V~.
\end{align*}
Lastly, $V$, $p$, and $d$ satisfy the progress condition for all $R$. Notice that $d(H_\C{x}) = \NF{1}{\C{x}}$ and consider the following:
\begin{align*}
		&p(R) \cdot \iverson{G \land I \land V{=}R} \lleq  \wp{\mathit{Com}}{\iverson{V \leq R - d(R)}} \\
		\Longleftrightarrow \quad & \frac{1}{3} \cdot \iverson{x{>}0 \land H_x{=}R} \lleq \wp{\COMPOSE{\ASSIGN{q}{\NF{x}{2x{+}1}}}{\PCHOICE{\ASSIGN{x}{x - 1}}{q}{\ASSIGN{x}{x + 1}}}}{\iverson{H_x \leq R - d(R)}} \\
		\Longleftrightarrow \quad &\frac{1}{3} \cdot \iverson{x{>}0 \land H_x{=}R} \lleq \wp{\ASSIGN{q}{\NF{x}{2x{+}1}}}{\left(q \cdot \iverson{H_{x-1} \leq R - d(R)}+ (1{-}q) \cdot \iverson{H_{x + 1} \leq R - d(R)}\right)} \\
		\Longleftrightarrow \quad &\frac{1}{3} \cdot \iverson{x{>}0 \land H_x{=}R} \lleq \frac{x}{2x{+}1} \cdot \iverson{H_{x - 1} \leq R - d(R)}
		+ \left(1 - \frac{x}{2x{+}1}\right) \cdot \iverson{H_{x + 1} \leq R - d(R)}\\
		\Longleftrightarrow \quad &\frac{1}{3} \cdot \iverson{x{>}0 \land H_x{=}R} \lleq \frac{x}{2x{+}1} \cdot \iverson{H_{x - 1} \leq R - d(R)}
		+ \left(\frac{x{+}1}{2x{+}1}\right) \cdot \iverson{H_{x{+}1} \leq R - d(R)}\\
		\Longleftrightarrow \quad &\frac{1}{3} \cdot \iverson{x{>}0 \land H_x{=}R} \lleq \frac{x}{2x{+}1} \cdot \iverson{H_{x - 1} \leq H_x - \frac{1}{x}}
		+ \left(\frac{x{+}1}{2x{+}1}\right) \cdot \iverson{H_{x + 1} \leq H_x - \frac{1}{x}}\\
		\Longleftrightarrow \quad &\iverson{x{>}0} \cdot \frac{1}{3} \lleq \left(\frac{x}{2x{+}1} \cdot 1 + \frac{x{+}1}{2x{+}1} \cdot 0\right) \\
		\Longleftrightarrow \quad &\iverson{x{>}0} \cdot \frac{1}{3} \lleq \frac{x}{2x{+}1} \\
		\Longleftrightarrow \quad &\true~.
\end{align*}
This shows that all preconditions of Theorem~\ref{t1651} are satisfied and as a consequence the fair-in-the-limit random walk terminates almost-surely.

\paragraph{Proof of non-existence of an affine variant.}

For this program, there exists \emph{no affine variant} that satisfies the super-martingale property as used e.g.\ by \citet{Chatterjee:2017aa}. Any affine
\footnote{Some authors call this a \emph{linear} variant.}
variant $V$ would have to be of the form
\begin{align*}
	V \eeq a + bx + cq~,
\end{align*}
for some (positive) coefficients $a$, $b$, $c$.
\footnote{Coefficients need to be positive because otherwise $V \geq 0$ cannot be ensured. However, this is not crucial in this proof.}
Now we attempt to check the super-martingale property for a variant of that form:
\begin{align*}
		&\iverson{G} \cdot \awp{\mathit{Com}}{V}  \\
		&\eeq \iverson{x{>}0} \cdot \awp{\COMPOSE{\ASSIGN{q}{\NF{x}{2x{+}1}}}{\PCHOICE{\ASSIGN{x}{x - 1}}{q}{\ASSIGN{x}{x + 1}}}}{(a + bx + cq)} \\
		&\eeq \iverson{x{>}0} \cdot \awp{\ASSIGN{q}{\NF{x}{2x{+}1}}}{\left( q \cdot (a + b(x{-}1) + cq) + (1{-}q) \cdot (a + b(x{+}1) + cq) \right)} \\
		&\eeq \iverson{x{>}0} \cdot \awp{\ASSIGN{q}{\NF{x}{2x{+}1}}}{\left( a - 2bq + bx + b + cq \right)} \\
		&\eeq \iverson{x{>}0} \cdot \left(a - 2b\cdot \frac{x}{2x + 1} + bx + b + c \cdot \frac{x}{2x + 1}\right)\\
		& ~{}\stackrel{!}{\leq}{}~ a + bx + cq\\
		& \eeq V~.
\end{align*}

If $x \leq 0$ this is trivially satisfied. 
If $x{>}0$, then the above is satisfied iff
\begin{align*}
	&a - 2b\cdot \frac{x}{2x + 1} + bx + b + c \cdot \frac{x}{2x + 1} \lleq a + bx + cq \\
	\Longleftrightarrow \quad &-2b\cdot \frac{x}{2x + 1} + b + c \cdot \frac{x}{2x + 1} \lleq cq~,
\end{align*}
which is only satisfiable for all possible valuations of $q$ and $x{>}0$ iff $b = c = 0$.
Thus if $V$ is forced to be affine, then $V$ has to be constantly $a$, for $a \geq 0$.
Indeed, $a$ is a super-martingale.
However, it is clear that a constant $V$ cannot possibly indicate termination as
\begin{align*}
	\iverson{V = 0} \eeq 1 \nneq \iverson{x \leq 0} \eeq \iverson{\neg G}~.
\end{align*}
Thus, there cannot exist an affine variant that satisfies the super-martingale property.

\subsection{The Escaping Spline}\label{s0838}
We now consider a while-loop where we we will make use of both non-constant probability function $p$ and non-constant decrease function $d$.

\paragraph{Description of the loop.}

Consider the following while-loop:
\begin{align*}
	& \WHILE{x > 0} \\
	& \qquad \ASSIGN{q}{\NF{1}{x+1}};\\
	& \qquad \PCHOICE{\ASSIGN{x}{0}}{q}{\ASSIGN{x}{x + 1}} \\
	& \}
\end{align*}
Assume again that $x{\in}\Nats$.
The execution of the loop is illustrated in \autoref{fig:escaping-spline-loop}.
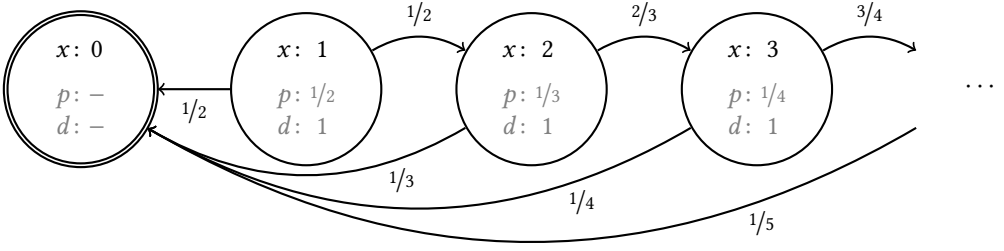
\begin{figure}[t]
	\begin{center}
		\let\oldarraycolsep\arraycolsep
		\arraycolsep=0pt
		\begin{tikzpicture}[every state/.append style={thick, inner sep=0pt, minimum size=2cm}]
			\draw[help lines, use as bounding box,white] (-1,-2) grid (12, 1.25);
			\node[state, accepting] (c0) at (0, 0) {$%
				\begin{array}{rc}%
					x\colon &  0\\[.5em]
					\textcolor{gray}{p\colon} & \textcolor{gray}{{-}}\\
					\textcolor{gray}{d\colon} & \textcolor{gray}{{-}}
				\end{array}%
			$};
			
			\node[state] (c1) at (3, 0) {$%
				\begin{array}{rc}%
					x\colon &  1\\[.5em]
					\textcolor{gray}{p\colon} & \textcolor{gray}{ \NF{1}{2}}\\
					\textcolor{gray}{d\colon} & \textcolor{gray}{ 1}
				\end{array}%
			$};
			
			\node[state] (c2) at (6, 0) {$%
				\begin{array}{rc}%
					x\colon &  2\\[.5em]
					\textcolor{gray}{p\colon} & \textcolor{gray}{ \NF{1}{3}}\\
					\textcolor{gray}{d\colon} & \textcolor{gray}{ 1}
				\end{array}%
			$};
			
			\node[state] (c3) at (9, 0) {$%
				\begin{array}{rc}%
					x\colon &  3\\[.5em]
					\textcolor{gray}{p\colon} & \textcolor{gray}{ \NF{1}{4}}\\
					\textcolor{gray}{d\colon} & \textcolor{gray}{ 1}
				\end{array}%
			$};
			
			\node[state, draw=white] (dots) at (12, 0) {$\cdots$};

			
			\path[->,thick] (c1) edge[left] node[below] {$\NF{1}{2}$} (c0);
			\path[->,thick] (c2) edge[bend left] node[below, pos=0.2] {$\NF{1}{3}$} (c0);
			\path[->,thick] (c3) edge[bend left] node[below, pos=0.2] {$\NF{1}{4}$} (c0);
			\path[->,thick] (dots) edge[bend left] node[below, pos=0.2] {$\NF{1}{5}$} (c0);

			\path[->,thick] (c1) edge[bend left] node[above] {$\NF{1}{2}$} (c2);
			\path[->,thick] (c2) edge[bend left] node[above] {$\NF{2}{3}$} (c3);
			\path[->,thick] (c3) edge[bend left] node[above] {$\NF{3}{4}$} (dots);
		\end{tikzpicture}
		\arraycolsep=\oldarraycolsep
	\end{center}
	\caption{Execution of the escaping spline loop. 
	The value of the variant is equal to the value of the variable $x$ in each state. 
	Inside the nodes we give the valuations of variable $x$ as well as the values of the probability function $p$ and the decrease function $d$ in each state. Note that in \Thm{t1928} it does not matter what $d,p$'s values are when $V{=}0$, because the \LHS\ of \Itm{i1651-3} is zero in that case.
	}
\label{fig:escaping-spline-loop}
\end{figure}

Intuitively, the loop models a random walk of a particle $x$ that terminates when the particle hits the origin $0$.
The random walk either with probability $\NF{1}{x + 1}$ immediately terminates or with probability $\NF{x}{x + 1}$ increments the position of $x$ by one.
This means that for each iteration where the loop does not terminate, it is even \emph{more likely not to terminate in the next iteration}.
Thus, the longer the loop runs, the less likely it will terminate since the probability to continue looping approaches 1 asymptotically.
Yet this loop terminates almost-surely, as we will now prove.

\paragraph{Proof of almost-sure termination.}
The loop guard is given by\quad$G \eeq x {>} 0$\quad
and the loop body by
\begin{align*}
	C \eeq \COMPOSE{\ASSIGN{q}{\NF{1}{x+1}}}{\PCHOICE{\ASSIGN{x}{0}}{q}{\ASSIGN{x}{x + 1}}}~.
\end{align*}%
We now prove almost-sure termination of the loop by choosing the standard invariant $I = \true$ and an appropriate $p,d$ and quasi-variant $V$:
\begin{align*}
	V \eeq x, \qquad\textnormal{for } d(v) \eeq 1 \quad\textnormal{and}\quad p(v) \eeq \frac{1}{v+1}~.
\end{align*}
Intuitively this tells us that the variant $V$, i.e. the value of $x$, decreases with probability at least $\NF{1}{V + 1} = \NF{1}{x + 1}$ by at least $1$  through an iteration of the loop body in case that the guard is satisfied. Now $V$ satisfies the super-martingale property:
\begin{align*}
		\iverson{G} \cdot \awp{C}{V}  &\eeq \iverson{x{>}0} \cdot \awp{\COMPOSE{\ASSIGN{q}{\NF{1}{x+1}}}{\PCHOICE{\ASSIGN{x}{0}}{q}{\ASSIGN{x}{x + 1}}}}{x} \\
		&\eeq \iverson{x{>}0} \cdot \awp{\ASSIGN{q}{\NF{1}{x+1}}}{\left( q \cdot 0 + (1-q) \cdot (x + 1) \right)} \\
		&\eeq \iverson{x{>}0} \cdot \left(1-\frac{1}{x+1}\right) \cdot (x + 1)\\
		&\eeq \iverson{x{>}0} \cdot \left(x + 1 - 1\right)\\
		& \eeq \iverson{x{>}0}\cdot x \\
		& \lleq x \\
		& \eeq V~.
\end{align*}
And $V$, $p$, and $d$ satisfy the progress condition for all $R$:
\begin{align*}
		&p(R) \cdot \iverson{G \land I \land x{=}R} \lleq \wp{C}{\iverson{V \leq R - d(R)}} \\
		\Longleftrightarrow \quad &\frac{1}{R + 1} \cdot \iverson{x{>}0 \land x{=}R} \lleq  \wp{\COMPOSE{\ASSIGN{q}{\NF{1}{x+1}}}{\PCHOICE{\ASSIGN{x}{0}}{q}{\ASSIGN{x}{x + 1}}}}{\iverson{x \leq R{-}1}} \\
		%
		\Longleftrightarrow \quad &\frac{1}{R + 1} \cdot \iverson{x{>}0 \land x{=}R} \lleq  \wp{\ASSIGN{q}{\NF{1}{x+1}}}{\left(q \cdot \iverson{0 \leq R - 1} + (1{-}q) \cdot \iverson{x{+}1 \leq R - 1}\right)} \\
		%
		\Longleftrightarrow \quad &\frac{1}{R + 1} \cdot \iverson{x{>}0 \land x{=}R} \lleq  \NF{1}{x+1} \cdot \iverson{0 \leq R - 1} + \NF{x}{x+1} \cdot \iverson{x{+}1 \leq R - 1} \\
		\Longleftrightarrow \quad &\frac{1}{R + 1} \cdot \iverson{R{>}0 \land x{=}R} \lleq  \NF{1}{R+1} \cdot \iverson{0 \leq R - 1 \land x{=}R} + \NF{R}{R+1} \cdot \iverson{R{+}1 \leq R - 1 \land x{=}R} \\
		\Longleftrightarrow \quad &\frac{1}{R + 1} \cdot \iverson{R{>}0 \land x{=}R} \lleq  \NF{1}{R+1} \cdot \iverson{0 \leq R - 1 \land x{=}R} \\
		\Longleftarrow{} \quad &x \in \Nats~. \tag{true by assumption}\\
\end{align*}
This shows that all preconditions of Theorem~\ref{t1651} are satisfied and as a consequence the escaping spline loop terminates almost-surely.

In fact in retrospect \AST\ for this loop is not so surprising after all: by inspection, the probability associated with the sole diverging path from say $x{=}1$ is $\NF{1}{2}\cdot\NF{2}{3}\cdots = 0$. It is interesting however that this criterion applies in general: if the probability of going up from $x$ is $p_x$, then the variant $V(x)= \NF{1}{p_1p_2\cdots p_{x-1}}$ is a martingale by construction. And if $p_1p_2\cdots >0$, i.e.\ the probability of divergence is non-zero, then this variant is bounded and, for reasons discussed below at \Cor{c1026}, it therefore acts as a certificate for \emph{non-termination}.  Moreover, as illustrated in \Sec{s1643}, indeed our \Thm{t1928} does not apply when $p_1p_2\cdots >0$ since then there is no everywhere positive but antitone $p()$.
\footnote{~If $\NF{1}{p_1p_2\cdots} =K<\infty$ then necessarily the escape probabilities $1{-}p(v)$ tend to zero as $V(x){=}v$ tends to $K$, and so $p(v)$ for any $v{>}K$ must actually \emph{be} zero --- which is not allowed, even if the process never reaches $x$ with $V(x){>}K$.}
If however $p_1p_2\cdots =0$, i.e.\ the probability of divergence is zero, then the construction $V(x)= \NF{1}{p_1p_2\cdots p_{x-1}}$ works (because the variant is unbounded) --- a (limited) completeness property.

{\Fx
\subsection{The Lazy Loper}\label{s1117}

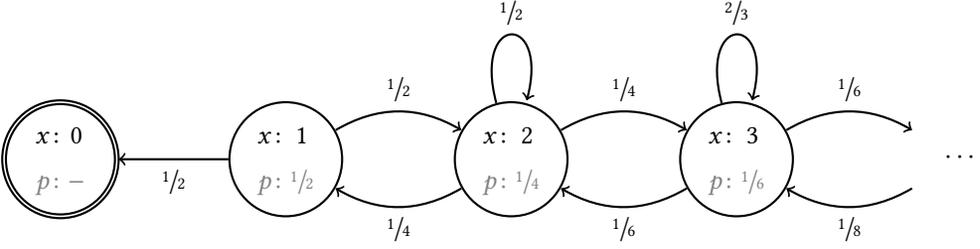
\begin{figure}[t]
	\begin{center}
		\let\oldarraycolsep\arraycolsep
		\arraycolsep=0pt
		\begin{tikzpicture}[every state/.append style={thick, inner sep=0pt, minimum size=1.5cm}]
			\draw[help lines, use as bounding box,white] (-0.75,-1) grid (12, 2.25);
			\node[state, accepting] (c0) at (0, 0) {$%
				\begin{array}{rc}%
					x\colon &  0\\[.5em]
					\textcolor{gray}{p\colon} & \textcolor{gray}{{-}}
				\end{array}%
			$};
			
			\node[state] (c1) at (3, 0) {$%
				\begin{array}{rc}%
					x\colon &  1\\[.5em]
					\textcolor{gray}{p\colon} & \textcolor{gray}{ \sfrac{1}{2}}
				\end{array}%
			$};
			
			\node[state] (c2) at (6, 0) {$%
				\begin{array}{rc}%
					x\colon &  2\\[.5em]
					\textcolor{gray}{p\colon} & \textcolor{gray}{ \sfrac{1}{4}}
				\end{array}%
			$};
			
			\node[state] (c3) at (9, 0) {$%
				\begin{array}{rc}%
					x\colon &  3\\[.5em]
					\textcolor{gray}{p\colon} & \textcolor{gray}{ \sfrac{1}{6}}
				\end{array}%
			$};
			
			\node[state, draw=white] (dots) at (12, 0) {$\cdots$};

			
			\path[->,thick] (c1) edge[left] node[below] {$\sfrac{1}{2}$} (c0);
			\path[->,thick] (c2) edge[bend left] node[below] {$\sfrac{1}{4}$} (c1);
			\path[->,thick] (c3) edge[bend left] node[below] {$\sfrac{1}{6}$} (c2);
			\path[->,thick] (dots) edge[bend left] node[below] {$\sfrac{1}{8}$} (c3);

			\path[->,thick] (c1) edge[bend left] node[above] {$\sfrac{1}{2}$} (c2);
			\path[->,thick] (c2) edge[bend left] node[above] {$\sfrac{1}{4}$} (c3);
			\path[->,thick] (c3) edge[bend left] node[above] {$\sfrac{1}{6}$} (dots);

			\path[->,thick] (c2) edge[loop above] node[above] {$\sfrac{1}{2}$} (c2);
			\path[->,thick] (c3) edge[loop above] node[above] {$\sfrac{2}{3}$} (c3);
		\end{tikzpicture}
		\arraycolsep=\oldarraycolsep
	\end{center}
	\caption{Transition system for the Lazy Loper program. 
	Inside the nodes we give the valuations of variable $x$ as well as the value of the probability function $p$. 
	The value of the variant is equal to the value of variable $x$ in each state.
	The value of the decrease function $d$ is constantly $1$.}
	\label{fig:1117}
\end{figure}

The Lazy Loper is a random walker that ``dawdles'' at $x$ before finally moving either up to $x{+}1$ or down to $x{-}1$. The code is
\begin{align*}
	\WHILEDO{x > 0}{\quad\PCHOICE{
	  \PCHOICE{\ASSIGN{x}{x{+}1}}{\sfrac{1}{2}}{\ASSIGN{x}{x{-}1}}}{1/x}{\SKIP}\quad},
\end{align*}
and it  corresponds to the transition system in \Fig{fig:1117} where a walker flips a \emph{biased} coin so that the larger the (integer) value of $x$, the more likely it is that the state remains unchanged (i.e.\ by selecting the $\SKIP$ branch). When however $x$ is (eventually) updated, as it \AS\ must be, it is either incremented or decremented with the choice between the two options determined fairly just as in the ordinary 1dSRW.

Informally we can see that the loop terminates almost surely, since at any value of $x$ it is guaranteed eventually to select the left-hand branch of the outer probabilistic choice;  then the overall ``movement behaviour''  becomes that of an unbiased random walker, albeit one who remains in the same position for longer and longer periods the greater the distance from $0$ .  

Formally, we can prove termination using \Thm{t1928}:  we take $V(x)= x$ for the super-martingale,  and $p(v)= (1/2v \min 1)$ and $d(v) = 1$ for  $p, d$ progress. It is clear that the super-martingale is reduced by $1$ with probability $p(v)$. 

Observe also that the average \emph{absolute move} of $V$ on each step is $\NF{1}{x} \cdot 1 + \NF{x{-}1}{x}\cdot 0 = \NF{1}{x}$ which approaches $0$ as $x$ approaches infinity; that seems to put this choice of variant beyond the reach of \citeauthor{Chatterjee:2017ab}'s Thm.~5 \citeyear {Chatterjee:2017ab}, as we remark in \Sec{s1521}.

But we can argue further that \emph{no} variant in the style of \cite{Chatterjee:2017ab} suffices for their Thm.~5 in this case. That is, if $V(x)$ is \emph{any} non-negative super-martingale for the Lazy Loper, its average absolute move for each iteration must also become arbitrarily small as $x$ becomes arbitrarily large.  We reason as follows.

\begin{enumerate}
\item Note first that the super-martingale property implies that, for all $x\geq 1$, we have
 \[
 V(x{+}1) + V(x{-}1) \Wide{\leq} 2V(x)~ .
 \]

\item From (1) we see that \emph{either}
\begin{enumerate}
\item $0 \leq |V(x{+}1) - V(x)| \leq |V(x) - V(x{-}1)|$ for all $x$,~ \emph{or}
\item there is some $N{>}0$ such that $V(x{+}1) \leq V(x)$ for all $x{\geq}N$.
\end{enumerate}
\end{enumerate}

We note that (2)(a) follows if  $V(x{+}1) \geq V(x)$ for all $x$. However if \emph{ever} $V(x{+}1) \leq V(x)$ then so too must  $V(x{+}2) \leq V(x{+}1)$, from which (2)(b) follows.  

To see that $V(x{+}1) \leq V(x)$ implies $V(x{+}2) \leq V(x{+}1)$, we reason as follows:

\medskip
\begin{align*}
	& { V(x{+}2) + V(x) \leq 2V(x{+}1)} \tag*{(1): $V$ is a super-martingale} \\
	\Longrightarrow\quad & V(x{+}2) + V(x) \leq 2V(x) \tag*{assumption $V(x{+}1) \leq V(x)$} \\
	\Longleftrightarrow\quad & V(x{+}2)\leq V(x) \tag*{arithmetic} \\
	\Longleftrightarrow\quad & 2V(x{+}2)\leq V(x{+}2) + V(x) \tag*{arithmetic} \\
	\Longrightarrow\quad & 2V(x{+}2)\leq 2V(x{+}1) \tag*{(1): $V$ is a super-martingale} \\
	\Longleftrightarrow\quad & V(x{+}2)\leq V(x{+}1)\quad . \tag*{arithmetic} 
\end{align*}


\bigskip

Finally we can see that (a) and (b) together imply that the expected average move of the super-martingale $V$ for each step of the Lazy Loper is bounded above by $\NF{1}{x}\cdot A$ for some (possibly large) constant $A{>}0$ and therefore, as $x$ approaches infinity, the average absolute move must approach zero, as required to exclude \cite{Chatterjee:2017ab}'s Thm.~5. 
}

{\Fx
\subsection{The \emph{\underline{Ver}y} Lazy Loper: Nested Loops, Program Algebra and Lexicographic Variants}\label{s0836}

\begin{figure}[t]
	\begin{center}
		\let\oldarraycolsep\arraycolsep
		\arraycolsep=0pt
		\begin{tikzpicture}[every state/.append style={thick, inner sep=0pt, minimum size=1.5cm}]
			\draw[help lines, use as bounding box,white] (-0.75,-2) grid (12, 2.25);
			\node[state, accepting] (c0) at (0, 0) {$%
				\begin{array}{c}%
					x\colon  0\\[.5em]
					\textcolor{gray}{n\kern.1em{\geq}1}
				\end{array}%
			$};
			
			\node[state,initial below, initial text=\mbox{$x=1 \land n\geq1$}] (c1) at (3, 0) {$%
				\begin{array}{c}%
					x\colon  1\\[.5em]
					\textcolor{gray}{n\kern.1em{\geq}1}
				\end{array}%
			$};
			
			\node[state] (c2) at (6, 0) {$%
				\begin{array}{c}%
					x\colon  2\\[.5em]
					\textcolor{gray}{n\kern.1em{\geq}1}
				\end{array}%
			$};
			
			\node[state] (c3) at (9, 0) {$%
				\begin{array}{c}%
					x\colon  3\\[.5em]
					\textcolor{gray}{n\kern.1em{\geq}1}
				\end{array}%
			$};
			
			\node[state, draw=white] (dots) at (12, 0) {$\cdots$};

			
			\path[->,thick] (c1) edge[left] node[below] {$\sfrac{1}{2n}$} (c0);
			\path[->,thick] (c2) edge[bend left] node[below] {$\sfrac{1}{2n}$} (c1);
			\path[->,thick] (c3) edge[bend left] node[below] {$\sfrac{1}{2n}$} (c2);
			\path[->,thick] (dots) edge[bend left] node[below] {$\sfrac{1}{2n}$} (c3);

			\path[->,thick] (c1) edge[bend left] node[above] {$\sfrac{1}{2n}$} (c2);
			\path[->,thick] (c2) edge[bend left] node[above] {$\sfrac{1}{2n}$} (c3);
			\path[->,thick] (c3) edge[bend left] node[above] {$\sfrac{1}{2n}$} (dots);

			\path[->,thick] (c1) edge[loop above] node[above] {$\begin{array}{c}1{-}\NF{1}{n}\\\mbox{\small$\ASSIGN{n}{n{+}1}$}\end{array}$} (c1);
			\path[->,thick] (c2) edge[loop above] node[above] {$\begin{array}{c}1{-}\NF{1}{n}\\\mbox{\small$\ASSIGN{n}{n{+}1}$}\end{array}$} (c2);
			\path[->,thick] (c3) edge[loop above] node[above] {$\begin{array}{c}1{-}\NF{1}{n}\\\mbox{\small$\ASSIGN{n}{n{+}1}$}\end{array}$} (c3);
		\end{tikzpicture}
		\arraycolsep=\oldarraycolsep
	\end{center}
	\caption{Transition system for the Very Lazy Loper program. 
	As in \Fig{fig:1117}, inside the nodes we give the valuation of variable $x$; but here the value of $n$ depends on how many self-loop steps have been taken, not on the walker's position. In spite of that, \AST\ is still guaranteed for any initial $n{\geq}1$.}
	\label{f0936}
\end{figure}
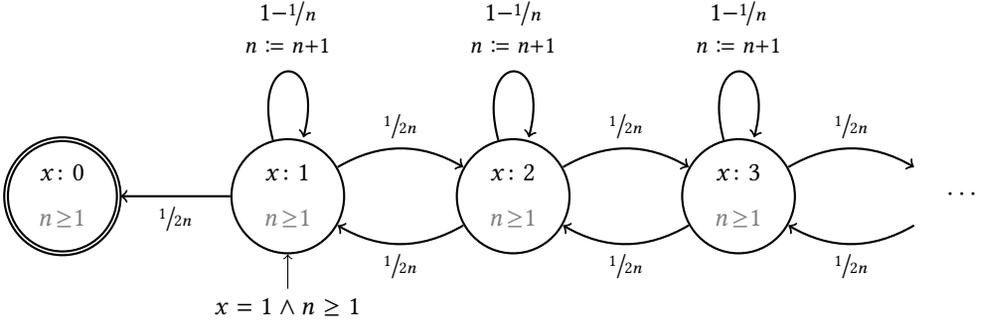

The Very Lazy Loper, like the Lazy Loper, increases/decreases $x$ only after dawdling possibly for some time at $x$'s current value.
In the ``very'' case, however, the dawdling time remorselessly increases, independently of $x$. This is the code of the Very Lazy Loper:
\begin{Equation}\label{e1801}
	\begin{array}{l}
		\{n\geq1\} \\
		\ASSIGN{x}{1} \\
		\WHILE{x{\neq}0} \\
		\quad	\PCHOICE{
					\PCHOICE{\ASSIGN{x}{x{-}1}
					}{\NF{1}{2}
					}{\ASSIGN{x}{x{+}1}
					}
				}{\NF{1}{n}
				}{\ASSIGN{n}{n{+}1}
				}\\
		\}\quad.
	\end{array}
\end{Equation}
It differs from our earlier (moderately) Lazy Loper of \Fig{fig:1117} in that for the loitering probability we use $(\PC{\NF{1}{n}})$ rather than $(\PC{\NF{1}{x}})$.


In the style of \Fig{fig:1117}, the Very Lazy Loper's transitions would be as in \Fig{f0936}, where the differences (and similarities) are clear: the transition probabilities now depend on a variable $n$, not on the position $x$ of the loper; and the self-loops update a counter $n$.
Working directly from the source code of \Eqn{e1801} however gives us \Fig{f0812}. 
Note however that we won't  use either of those figures in our formal reasoning; they are only for intuition. Instead we work from the text of Program \Eqn{e1801}, i.e.\ from the source code directly, 
and show that it is equivalent to this program, in which one loop is nested within another:
\begin{figure}[t]
	\newcommand\Scale {0.8}
	\begin{tikzpicture}[every state/.append style={thick, inner sep=.2em, minimum size=1.5cm}, every node/.style={scale=\Scale}, scale=\Scale]
		\draw[use as bounding box, white] (0,2) grid (12,9);
		\node[state, initial above, initial text=\mbox{$x=1 \land n\geq1$}] at (5,7) (n1) {\raisebox{.3em}{$\mathtt{if}~x{=}0$}};
		\node[state,accepting] at (3,5) (n2) {$x=0$};
		\node[state] at (7,5)  (n3) {\huge $\PC{\NF{1}{n}}$};
		\node[state] at (5,3) (n4) {\tiny$\begin{array}[t]{@{}c}x:=\\~x{-}1\PC{1/2}\kern.1emx{+}1~\end{array}$};
		\node[state] at (9,3) (n5) {$~\ASSIGN{n}{n{+}1}~$};

		\path[->,thick] (n1) edge [left] node [yshift = .5em] {$x{=}0$} (n2);
		\path[->,thick] (n1) edge [right] node [yshift = .5em] {$x{\neq}0$} (n3);
		\path[->,thick] (n3) edge [left] node [yshift = .5em] {$\NF{1}{n}$} (n4);
		\path[->,thick] (n3) edge [right] node [yshift = .5em] {$1-\NF{1}{n}$} (n5);
		\draw[->,thick] (n4.west) .. controls (0,3) and (0,7) ..  node [xshift = -1em] {$\beta$}(n1.west);
		\draw[->,thick] (n5.east) .. controls (13,3) and (10,7.2) .. node [xshift = 1.3em] {$\alpha$} (n1.east);
	\end{tikzpicture}
	\caption{\Fx The \emph{Very} Lazy Loper ---
		The random walk on $x$ is delayed by dawdling at any $x$ by successively more likely self-loops
		via the arc $\alpha$.
		The loitering is controlled by $n$, unrelated to $x$; and $n$ never decreases  even though $x$ might.
		An informal argument for termination is that the loitering itself terminates \AS\ (no matter how large $n$ might be),
		because the probability of incrementing $n$ forever is $\NF{N-1}{N}{\cdot}\NF{N}{N+1}{\cdot}\NF{N+1}{N+2}\cdots = 0$,
		where $N$ is the value of $n$ as the loop is entered initially or, later, re-entered from $\beta$. (A more rigorous proof for this part would
		be analogous to Escaping Spline example of \Sec{s0838}.)
		And so when the inner loop exits, as it \AS\ must, then $x$ will move one step up or down as in the normal symmetric random walk.
	}

	\label{f0812}
\end{figure}
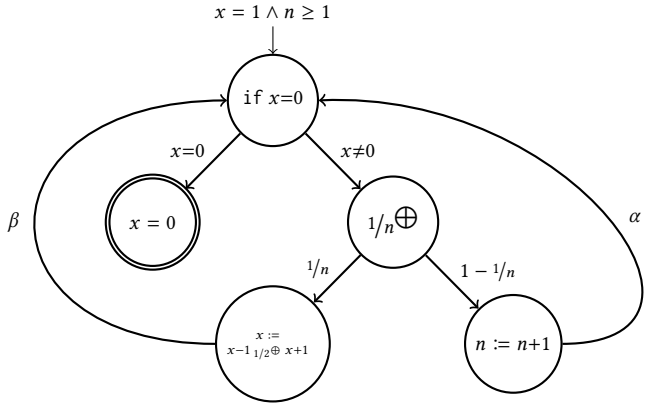

\begin{figure}[t]
	\newcommand\Scale {0.7}
	\begin{tikzpicture}[every state/.append style={thick, inner sep=.2em, minimum size=1.5cm}, every node/.style={scale=\Scale}, scale=\Scale]
		\draw[use as bounding box, white] (0,2) grid (12,9);
		\node[state, initial above, initial text=\mbox{$x=1 \land n\geq1$}] at (5,7) (n1) {\raisebox{.3em}{$\mathtt{if}~x{=}0$}};
		\node[state,accepting] at (3,5) (n2) {$x=0$};
		\node[state] at (7,5)  (n3) {\huge $\PC{\NF{1}{n}}$};
		\node[state] at (5,3) (n4) {\tiny$\begin{array}[t]{@{}c}x:=\\~x{-}1\PC{1/2}\kern.1emx{+}1~\end{array}$};
		\node[state] at (9,3) (n5) {$~\ASSIGN{n}{n{+}1}~$};

		\path[->,thick] (n1) edge [left] node [yshift = .5em] {$x{=}0$} (n2);
		\path[->,thick] (n1) edge [right] node [yshift = .5em] {$x{\neq}0$} (n3);
		\path[->,thick] (n3) edge [left] node [yshift = .5em] {$1/n$} (n4);
		\path[->,thick] (n3) edge [right] node [yshift = .5em] {$1-1/n$} (n5);
		\draw[->,thick] (n4.west) .. controls (0,3) and (0,7) ..  (n1.west);
		\draw[->,thick] (n5.east) .. controls (13,3) and (10,5.25) .. node [xshift = 2.2em, yshift = -3.3ex] {$\alpha'$} (n3.east);
	\end{tikzpicture}
	\caption{\Fx The Very Lazy Loper again, but rewritten with an inner loop: see Program \Eqn{e1814}.
	We have moved arc $\alpha$ in \Fig{f0812} from terminating at $\mathtt{if}~x{=}0$ to position $\alpha'$ terminating at $\PC{\NF{1}{n}}$
	because at the end of arc $\alpha$ we know already that $x{\neq}0$. That is the only change.}
	\label{f1733}
\end{figure}
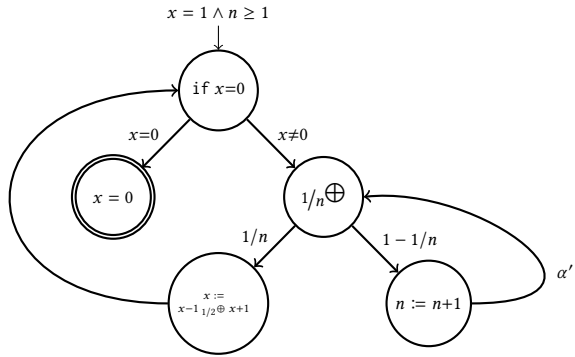

\begin{Equation}\label{e1814}
	\begin{array}{l}
		\{n\geq1\} \\
		\ASSIGN{x}{1} \\
		\WHILE{x{\neq}0} \\
		\quad	\WHILEDO{1{-}\NF{1}{n}}{\ASSIGN{n}{n{+}1}} \quad\footnotemark\\
		\quad	\PCHOICE{\ASSIGN{x}{x{-}1}
					}{\NF{1}{2}
					}{\ASSIGN{x}{x{+}1}
					}\\
		\} \quad.
	\end{array}
\end{Equation}
\footnotetext{This probabilistic while-loop enters the loop with the probability shown, otherwise terminates. It can easily be written in the conventional style with the help of an auxiliary Boolean, but the probabilistic guard reduces clutter. It is defined in \cite[Sec.~7.7]{McIver:05a}; in any case the with-Boolean version is given at \Eqn{e1516}.}%
Its transition diagram would look as in \Fig{f1733}, again an informal remark. But it supports a plausible argument for that equality: that the transition $\alpha$ in \Fig{f0812} can be moved to its position $\alpha'$ in \Fig{f1733} because at the point it is taken it is known already that $x{\neq}0$, and so the test ``$\mathtt{if}~x{=}0$'' always takes its $x{\neq}0$ branch if reached via $\alpha$ --- in effect allowing the arrowhead of $\alpha$ to be moved down to $\alpha'$ on the right-hand path. We show rigorously the equality $\textrm{\Eqn{e1801}}{=}\textrm{\Eqn{e1814}}$ in \App{s1315}.

With that equality, the proof of \AST\ for Program \Eqn{e1801} is relatively simple:
\footnote{A direct proof of \AST\ for \Eqn{e1801} should be simple too: there should be a single variant for that in our style, a single expression in $x,n$. But it does not seem simple to find --- we do yet not know what that variant is. Perhaps a lexicographic variant would be easier to find \cite{Agrawal:2018}, although we do not know whether lexicographic variants represent an increase in power.}
we prove \AST\ for Program \Eqn{e1814} instead. The variant for the outer loop is $x$, as for the ordinary 1dSRW; and we must show that the probability of $x$'s decrease (by at least 1) via the outer loop's body is at least $\NF{1}{2}$. For that, with a separate nested argument, we use variant $n$ (more or less) in the inner loop to show its \AST, and then use \Thm{T0909} to show that $x$ is a probabilistic invariant of that loop.
\footnote{Actually we would be using $H{-}x$ for arbitrary $x$ if we were following \Thm{t1928} exactly.}
Thus the outer loop is guaranteed to decrease $x$ by 1 with probability $\NF{1}{2}$ \emph{eventually}.
\footnote{A similar ``eventually the variant will decrease'' argument is for the example of ``the random stumbler'' in the quantitative temporal logic of \cite[Sec.~10.4.2]{McIver:05a}.}

We now give the proof of termination of Program \Eqn{e1814}, comprising an outer loop containing an inner loop. We use an ``inner variant'' based on $n$ and an ``outer variant'' based on $x$, in summary as follows:
\begin{itemize}
\item We use our new \Thm{t1651} and variant more-or-less $n$ (as in The Escaping Spline of \Sec{s0838}) to show \AST\ for the inner loop on $n$. Then, observing that $x$ is a probabilistic invariant of that loop (\Def{d0923}) we use \Thm{thm:lem-2-4-1} to conclude that $x$ is preserved by the inner loop as a whole, which is a precursor to showing it is a variant for the outer loop.
\item The facts just established for the inner loop are then used to show that $x$ is a super-martingale (actually that $H{\ominus}x$ is a sub-martingale) for the outer loop.
\item \AST\ for the outer loop is then shown by a second application of \Thm{t1651}, this time with variant $x$.
\end{itemize}

Here is the proof in more detail:
\begin{enumerate}[(a)]
\item \label{i1616} The inner loop $\WHILEDO{1{-}\NF{1}{n}}{\ASSIGN{n}{n{+}1}}$ is similar to \Sec{s0838} --- the difference is that the inner loop does not set $n$ to zero in order to terminate (as \Sec{s0838} does). To be very clear,
\footnote{In \cite[Sec.~7.7.5]{McIver:05a} we give termination rules for probabilistic guards directly; but we have not yet extended them to take advantage of the new $p$,$d$-parametric technique we present here.}
we therefore introduce a local Boolean variable $b$ to control termination explicitly and re-write the loop as
\begin{Equation}\label{e1516}
    \begin{array}{l}
        \PCHOICE{\ASSIGN{b}{\false}}{\NF{1}{n}}{\ASSIGN{b}{\true}} \\
        \WHILE{b} \\
        \quad \ASSIGN{n}{n{+}1} \\
        \quad\PCHOICE{\ASSIGN{b}{\false}}{\NF{1}{n}}{\ASSIGN{b}{\true}} \\
        \}
    \end{array}
\end{Equation}
For variant $V$ we use ${\iverson{b}}{*}n$, that is $n$ itself when $b$ is true and zero otherwise; the invariant $I$ is $\true$. (We are thus using $b$ and $n$ together to mimic the variant $x$ in \Sec{s0838}.) Then we reason
\begin{align*}
		                     & \awp{\PCHOICE{(\ASSIGN{b}{\false}}{\NF{1}{n}}{\ASSIGN{b}{\true}})}{V} \\
	=\quad		    & \awp{\PCHOICE{(\ASSIGN{b}{\false}}{\NF{1}{n}}{\ASSIGN{b}{\true}})}{({\iverson{b}}{*}n)}
						\eeq \NF{1}{n}{\cdot}0+(1{-}\NF{1}{n}){\cdot}n \eeq n{-}1 \quad, \\
	\textrm{and}\quad & \awp{(\ASSIGN{n}{n{+}1})}{(n{-}1)} \eeq n \quad, \\
	\textrm{and}\quad & n \eeq {\iverson{b}}{*}n \eeq V\textrm{~at loop entry}\quad, \tag*{because $b$ is true on loop entry}
\end{align*}
so that $V$ is a martingale. And $V$ decreases by at least 1 with probability $\NF{1}{n}$ when $V{>}0$, so that $p(V)=\NF{1}{V}$ and $d(V){=}1$ suffices. Thus the inner loop \Eqn{e1516} terminates \AS.

\item We now use \Thm{thm:lem-2-4-1} to show that the inner loop ``preserves'' $x$ --- but we recall that the $\mathit{Sub}$ used in that theorem (our $x$) must be bounded --- and $x$ is not bounded. Accordingly we use $H{\ominus}x$, for arbitrary $H$ as in \Thm{t1651}, reasoning
\begin{align*}
		                     & \wp{\PCHOICE{(\ASSIGN{b}{\false}}{\NF{1}{n}}{\ASSIGN{b}{\true}})}{H{\ominus}x} \eeq H{\ominus}x \quad, \\
	\textrm{and}\quad & \wp{(\ASSIGN{n}{n{+}1})}{(H{\ominus}x)} \eeq H{\ominus}x \quad, \\
	\textrm{and}\quad & {\iverson{b}}{*}(H{\ominus}x) \leq (H{\ominus}x)\quad, 
\end{align*}
so that with  \Thm{thm:lem-2-4-1} and \Itm{i1616} we indeed have $H{\ominus}x \leq \wp{\textrm{$\mathit{INNER}$-$\mathit{LOOP}$}}{(H{\ominus}x)}$.

\item We now reason over the whole of the body of the outer loop, i.e.\ including the assignment to $x$, to show that $H{\ominus}x$ is a sub-martingale for the \emph{outer} loop: we have
\begin{align*}
		                     & \wp{\PCHOICE{(\ASSIGN{x}{x{-}1}}{\NF{1}{2}}{\ASSIGN{x}{x{+}1}})}{(H{\ominus}x)} \\
	=\quad		    & (H\ominus(x{-}1))/2 + (H\ominus(x{+}1))/2\quad\geq\quad H\ominus x \tag*{(Careful with the $(\geq)$! See \Lem{l1331}.)} \quad, \\
	\textrm{and}\quad & \wp{\textrm{$\mathit{INNER}$-$\mathit{LOOP}$}}{(H{\ominus}x)} \quad\geq\quad (H{\ominus}x)\quad, \tag*{from above}
\end{align*}
as required.

\item The remainder of the \AST\ proof for the whole of Program \Eqn{e1814} is now just as for the 1dSRW, i.e.\ using variant $x$ and $p(x){=}\NF{1}{2}$ and $d(x){=}1$.
\end{enumerate}

This  \emph{Very Lazy Loper} example was inspired by two features of \cite{Agrawal:2018}. The first was that the VLL looks like a good target for lexigographic techniques: one would use $x$ for the ``major'' component and $n$ for the minor, with the lexicographic aspect being that when $x$ (probabilistically) decreases, it does not matter what happens to $n$. Here we have sidestepped that by using nested loops; but there is no guarantee that such tricks would work in general.

The second feature was that their Example 4.8 (Figure 2) [\textit{op.\ cit.}]\ admits a very direct argument, without lexicography, if one works with the source code in (something like) \pGCL: their example program is the sequential composition of two while-loops $A,B$ say, and each of those loops trivially terminates \AS. Thus \AST\ for the whole program follows immediately from the fact that
\[
	\wp{(A\,;B)}{1} \Wide{=} \wp{A}{(\wp{B}{1})} \Wide{=} \wp{A}{1} \Wide{=} 1~.
\]
Thus this example tangentially makes the case for considering algebraic reasoning as part of the ``arsenal'' for showing \AST.
}

\section{Review of mathematical literature on super-martingale methods}\label{s1530} 
\subsection{Recurrent Markov Chains, and Super-Martingales}

Early work on characterising recurrent behaviours of infinite-state Markov processes using super-martingale methods is primarily due to \citet{Foster:1951aa, Foster:1952ab}, \citet{Kendall:1951aa} and \citet{Blackwell:55}. In this section  we review some of these important results and explain how they relate to \AST\ for probabilistic programs and \Thm{t1651}. Note that their arguments are given directly in an underlying model of (deterministic) transition systems.

 Following the conventions of the authors above, we assume an enumeration of the (countable) state space $i= 0, 1, 2,\dots$, and transition probabilities $p_{ij}$ for the probability of transitioning from state $i$ to state $j$. The probability of reaching $j$ from $i$ on the $n$'th transition is $p^n_{ij}$, where  $p^n$ is computed from single transitions $p_{ik}$  using matrix multiplication.
\citet{Foster:1951aa} identified three kinds of long-term average behaviours for infinite-state Markov processes, which behaviours he called dissipative, semi-dissipative and non-dissipative.    A process is said to be \emph{non-dissipative} if  its long term average behaviour does not ``dissipate'', i.e.\
if  $\sum_{j\geq 0}\pi_{ij}=1$ for all states $i$, where $\pi_{ij}= \lim_{n\rightarrow \infty}\frac{1}{n}\sum_{r=1}^np^r_{ij}$ \citep{Kendall:1951aa}. An illustration of a \emph{dissipative} process is the biased random walk, with an extreme example given by transition probabilities $p_{i(i{+}1)}=1$.  The non-dissipative condition is more general than \AST, but the methods used to prove that a process is non-dissipative nevertheless do use super-martingales. 
 In particular Foster's Theorem 5 \citeyear{Foster:1951aa} gives such a sufficient condition for a process to be non-dissipative. It is
 \begin{Equation}\label{e1113}
 \sum_{j\geq 0} j{\cdot}p_{ij}\leq i~, \quad\textit{for all states $i\geq 0$~.
} 
 \end{Equation} 

\citet{Kendall:1951aa} generalised Foster's \Eqn{e1113} by removing the strict relation between the ``super-martingale'' values and the enumeration of the state space, whilst articulating an important finitary property of  a super-martingale that he used in his proof. In Kendall's work, a Markov process is guaranteed to be non-dissipative if there is a function $V$ from states to reals such that 
\begin{Equation}\label{e1156}
 \sum_{j\geq 0} V(j){\cdot}p_{ij}\leq V(i) \quad\textit{for all states $i{\geq}0$}
\end{Equation}
\emph{and} for each value $\delta{\geq}0$ there are only finitely many states $i$ such that $V(i)\leq \delta$.   Finiteness is crucial here: for the dissipative process with $p_{i(i{+}1)} = 2/3$ and $p_{i(i{-}1)}=1/3$ (which we return to in \Sec{s1643}) we have $V(i)= \pi_{i0}$ satisfies \Eqn{e1156} but, of course, in general $\sum_{j\geq 0}\pi_{ij} = \pi_{i0}< 1$, since it can be shown that $\pi_{i0}$ is the probability of ever reaching $0$ from $i$.

Then \citet{Blackwell:55} further developed the ideas of Foster and Kendall (sketched above) in order to obtain a complete characterisation of Markov-process behaviour in terms of martingales (i.e.\ exact); some of Blackwell's results can be adapted to work for probabilistic programs generally to provide a certificate to prove \emph{non}-\AST. We summarise Blackwell's results here and then show how we can apply them. We continue with the historical notations.

Let $C$ be a subset of the state space, and fix some initial state $\hat{i}$. Say that $C$ is \emph{almost closed} (with respect to that $\hat{i}$\,) iff the following conditions hold:
\begin{enumerate}
\item The probability that $C$ is entered infinitely often, as the process takes transitions (initially) starting from  $\hat{i}$, is strictly greater than zero and
\item If $C$ is indeed visited infinitely often, starting from $\hat{i}$, then eventually the process remains within $C$ permanently.
\end{enumerate}

Say further that a set $C$ is \emph{atomic} iff $C$ does not contain two disjoint almost-closed subsets.
Finally, call a Markov process \emph{simple atomic} if it has a single almost-closed atomic set such that once started from $\hat{i}$ the process eventually with probability one is trapped in that set. We then have:

\begin{Theorem}{(Corollary of Blackwell's Thm.\,2 on p656) \citep{Blackwell:55}}{t1125}~\\
A Markov process is simple atomic (as above) just when the only bounded solution of the equation $\sum_{j\geq 0}p_{ij}\cdot V(j)=V(i)$, that is Blackwell's Equation (his 6) stating that $V$ is an exact martingale, is constant for all $i$ in $S\backslash C$ and transitions $p_{ij}$.
\end{Theorem}

We now show how to apply \Thm{t1125} to general probabilistic programs to obtain a certificate for non-termination.
\begin{corollary}[Non-termination certificate~]\label{c1026}\label{cor:NT}
	We use the conventions of \Thm{t1651}, restated here.
	Let $\mathit{I}, \mathit{G} \subseteq \Sigma$ be predicates;
	let $\mathit{V}\colon \Sigma{\To}\Rpos$ be a non-negative real-valued function on the state; and
	let $\mathit{Com}$ be a $\pGCL$ program.
	Then the conditions
	\begin{enumerate}[(i)]
		\item\label{i1651-2b}
			$\mathit{I}$ is a standard invariant for the loop\quad$\WHILEDO{\mathit{G}}{\mathit{Com}}$~, and
		\item\label{i1651-1b}
			$\mathit{G}\land\mathit{I}\Implies V{>}0$~, and
		\item\label{i1651-4b}
			$V$ is a non-constant and \underline{bounded} \emph{exact} martingale  on $\mathit{I}\land G$
	\end{enumerate}
	together imply that there is a state $\sigma$ in $I$ such that\quad
	$\wp{\WHILEDO{\mathit{G}}{\mathit{Com}}}{1}(\sigma) <  1$.
	That is
	\begin{quote}
	If a predicate $\mathit{I}$ is a standard invariant,
	and there exists a non-negative real-valued variant function $V$ on the state,
	an exact martingale on $\mathit{I}\land\mathit{G}$,
	such that $V$ is bounded and non-constant,
	then there is some initial state satisfying $I$ from which loop\quad$\WHILEDO{\mathit{G}}{\mathit{Com}}$\quad does not terminate \AS.
	\end{quote}

\begin{proof}
Fix a starting state $\hat{s}$, and collapse the termination set $S_0$ (i.e. all states that do not satisfy the guard) to a single state $s_0$. Now adjust the underlying transition system corresponding to the given program so that any transition to a state in $S_0$ becomes a transition into $s_0$, and assume that there is a single transition from $s_0$ to $s_0$. Suppose now that the probability of $\hat{s}$'s reaching $s_0$ is one. We now note:
\begin{enumerate}
\item Our termination set  $\{s_0\}$ is almost-closed and atomic (in the sense of Blackwell), because
\begin{enumerate}
\item almost closed: Our process reaches $s_0$ with non-zero probability (in fact we assumed with probability one, for a contradiction) and, once at $s_0$, it remains there.
\item atomic: Our set $\{s_0\}$ has no non-empty subsets.
\end{enumerate}
\item We now recall that in fact $s_0$ is reached with probability one, so that the whole process is simple atomic.
\item From Blackwell's \Thm{t1125} we conclude that the only possible non-trivial martingale is unbounded.
\end{enumerate}
We deduce therefore, that if there exists a non-constant bounded martingale then there is some state from which termination is not guaranteed with probability $1$.
\end{proof}
\end{corollary}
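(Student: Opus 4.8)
The plan is a proof by contradiction resting entirely on Blackwell's characterisation, \Thm{t1125}. First I would assume the conclusion fails, i.e.\ that $\wp{\WHILEDO{\mathit{G}}{\mathit{Com}}}{1}(\sigma)=1$ for \emph{every} $\sigma$ satisfying $\mathit{I}$, so that the loop is \AST\ from all of $\mathit{I}$. Since our program logic is sound for the underlying transition-system semantics, I would then move into that model; because Blackwell's results are phrased for denumerable (demon-free) Markov chains, I would for this corollary either assume $\mathit{Com}$ contains no demonic choice or, if it does, fix a scheduler under which $V$ remains an exact martingale and argue in the resulting Markov chain. In that chain, invariance of $\mathit{I}$ (hypothesis \Itm{i1651-2b}) guarantees that every state reachable from an $\mathit{I}$-state that still satisfies the guard lies in $\mathit{I}\land\mathit{G}$, hence by \Itm{i1651-1b} has $V>0$, and by \Itm{i1651-4b} $V$ is a bounded exact martingale there.

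The second step is to normalise the exit set. Let $S_0=\{\sigma\mid\neg\mathit{G}(\sigma)\}$ be the states in which the loop halts; I would collapse $S_0$ to a single fresh absorbing state $s_0$ by redirecting every transition that would land in $S_0$ into $s_0$ and adding a self-loop at $s_0$, and extend $V$ to $s_0$ by any constant. This yields a Markov chain on $(\Sigma\setminus S_0)\cup\{s_0\}$; fixing a start state $\hat{s}\in\mathit{I}$ and restricting to its reachable part, the contradiction hypothesis says $\hat{s}$ reaches $s_0$ with probability $1$. I would then verify the Blackwell conditions for $C=\{s_0\}$: it is almost closed (entered with probability one, in particular with positive probability, and never left once entered) and atomic (a singleton has no two disjoint non-empty subsets); and since it is entered with probability one, the process is simple atomic.

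\Thm{t1125} then tells us that the only bounded solution of the exact-martingale equation on this process is constant on $S\setminus C$ — that is, $V$ is constant on every non-terminated state reachable from $\hat{s}$. Choosing $\hat{s}$ so that two states with distinct $V$-values are reachable from it (possible because $V$ is non-constant on $\mathit{I}\land\mathit{G}$ by \Itm{i1651-4b}, and the invariant confines the reachable guard-true states to $\mathit{I}\land\mathit{G}$) contradicts \Itm{i1651-4b}. Hence \AST\ cannot hold from all of $\mathit{I}$, which is exactly the statement.

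The step I expect to be the real obstacle is this last one: faithfully transporting the source-level exact-martingale hypothesis (a statement about $\wpsymbol$/$\awpsymbol$ and the loop body) into Blackwell's matrix equation $\sum_{j}p_{ij}V(j)=V(i)$ on the \emph{collapsed} chain, and, crucially, picking a start state from which $V$'s non-constancy is actually witnessed — ``non-constant on $\mathit{I}\land\mathit{G}$'' alone does not automatically furnish a single $\hat{s}$ reaching two differently-valued states, so one must use (or assume) that the relevant reachable portion of the state space is connected enough, as it is in all the motivating random-walk examples. Everything else — checking ``almost closed'', ``atomic'', ``simple atomic'', and the probabilistic bookkeeping around collapsing $S_0$ — I expect to be routine once the model is set up.
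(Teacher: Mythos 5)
Your proof follows essentially the same route as the paper's: collapse the states violating the guard into a single absorbing state $s_0$, assume for contradiction that it is reached almost surely, check that $\{s_0\}$ is almost closed and atomic so the process is simple atomic, and invoke Blackwell's \Thm{t1125} to force every bounded exact martingale to be constant, contradicting hypothesis \Itm{i1651-4b}. The two caveats you flag --- resolving demonic choice by fixing a scheduler under which $V$ stays an exact martingale, and witnessing $V$'s non-constancy within the portion reachable from a single $\hat{s}$ --- are genuine points of care, but they are left just as implicit in the paper's own proof, so your argument is not taking a different route.
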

Thus --in summary-- we have specialised Blackwell's result to demonstrate a new refutation certificate for programs: if the martingale is finite and non-constant it actually refutes termination with probability 1,  not just finite expected time to termination.

In fact \Cor{cor:NT} provides an interesting embellishment to recent work by \citet{Chatterjee:2017aa} who introduce the notion of ``repulsing super-martingales''. Their \emph{Theorem 6} uses an $\varepsilon$-repulsing 
super-martingale with $\varepsilon{>}0$ to refute almost-sure termination. And their \emph{Theorem 7} uses an $\varepsilon$ repulsing super-martingale with $\varepsilon{\geq}0$ to refute finite expected time to termination. In particular
to refute finite expected time to termination only a martingale is required.  

Our \Cor{cor:NT} takes this further to use non-constant and bounded martingales as certificates to refute almost-sure termination. 
For example the one-dimensional random walker
\begin{align*}
	\WHILEDO{x > 0}{\PCHOICE{\ASSIGN{x}{x - 1}}{\sfrac{1}{2}}{{\ASSIGN{x}{x + 1}}}}
\end{align*}
has an exact \emph{unbounded} martingale, and therefore our rule \Thm{t1928} shows that it terminates with probability $1$. On the other hand the \emph{biased} walker\quad
$\WHILEDO{x > 0}{\PCHOICE{\ASSIGN{x}{x - 1}}{\sfrac{1}{3}}{{\ASSIGN{x}{x + 1}}}}$\quad
(from \Sec{s1643}) has a \emph{non-constant} and \emph{bounded} martingale based on the function $V(s)= 1{-}\pi_{s{\leadsto}0}$ where $\pi_{s{\leadsto}0}$ is the probability that,
starting from state $s$, eventually state $0$ (i.e.\ $x{=}0$) is reached . By \Cor{cor:NT} we can conclude that  the program does not terminate with probability $1$. Note that Chatterjee's Theorem 7 \citeyear{Chatterjee:2017aa} does not distinguish between these two cases in terms of their behaviour: it implies that neither has finite expected time to terminate.
And \Cor{cor:NT} holds even when demonic choice is present.

\subsection{Towards Completeness: The Case of the Random Walker in Two Dimensions}\label{s1544}
\citet{Foster:1952ab} further considers the question of conditions on a Markov process that imply the existence of a super-martingale;  this is relevant for our Theme C.
His conditions are:

\begin{enumerate}
\item The state space $\Sigma$ is countable;
\item There is a finite subset $C \subseteq \Sigma$ that is reached with probability $1$  from any other state;
\item\label{i1543} The states are numbered so that given any pair of states $s_i, s_j$ there is some probability of reaching $s_j$ from $s_i$ whenever $i {<} j$;
\item\label{i1544} There is a single probability $0{<}\delta{<}1$ for the whole system such that for any $N$ there is an $i$ such that for all $j{\geq}i$ the state $s_j$ cannot reach $C$ within $N$ steps and with probability at least $\delta$.
\end{enumerate}
Under these conditions, Foster shows that there exists an \emph{unbounded} super-martingale
 function $V$ on $S$ such that $V(s)$ tends to infinity as the numbering of $s$ tends to infinity.

The construction is a variation on the expected time to termination but, as he remarks, expected time cannot be used because in many situations 
the expected time to termination is infinite.  However using Foster's construction we can prove the existence of a super-martingale that also satisfies the progress  conditions of our rule \Thm{t1651}, and thus could be used  to prove termination for the 2-dimensional symmetric random walk
\begin{align*}
 \WHILEDO{x{\neq}0 \lor y{\neq}0}{\ASSIGN{x}{x{-}1}~\oplus~\ASSIGN{x}{x{+}1}~\oplus~
                                                          \ASSIGN{y}{y{-}1}~\oplus~ \ASSIGN{y}{y{+}1}}
\end{align*}
where iterated $\oplus$ is shorthand for uniform choice (in this case $\NF{1}{4}$ each).

\begin{corollary}[Two-dimensional random walk]\label{thm:ddrw}
There exists a super-martingale  which satisfies the conditions of \Thm{t1651} to prove termination of the two-dimensional random walker.
\begin{proof}
(Sketch.) We follow Foster's argument  \citeyear{Foster:1952ab} to show that there is a numbering of the states that satisfy his conditions for constructing a super-martingale; then we show that the constructed super-martingale also satisfies the progress conditions.
Foster enumerates the states by ``spiralling out'' through increasing Manhattan distance, observing that simple scheme to satisfy his enumeration conditions.
Then he shows that there is a variant function $V$ which satisfies the conditions for a super-martingale;
\footnote{The Manhattan distance itself is not a super-martingale because, on the axes, the distance actually \emph{increases} in expectation by $(-1+1+1+1)/4=\NF{1}{2}$. Indeed if the Manhattan variant worked for two dimensions, it would also work for three; but the 3dSRW is not \AST. }
and in fact as the numbering of $s$ approaches infinity so too does $V(s)$; in particular Foster shows that there are no accumulation points in the image of $V$. Foster's general proof is by construction. (We sketch it in \App{a1551}.)

To show that our rule \Thm{t1651} applies, we need however to establish a progress condition.  First define $p(v)$ to be $\NF{1}{4}$ for all $v$. Then for $d$, first consider the subset $S_{\leq v}$ of $S$ comprising all those $s$ with $V(s){\leq}v$. Because there are no accumulation points in the image of $V$, we must have that $S_{\leq v}$ is finite. Now set $d(v)$ to be the minimum non-zero distance between any two of them, that is $(\textit{min}~(V(s'){-}V(s)) \mid s,s'\in(S_{\leq v})\land V(s'){>}V(s))$.
\Af{See \App{s1743}.}
Since $V(s)$ increases arbitrarily we have that $d$ is non-zero whenever $v{=}V(s)$ for some state with Manhattan distance strictly greater than $0$.

Thus there is guaranteed to be a $V$ satisfying the progress condition \Thm{t1651}(\ref{i1651-3}) that establishes termination for the 2dSRW --- even if we don't know what it is in closed form.
\end{proof}
\end{corollary}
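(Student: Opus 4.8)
The plan is to obtain the quasi-variant from the super-martingale construction of \citet{Foster:1952ab} (sketched in \App{a1551}) and then verify, one by one, the four hypotheses of \Thm{t1651}, with the instantiations $I = \true$, $G = (x{\neq}0\lor y{\neq}0)$ so that $\neg G$ is exactly the origin, $p(v) = \NF{1}{4}$ (the probability of each of the four equiprobable moves), and $C = \{(0,0)\}$ as Foster's finite ``target'' set. First I would check that the 2dSRW meets Foster's four structural conditions: $\Ints^2$ is countable; $C$ is reached with probability $1$ from every state, which is the classical recurrence of the two-dimensional lattice walk (P\'olya); numbering the states shell by shell by increasing Manhattan distance makes every later-numbered state reachable from every earlier one --- the walk is irreducible, so this in fact holds for all pairs, giving \Itm{i1543}; and, since a single step changes the Manhattan distance by exactly $\pm 1$, for every $N$ each state at distance ${>}N$ fails to reach $C$ within $N$ steps at all, so \Itm{i1544} holds with, say, $0{<}\delta{<}1$ taken to be $\NF{1}{2}$. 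Foster's theorem then yields a non-negative $V$ on $\Ints^2$, positive off $C$, which is a super-martingale off $C$ and in fact strictly decreases in expectation there, with $V(s){\to}\infty$ as the enumeration of $s$ tends to infinity and, crucially, with \emph{no accumulation points} in its image; hence every sublevel set $S_{\leq v} := \{s\mid V(s)\leq v\}$ is finite.

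Next I would discharge the conditions of \Thm{t1651}. Condition \Itm{i1651-2} holds because the loop body is loop-free and terminating, so $\true$ is a standard invariant; condition \Itm{i1651-1} holds because $V$ is positive off $C = \neg G$. For the super-martingale condition \Itm{i1651-4}: the body is a straight-line program with no demonic choice, so $\awp{\mathit{Com}}{V} = \wp{\mathit{Com}}{V}$, and for a non-origin state $s$ the value $\wp{\mathit{Com}}{V}(s)$ is precisely the one-step expectation $\sum_j p_{sj}V(j)$, which is $\leq V(s)$ by Foster's super-martingale inequality; hence $V \geq \iverson{G\land I}\cdot\awp{\mathit{Com}}{V}$, and \Eqn{e0909} turns this into the required $\iverson{G\land I}\cdot(H{\ominus}V) \leq \wp{\mathit{Com}}{(H{\ominus}V)}$ for every $H$. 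For the progress condition \Itm{i1651-3}, I would use that the one-step expectation is \emph{strictly} below $V(s)$ for non-origin $s$, so at least one of the four equiprobable successors $s'$ has $V(s'){<}V(s)$. Then, taking $d(v)$ to be the least positive gap among the finitely many values $V$ attains on $S_{\leq v}$ --- and extending $d$ by the constant $d(v_1)$ on $(0,v_1]$, where $v_1$ is the least positive value in the image of $V$, so that $d$ is everywhere positive and antitone --- we get $V(s') \leq V(s){-}d(V(s))$ (since $V(s)$ and $V(s')$ both lie in $V[S_{\leq V(s)}]$), and $s'$ is reached with probability $\NF{1}{4}$, which is exactly \Itm{i1651-3} with $R := V(s)$. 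With all four hypotheses met, \Thm{t1651} gives $\iverson{\true} \leq \wp{\WHILEDO{G}{\mathit{Com}}}{\one}$, i.e.\ \AST.

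I expect the main obstacle to lie not in the program-logic bookkeeping above but in the classical inputs: verifying that the 2dSRW genuinely satisfies Foster's enumeration hypotheses (the Manhattan-distance spiral is the right numbering for \Itm{i1543}--\Itm{i1544}), and extracting from Foster's construction both the statement ``no accumulation points in the image of $V$'' --- which is what forces the sublevel sets to be finite and thereby underwrites the antitonicity of $d$ --- and the fact that the constructed $V$ strictly decreases in expectation off $C$, which is what makes some successor strictly decrease $V$ and hence lands in $\{V\leq R{-}d(R)\}$. A secondary subtlety worth stating explicitly is that $p$ and $d$ must be positive and antitone on \emph{all} of $\Rspos$, not merely on the values actually attained by $V$ (this is why $p$ is kept constant and $d$ is extended below $v_1$); it is the same point, discussed around the escaping spline, that makes the rule \emph{fail} for walks with non-zero divergence probability. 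Since this argument establishes only the existence of a suitable $V$, via Foster, rather than a closed form, \Cor{thm:ddrw} remains a non-constructive, completeness-flavoured result.
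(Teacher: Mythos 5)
Your proposal follows essentially the same route as the paper's own proof sketch: invoke Foster's construction (the Manhattan-spiral enumeration, the constructed $V$ with no accumulation points in its image), take $p$ constantly $\NF{1}{4}$, and take $d(v)$ to be the least positive gap among the finitely many $V$-values on the sublevel set $S_{\leq v}$, then discharge the four hypotheses of \Thm{t1651}. The one step you assert rather than prove --- that Foster's $V$ decreases \emph{strictly} in expectation off the origin, so that some neighbour strictly decreases $V$ and the progress condition \Itm{i1651-3} follows with probability $\NF{1}{4}$ --- is precisely the subtlety the paper's own (sketch) proof also leaves implicit, and you at least flag it as the point needing extraction from Foster's construction, so your argument neither diverges from nor falls short of the paper's.
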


\section{Review of related work on termination for probabilistic programs}\label{s1521}
Our earlier variant rule \Thm{thm:lem-2-7-1} \citep[Sec.\,6]{Morgan:96b},\citep[Sec.\,2.7]{McIver:05a}  effectively made $p,d$ constants, imposed no super-martingale  condition but instead bounded $V$ above, making it not sufficient for the random walk. Later however we did prove the symmetric random walk to be \AST\ using a rule more like the current one \citep[Sec. 3.3]{McIver:05a}.

\textbf{\citet{Chakarov:2013}} consider the use of martingales for the analysis of infinite-state probabilistic programs, and  \citet{Chakarov:2016aa} has done further, more extensive work.

\citeauthor{Chakarov:2013} also show that a \emph{ranking super-martingale} implies \AST, and a key property of their definition for ranking super-martingale is that there is some constant $\varepsilon {>}0$ such that the average decrease of the super-martingale is everywhere (except for the termination states) at least $\varepsilon$. Their program model is operates over discrete distributions, without nondeterminism.

That work is an important step towards applying results from probability theory to the verification of infinite-state probabilistic programs.

\textbf{\citet{Fioriti:2015}} also use ranking super-martingales, with results that provide a significant extension to Chakarov and Sankaranarayanan's work \citep{Chakarov:2013}. Their program model includes both non-determinism and continuous probability distributions over transitions. They also show completeness for the class of programs whose expected time to termination is finite.  That excludes the random walk however; but they do demonstrate by example that the method can still apply to some systems which do not have finite termination time.

We note that  it can be shown that a ranking super-martingale that proves \AS\ also satisfies $p,d$ progress for \Thm{t1928}; see \App{a1339}. 

\textbf{\citet{Chatterjee:2017aa}} study techniques for proving that programs terminate with some probability (not necessarily one). Their innovation is to introduce the concept of ``repulsing super-martingales'' --- these are also super-martingales  with values that decrease outside of some defined set. Repulsing super-martingales can obtain lower bounds on termination probabilities, and as certificates can refute almost-sure termination and finite expected times to termination. 

More recently still \textbf{\citet{Chatterjee:2017ab}} have studied termination for probabilistic and nondeterministic recursive programs. In particular they show that ``conditionally difference-bounded ranking super-martingales'' can be used to prove almost-sure termination. As we do, Chatterjee and Fu allow super-martingales (i.e.\ not necessarily ranking); and their Thm.\ 5 requires that the average absolute difference between $V(\sigma)$ and $V(\sigma')$ must be at least some fixed $\delta{>}0$.  
This constraint seems to imply some kind of progress
%
%
and it will be an interesting exercise to understand exactly the differences in applicability between the two rules.  For example the existence of a fixed $\delta{>}0$ allows Chatterjee and Fu to give an estimate for ``tail probabilities''.

{\Fx
On the other hand the variation of the random walker given by the ``Lazy Loper'' program of \Sec{s1117}, in which the walker ``dawdles'' at a location depending on the distance to the origin, nevertheless can be proved to terminate almost surely using our \Thm{t1651} with definitions $V(x)=x$, and $p(v)= 1 \min 1/2v$ and $d(v) = 1$ for progress; but Chatterjee's Thm. 5 \citeyear{Chatterjee:2017ab} does not seem to apply here. Moreover there appears to be no super-martingale for this program that has average absolute move bounded away from $0$, as we explained in \Sec{s1117}.
}

Finally, \textbf{\citet{Agrawal:2018}} have extended the $\varepsilon$-strict super-martingale approach to include lexicographic orderings, and present techniques for their automatic synthesis. (We explore parametrised-$\varepsilon$ super-martingales, but not lexicographic, in \citet[Sec. 5]{McIver:2016aa}.)

A different approach to the same issue is the work of \textbf{\citet{Lago:2017aa}} in which expressions themselves are probabilistic artefacts, and their termination properties can be ``inherited'' by functional programs containing them: that allows the expressions' behaviour to be studied separately, outside of the clutter of the program containing them.

There are a number of other works that demonstrate tool support based on the above and similar techniques.  All the authors above \citep{Chakarov:2013,Fioriti:2015,Chatterjee:2017aa} have developed and implemented algorithms to support verification based on super-martingales.  \textbf{\citet{Esparza:2012}} develop algorithmic support for \AST\ of ``weakly finite'' programs, where a program is \emph{weakly finite} if the set of states reachable from any initial state is finite.  \textbf{\citet{Kaminski:16}} have studied the analysis of expected termination times of infinite state systems using probabilistic invariant-style reasoning, with some applications to \AST. In even earlier work \textbf{\citet{Celiku:05}} explore the mechanisation of upper bounds on expected termination times, taking probabilistic weakest pre-expectations \citep{McIver:05a} for their model of probability and non-determinism.

\section{Theoretical issues, limitations and caveats}\label{s1648}
\subsection{How Much Nondeterminism?}\label{s1130}
Our arguments above are over ``expectation transformers'', i.e.\ functions from post-expectations to pre-expectations and thus going in effect ``backwards''. But equivalently our programs are functions from initial state to (discrete) distribution over final states or, when demonic choice is present, to \emph{sets} of such distributions (but only sets satisfying certain ``healthiness'' conditions). That equivalence was shown by \citet{Kozen:85} for deterministic (i.e.\ non-demonic) programs, and extended by \citet{Morgan:96d,McIver:05a} when demonic choice was added. \autoref{table:wp} interprets programs (syntax) into that semantic space, and e.g.\ \Thm{thm:lem-2-7-1} and \Thm{thm:lem-2-4-1}, crucial to our argument, have been shown to be true in that space \citep{McIver:05a}.

Important is that those two theorems were \emph{not} proved by structural induction over $\PGCL$ syntax directly; rather they follow from a different structurally inductive proof, that all $\PGCL$ programs are mapped into the semantic space (where the theorems hold) --- that is, a proof that the space is closed under program-combining operators. The significance of the difference is that our results therefore hold for any elements of that space, whether they come from $\PGCL$ or not, including operational descriptions of programs as transition systems provided they satisfy the healthiness conditions the space demands. One such condition is the restriction to discrete distributions.
\footnote{Thus e.g.\ part of the structurally inductive proof would be to show that loops with discrete-distribution bodies cannot somehow ``in the limit'' require a proper measure to define their overall effect: the worst it can get is a countably infinite but still discrete distribution.}

Another healthiness condition concerns the degree of demonic choice our semantic space allows: is it finite?\ countable?\ unlimited? In fact our space requires that the sets of distributions be closed in the product topology over the set of discrete (sub-)distributions on $\Sigma$, that is distributions whose total weight is \emph{no more than} 1. (Any missing weight indicates non-termination.) All (meanings) of $\PGCL$ programs have that property \citep{McIver:05a}; and all \emph{finitely} branching transition systems do. But that property is not as simple e.g.\ as countable vs.\ uncountable branching. For example, Program
\begin{align}\label{e1739}
	& \COMPOSE{\ASSIGN{c,x}{\True,0}}{\quad\COMPOSE{\WHILEDO{c}{\PCHOICE{\ASSIGN{c}{\False}}{\NF{1}{2}}{\ASSIGN{x}{x + 2}}}}{\quad\NDCHOICE{\ASSIGN{x}{x+1}}{\SKIP}}}\,,
\end{align}
\F{if expressed as a transition system with one large demonic branch followed by a probabilistic branch at each tip,}
makes \emph{uncountably} many demonic choices (over geometric-style discrete distributions).
\footnote{
First pick any real number $b$ in the unit interval $[0,1]$ (which action cannot be written using $\PGCL$'s only-binary demonic choice); consider its binary expansion $0.b_1b_2\cdots b_n\cdots$. Construct the discrete (countably infinite) distribution\quad
\(
	0{+}b_1\;\AT \NF{1}{2},\quad
	2{+}b_2\;\AT \NF{1}{4},\quad
	\cdots\quad
	2n{+}b_n\;\AT \NF{1}{2^n}\quad
	\cdots
\)
\quad where ``$\AT$'' means ``with probability''.  (That second step can be done using $\PGCL$, for already-determined $b$.) For every $b$ chosen in the first step, the above distribution is a possible result, different for each $b$ and so uncountably numerous. But still the set of them all is closed, since the $\PGCL$ \Eqn{e1739} produces it.
}~
Nevertheless, because the program is written in $\PGCL$, that set is closed. On the other hand, the (standard) program\quad``choose $n$ from the natural numbers''\quad has only \emph{countably infinite} branching, and yet cannot be written in the $\PGCL$ of \autoref{table:wp}. Embedded in the probabilistic model \citep{McIver:05a}, its output set of distributions is not closed --- and so this program is out-of-scope for us. But Program \Eqn{e1739} is within our scope.

Thus the conceptual boundary of our result is \emph{not} countable vs.\ uncountable branching: rather it is topological closure vs.\ non-closure of sets of discrete distributions. But this issue is important only for examples ``imported'' from outside of $\PGCL$; for any $\PGCL$ program, closure of the corresponding transition system's results sets is automatic \citep[Sec.\ 8.2]{McIver:05a}.

A second example of an uncountable-but-closed set of distributions is given in \App{a1713}.

\subsection{``Progress'' is More Demanding than it Looks}\label{s1643}
Consider the asymmetric random walker\quad $\COMPOSE{\ASSIGN{x}{1}}{\WHILEDO{x{\neq}0}{\PCHOICE{\ASSIGN{x}{x{-}1}}{\NF{1}{3}}{\ASSIGN{x}{x{+}1}}}}$\,. We can easily synthesise an exact- (and thus super-) martingale $V(x)=\NF{2^x-1}{2^{x-1}}$ by solving the associated recurrence. It is bounded asymptotically above by 2, so that for progress we are tempted by $p(v){=}\NF{1}{3}$ and $d(v)=2{-}v$, both satisfying our positive-and-antitone requirements when $v{<}2$.

But this $d()$ in fact does not satisfy our requirements, because they apply for \emph{all}\/ $v$, not just those generated by states that the program can actually reach. And in this case there is no suitable value for $d(2)$, since it would have to be 0 for $d$ to be antitone. That is, even though the program can never reach a state $x$ where $V(x){=}2$, the requirements on $d(2)$ still apply.

As well as saving us from unsoundness (since the that asymmetric walker is not \AST\/), this exposes an important methodological issue: the properties of $p,d$, their being non-zero and antitone, \emph{do not refer to the program text at all}. However the properties of those functions might be proved --by hand, or with Mathematica or Sage-- the semantics of $\PGCL$ is not required: one needs only analytic arguments over the reals. And those arguments can be delegated to other people who have never heard of $\PGCL$ or transition systems, or Markov processes, random variables or program termination. That is, if we want to use powerful external analytical tools, we should avoid as far as possible  that they must be ``taught'' our semantics.

\subsection{Why Do we Express $\boldsymbol{V}$'s Being a Super-Martingale by Writing a Sub-Martingale Inequality?}
\label{s1127}
In \Thm{t1651} we wrote the super-martingale property of $V$ as a sub-martingale property of $H{\ominus}V$; yet in \Sec{s0918}, the case studies, we introduce the ``angelic'' $\awpsymbol$ and check the super-martingale property directly. Why didn't we use $\awpsymbol$ in \Thm{t1651} in the first place?

The reason is that \Thm{thm:lem-2-4-1} is proved over the semantic space of \citet{McIver:05a} mentioned in \Sec{s1130} above, and the brief treatment of angelic choice there [\textit{op. cit.}, Sec.~8.5] gives no $\awpsymbol$-based results for loops. To refer to the literature in its own terms --and to avoid building new special-purpose semantics here-- we therefore must use only $\wpsymbol$ when importing existing results.

On the other hand, the equivalence introduced for convenience in \Sec{s0918} --and whose property \Eqn{e1134} is established by structural induction over straight-line programs-- is used for \Eqn{e0908} only and does not rely on closure, or any other sophisticated property of the semantic space. 


\subsection{Bounded Expectations}\label{s1413}
In the symmetric random walk on naturals $x$, the expectation $x$ is an exact martingale in fact; and that process terminates \AS. If however we had used \underline{unbounded} $x$ as $\mathit{Sub}$ in \Thm{thm:lem-2-4-1}, we could conclude that the expected final value of $x$ is at least the (exact) initial value of $x$. If the process started at $x{=}1$, therefore, we would conclude that its expected value on termination is at least 1; but we know that its $x$'s expected (in fact exact) value on termination is 0 --- a contradiction.

That is why one assumption of \Thm{thm:lem-2-4-1} is that $\mathit{Sub}$ is bounded, and is one reason that, instead of using the potentially unbounded $V$, we use the bounded $H{\ominus}V$ instead.
(See also \App{a0904}.)


\section{Conclusion}

We have investigated ``parametric'' super-martingale methods for proving almost-sure termination for probabilistic- and demonic programs, and our main result \Thm{t1651} presents a new method, described earlier by \citet{McIver:2016aa} over a transition system, but now expressed and proved in the probabilistic programming logic of $\PGCL$; the rule can therefore be applied at the source level.  Although our presentation is in terms of $\wpsymbol$-style reasoning,  our innovation of parametrised $p, d$ progress is also applicable to transition-style models of programs. (See, for example \citeauthor{GretzKM14}'s interpretation  \citeyear{GretzKM14} of $\wpsymbol$ in terms of explicit transition systems.)

Our rule seems to be able to prove some tricky cases that go beyond other published rules, and moreover we have shown that $p, d$ progress can also be used as alternatives to rules based on ranking super-martingales, and rules based on conditional absolute difference. Furthermore, we believe our rule suffices for the two-dimensional symmetric random walk (\Sec{s1544}).

Completeness remains an open problem however, although the mathematical literature provides some insight to its solution in certain cases \citep{Blackwell:55, Foster:1952ab}.



\appendix

\section*{APPENDICES}
\bigskip

\section{In-the-limit termination implies termination\hfill [from \Sec{s1656}]}\label{a1139}
The following lemma is used in Part \Itm{i1656-3} of the proof of \Thm{t1651} in \Sec{s1656}. (Its proof is structurally identical to the analogous proof for non-probabilistic programs.)

\begin{lemma}\label{l0814}
Let $A,B$ be any two predicates on the state. Then
\begin{align}\label{e1057B}
	\wp{\WHILEDO{A\land B}{\mathit{Com}}}{\iverson{\neg A}}
	\Wide{\leq}\wp{\WHILEDO{A}{\mathit{Com}}}{\iverson{\neg A}} \quad. \footnotemark
\end{align}
\end{lemma}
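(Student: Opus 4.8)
The plan is a standard least-fixed-point (fixed-point induction) argument, relying on the fact --- recalled in \autoref{table:wp} --- that each loop denotes the least fixed point of its characteristic functional, together with the monotonicity of $\wpsymbol$. First I would introduce the two characteristic functionals
\begin{gather*}
	\Phi(X) \Defs \iverson{\neg(A\land B)}\cdot\iverson{\neg A} + \iverson{A\land B}\cdot\wp{\mathit{Com}}{X}~,\\
	\Psi(Y) \Defs \iverson{\neg A} + \iverson{A}\cdot\wp{\mathit{Com}}{Y}~,
\end{gather*}
whose least fixed points are, respectively, the left- and right-hand sides of \Eqn{e1057B}. (Here I use that $\neg A$ implies $\neg(A\land B)$, so that $\iverson{\neg(A\land B)}\cdot\iverson{\neg A}=\iverson{\neg A}$, and that $\iverson{\neg A}\cdot\iverson{\neg A}=\iverson{\neg A}$.) Since the least fixed point of a monotone map is also its least pre-fixed point, it suffices to show that $Y^{*}\Defs\lfp\Psi$ is a pre-fixed point of $\Phi$, i.e.\ that $\Phi(Y^{*})\leq Y^{*}$; that immediately yields $\lfp\Phi\leq\lfp\Psi$, which is \Eqn{e1057B}.

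To check $\Phi(Y^{*})\leq Y^{*}$ I would use the fixed-point equation $Y^{*}=\iverson{\neg A}+\iverson{A}\cdot\wp{\mathit{Com}}{Y^{*}}$ and argue pointwise, splitting on the state $\sigma$ into three cases. Where $\neg A$ holds at $\sigma$: both $\Phi(Y^{*})(\sigma)$ and $Y^{*}(\sigma)$ equal $1$. Where $A\land B$ holds at $\sigma$: both equal $\wp{\mathit{Com}}{Y^{*}}(\sigma)$. Where $A\land\neg B$ holds at $\sigma$: the term $\iverson{\neg(A\land B)}\cdot\iverson{\neg A}$ vanishes (because $\neg A$ fails) and the term $\iverson{A\land B}\cdot\wp{\mathit{Com}}{Y^{*}}$ vanishes too (because $A\land B$ fails), so $\Phi(Y^{*})(\sigma)=0$; meanwhile $Y^{*}(\sigma)=\wp{\mathit{Com}}{Y^{*}}(\sigma)$, which is non-negative by \Def{def:expectations}. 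Hence $\Phi(Y^{*})\leq Y^{*}$ everywhere, as required.

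The only case in which the inequality can be strict is the third one, $A\land\neg B$, and that is precisely where the extra guard-conjunct $B$ does its work: the loop $\WHILEDO{A\land B}{\mathit{Com}}$ can exit into a state where $A$ still holds (only $B$ has become false), contributing nothing to the post-expectation $\iverson{\neg A}$, whereas $\WHILEDO{A}{\mathit{Com}}$ would keep iterating from such a state. I do not expect any real obstacle here: the whole argument is routine fixed-point induction --- structurally identical to the classical proof that strengthening a loop's guard can only decrease the probability of exiting via the \emph{original} guard --- and the only care needed is the Iverson-bracket bookkeeping in the three cases above.
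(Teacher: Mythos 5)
Your proof is correct and follows essentially the same route as the paper's: both apply the least-pre-fixed-point (Park induction) rule, showing that the right-hand loop's least fixed point is a pre-fixed point of the left-hand loop's characteristic functional, and then verify the required inequality pointwise. The only difference is presentational — the paper compares $F(g)$ with $G(g)=g$ written as program text and splits on $B$ versus $\neg B$, whereas you write the functionals out in Iverson-bracket form and split into three cases — which amounts to the same argument.
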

\footnotetext{It might suprise at first that a stronger loop-guard could induce less- rather than more termination: so let $A$ be ``in the desert'' and $B$ be ``still have water'' and $\mathit{Com}$ be ``crawl''. In Loop $f$ we keep crawling only while we still have water; in Loop $g$ we crawl even without water. Success is ``leaving the desert'' --- termination by ``dying of thirst'' is failure, because the postcondition is not satisfied.\par In standard $\GCL$ the same inequality holds \emph{with the same proof}, mutatis mutandis, once the $\leq$ is replaced by $\Implies$. In that case it says that whenever the $A{\land}B$-loop is guaranteed to establish $\neg A$, so is the $A$-loop. The $\PGCL$ version simply converts ``implies'' into an ``is no more likely to''.}
\begin{proof}
We use the general rule for fixed points that $F(g){\leq}g\Implies \mu F{\leq}g$. In this case $f$ is \Eqn{e1057B}'s \LHS\ and $F$ its defining functional, with $g,G$ for \Eqn{e1057B}'s \RHS; and we are showing that $f{\leq}g$.
To apply the general fixed-point rule, we must therefore establish
\begin{align}
	& \makebox[0pt][r]{$F(g) \rightarrow$\hspace{6em}}
	\wp{(\IF{A\land B}(\COMPOSE{\mathit{Com}}{\WHILEDO{A}{\mathit{Com}})})}{\iverson{\neg A}} \label{e1127B-1} \\
	\leq\quad & \makebox[0pt][r]{$g \rightarrow$\hspace{6em}}
	 \wp{\WHILEDO{A}{\mathit{Com}}}{\iverson{\neg A}} \label{e1127B-2} \\
	=\quad & \makebox[0pt][r]{$G(g) \rightarrow$\hspace{6em}}
		\wp{(\IF{A}(\COMPOSE{\mathit{Com}}{\WHILEDO{A}{\mathit{Com}})})}{\iverson{\neg A}}~, \label{e1127B-3}
\end{align}
where we have used $g{=}G(g)$ to rewrite \Eqn{e1127B-2} to \Eqn{e1127B-3}, whose comparison with \Eqn{e1127B-1} is easier.

Now if $B$ holds in the (initial) state to which these $\wpsymbol$'s are applied, then \Eqn{e1127B-1} and \Eqn{e1127B-3} are equal, since the\quad$\IF{A\land B}$\quad of the former reduces to the\quad$\IF{A}$\quad of the latter, and they are otherwise textually identical. But  otherwise, i.e.\ initially $\neg B$, we have that \Eqn{e1127B-1} is exactly $\iverson{\neg A}$, and we know that \Eqn{e1127B-3} is at least $\iverson{\neg A}$ from any state.
\end{proof}

In \Sec{s1656} we were considering a sequence $B_n$ of predicates for which we had established that
\begin{equation}\label{e1113a}
 \wp{\WHILEDO{A\land B_n}{\mathit{Com}}}{\iverson{\neg A}}\Wide{\geq}t_n
\end{equation}
for a corresponding sequence of $t_n$'s. From \Eqn{e1113a} and \Lem{l0814} we can conclude that
\begin{equation}\label{e0826}
 \wp{\WHILEDO{A}{\mathit{Com}}}{\iverson{\neg A}}\Wide{\geq}\mbox{$\sup_n t_n$} ~,
\end{equation}
where in \Sec{s1656} in fact $\sup_n t_n$ was one. That gives us \AST\ for the \LHS\ of \Eqn{e0826}.

\newpage
\section{On super-martingales and the proof of \Thm{t1928} \hfill[from \Sec{s0946}]}\label{a1216}

The formal statement and proof of \Thm{t1928} does not refer to (nor depend on) a super-martingale property directly. Instead, it uses an $H{\ominus}V$ construction, for two reasons (as briefly described earlier):
\begin{enumerate}
\item The instantiation of $\mathit{Sub}$ in \Thm{thm:lem-2-4-1} must be bounded, and $H{\ominus}V$ is bounded (even if $V$ is not); and
\item\label{i1316} When demonic choice is present, leading to a \emph{set} of final distributions rather than only one, the $\PGCL$ logic allows us only to bound expected values below --- not above, as a super-martingale must be. This is because $\PGCL$ interprets demonic choice as $\min$. So by bounding $H{\ominus}V$ below, we bound $V$ itself above.
\end{enumerate}
In this section we give more background for \Itm{i1316}. In order to make the argument self-contained, however, we reason here over transitions directly. We stress however that our contribution \Thm{t1928} stands or falls in its $H{\ominus}V$ form: the lemma below is not necessary for its validity.

\bigskip
We consider a single transition $\sigma\mapsto\Delta$, for $\sigma$ a state in $\Sigma$ and $\Delta$ a (discrete) distribution over $\Sigma$. For function $f$ in $\Sigma{\To}\Rgen$, possibly negative valued, we write $\Expec{\Delta}{f}$ for the expected value of $f$ on $\Delta$.

\begin{lemma}[super-martingales vs.\ $(H\ominus)$]\label{l1331}
For any $f$ in $\Sigma{\To}\Rgen$ and transition $\sigma\mapsto\Delta$ with for $\sigma$ a state in $\Sigma$ and $\Delta$ a (discrete) distribution over $\Sigma$, we have
\[
 	f(\sigma)\geq \Expec{\Delta}{f}
	\WideRm{iff}
	H{\ominus}f(\sigma) \leq \Expec{\Delta}(\lambda\sigma'\kern-.3em.\,H{\ominus}f(\sigma')) \quad \textrm{for all real $H>0$.}
\]
\end{lemma}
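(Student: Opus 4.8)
The plan is to derive both implications from the elementary pointwise identity
\[
	H{\ominus}f(\sigma') \eeq \bigl(H - f(\sigma')\bigr) + \bigl(f(\sigma'){\ominus}H\bigr)\,,
\]
valid for every state $\sigma'$ and every real $H$ (recall that $a{\ominus}b$ abbreviates $\Max{a-b}{0}$; the identity is immediate by cases on $H{\gtrless}f(\sigma')$, and it also shows $a{\ominus}b \geq a-b$ and $a{\ominus}b\geq 0$). Taking $\Delta$-expectations and using linearity gives
\[
	\Expec{\Delta}{(\lambda\sigma'\kern-.3em.\,H{\ominus}f(\sigma'))} \eeq H - \Expec{\Delta}{f} + \Expec{\Delta}{(\lambda\sigma'\kern-.3em.\,f(\sigma'){\ominus}H)}\,,
\]
which holds whenever $\Expec{\Delta}{f}$ is a well-defined real --- as it is in every application of the lemma, where $f$ is an expectation $V$ that is bounded (or at least has finite expected value over the transitions at hand).

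\emph{Forward direction.} Assume $f(\sigma) \geq \Expec{\Delta}{f}$ and fix $H{>}0$. If $f(\sigma){\geq}H$ then $H{\ominus}f(\sigma){=}0$ and the claim is trivial, since $H{\ominus}f(\sigma'){\geq}0$ at every $\sigma'$. If $f(\sigma){<}H$ then $H{\ominus}f(\sigma) = H - f(\sigma)$, and using the pointwise bound $H{\ominus}f(\sigma') \geq H - f(\sigma')$ together with monotonicity and linearity of $\Expec{\Delta}{(-)}$,
\[
	\Expec{\Delta}{(\lambda\sigma'\kern-.3em.\,H{\ominus}f(\sigma'))} \ggeq H - \Expec{\Delta}{f} \ggeq H - f(\sigma) \eeq H{\ominus}f(\sigma)\,,
\]
which is what is required.

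\emph{Converse direction.} Assume $H{\ominus}f(\sigma) \leq \Expec{\Delta}{(\lambda\sigma'\kern-.3em.\,H{\ominus}f(\sigma'))}$ for all $H{>}0$, and take $H$ large enough that $H > f(\sigma)$, so the left-hand side is exactly $H - f(\sigma)$. Substituting the second displayed identity and cancelling $H$ yields
\[
	\Expec{\Delta}{f} \lleq f(\sigma) + \Expec{\Delta}{(\lambda\sigma'\kern-.3em.\,f(\sigma'){\ominus}H)}\,.
\]
Now let $H{\to}\infty$: the family $\lambda\sigma'.\,f(\sigma'){\ominus}H$ is non-negative, non-increasing in $H$, and converges pointwise to $0$ (each $f(\sigma')$ being finite), so its $\Delta$-expectation --- a countable sum --- tends to $0$ by monotone convergence, giving $\Expec{\Delta}{f} \leq f(\sigma)$.

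The only delicate point, and the sole place where discreteness of $\Delta$ and finiteness of $f$ genuinely enter, is this last limit, together with the tacit well-definedness of $\Expec{\Delta}{f}$ for a possibly-negative $f$ on a countable support. In every context where \Lem{l1331} is invoked, $f$ is an expectation that is bounded (or has finite mean under the relevant transitions), so $\Expec{\Delta}{f}$ is a genuine real and $\lambda\sigma'.\,(f(\sigma'){\ominus}0)$ supplies an integrable dominating function; the monotone-convergence step then goes through verbatim. (Alternatively one could simply state the lemma for bounded $f$ only, which is all that \Thm{t1928} ever needs.)
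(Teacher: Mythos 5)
Your forward direction is essentially the paper's own: both rest on the pointwise bound $H-f(\sigma')\leq H{\ominus}f(\sigma')$, monotonicity and linearity of $\Expec{\Delta}{(-)}$, and non-negativity of the right-hand side; your case split on $f(\sigma)\gtrless H$ versus the paper's final ``\RHS\ is non-negative'' step is only cosmetic.

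The converse is where you genuinely diverge, and where there is a real gap relative to the lemma as stated. Your decomposition $H{\ominus}f = (H-f)+(f{\ominus}H)$ already presupposes that $\Expec{\Delta}{f}$ is a well-defined finite real, and the limit step needs $\Expec{\Delta}{(\lambda\sigma'\kern-.3em.\,f(\sigma'){\ominus}0)}<\infty$ so that $\Expec{\Delta}{(\lambda\sigma'\kern-.3em.\,f(\sigma'){\ominus}H)}\to 0$ (decreasing convergence with an integrable majorant, as you yourself note). But \Lem{l1331} imposes no such integrability hypothesis, and the case it must cover --- $f=V$ unbounded, possibly with $\Expec{\Delta}{V}=\infty$ --- is exactly the situation \Thm{t1928} is built for: a $\PGCL$ loop body can produce a transition distribution with countably infinite support, so finiteness of $\Expec{\Delta}{V}$ is not automatic, and your fallback of restating the lemma for bounded $f$ weakens it at precisely the point the paper cares about (the theorem drops boundedness of $V$; it sidesteps the issue not by bounding $V$ but by never needing the lemma, working with $H{\ominus}V$ throughout). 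The paper's converse is engineered to avoid your assumption: it rewrites $H{\ominus}f(\sigma')=H-(H\min f(\sigma'))$, so the hypothesis becomes $f(\sigma)\geq\Expec{\Delta}{(\lambda\sigma'\kern-.3em.\,H\min f(\sigma'))}$ for every $H$, and then proves $\sup_H\Expec{\Delta}{(\lambda\sigma'\kern-.3em.\,H\min f(\sigma'))}=\Expec{\Delta}{f}$ directly by a double supremum over finite restrictions of the (countable) support of $\Delta$ --- no appeal to integrability of the positive part. So your route is shorter and does not use discreteness of $\Delta$, but it proves a strictly weaker statement: the step ``$\Expec{\Delta}{(f{\ominus}H)}\to 0$'' fails outright when $\Expec{\Delta}{(f{\ominus}0)}=\infty$, a case the statement does not exclude and which (for the relevant non-negative $f$) the paper's truncation argument still handles. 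To repair your proof without changing the lemma, replace the limit argument by the $H\min f$ truncation and the finite-support supremum exchange.
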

\begin{proof} (Forwards, \emph{only if})
We reason for any $H$ that
\begin{Reason}
\Step{}{f(\sigma)\geq \Expec{\Delta}{f}}
\Step{iff}{H-f(\sigma)~\leq~H-\Expec{\Delta}{f}}
\StepR{iff}{distribute linear operation through expected value}{
	H-f(\sigma)~\leq~\Expec{\Delta}{(\lambda\sigma'\kern-.3em.\,H{-}f(\sigma'))}
}
\StepR{implies}{monotonicity of $\Expec{\Delta}{}$; and $H{-}f(\sigma')\leq H{\ominus}f(\sigma')$}{
	H-f(\sigma)~\leq~\Expec{\Delta}{(\lambda\sigma'\kern-.3em.\,H{\ominus}f(\sigma'))}
}
\StepR{implies}{\RHS\ is non-negative}{
	H\ominus f(\sigma)~\leq~\Expec{\Delta}{(\lambda\sigma'\kern-.3em.\,H{\ominus}f(\sigma'))}~,
}
\end{Reason}
as required.
\end{proof}
\begin{proof} (Backwards, \emph{if}) We reason
\begin{Reason}
\Step{}{
	H{\ominus}f(\sigma) \leq \Expec{\Delta}(\lambda\sigma'\kern-.3em.\,H{\ominus}f(\sigma'))
}
\StepR{iff}{arithmetic}{
	H-(H\min f(\sigma)) \leq \Expec{\Delta}(\lambda\sigma'\kern-.3em.\,H-(H\min f(\sigma')))
}
\StepR{iff}{distribute $(H-)$ through $\Expec{\Delta}{}$}{
	H\min f(\sigma) \geq \Expec{\Delta}(\lambda\sigma'\kern-.3em.\,H\min f(\sigma'))
}
\StepR{implies}{$\min$}{
	f(\sigma) \geq \Expec{\Delta}(\lambda\sigma'\kern-.3em.\,H\min f(\sigma'))~,
}
\end{Reason}
which implies incidentally that $\Expec{\Delta}(\lambda\sigma'\kern-.3em.\,H\min f(\sigma'))$ does not diverge (i.e.\ is finite).

Using the above, our desired $f(\sigma)\geq\Expec{\Delta}f$ would follow from $\sup_H \Expec{\Delta}(\lambda\sigma'\kern-.3em.\,H\min f(\sigma')) = \Expec{\Delta}{f}$, which we now prove. Note that because $\Delta$ is discrete its support is countable. We have
\begin{Reason}
\Step{}{
	\sup_H \Expec{\Delta}(\lambda\sigma'\kern-.3em.\,H\min f(\sigma'))
}
\StepR{$=$}{$\Delta$ has countable support; $\Delta_{\downarrow\Sigma'}$ restricts $\Delta$ to $\Sigma'$}{
	\sup_H \sup_{\substack{\Sigma'\subseteq\Sigma\\\Sigma'\textrm{~finite}}} \Expec{\Delta_{\downarrow\Sigma'}}(\lambda\sigma'\kern-.3em.\,H\min f(\sigma'))
}
\StepR{$=$}{commute $\sup$'s}{
	\sup_{\substack{\Sigma'\subseteq\Sigma\\\Sigma'\textrm{~finite}}} \sup_H \Expec{\Delta_{\downarrow\Sigma'}}(\lambda\sigma'\kern-.3em.\,H\min f(\sigma'))
}
\StepR{$=$}{$\Delta_{\downarrow\Sigma'}$ has finite support}{
	\sup_{\substack{\Sigma'\subseteq\Sigma\\\Sigma'\textrm{~finite}}} \Expec{\Delta_{\downarrow\Sigma'}}f
}
\StepR{$=$}{as above}{
	\Expec{\Delta}f ~.
}
\end{Reason}
%
%
%
\end{proof}

\newpage
\section{On the constraints imposed by \Thm{T0909} --- boundedness and \AST\hfill[\Sec{s1413}]}\label{a0904}
Theorem~\ref{T0909} requires that the sub-martingale $\mathit{Sub}$ be bounded, and in \Sec{s1413} a counter-example shows that to be necessary.

A less prominent constraint imposed by the theorem however is that it applies only from initial states where termination is \AS\ --- even though informal, operational reasoning might suggest a weaker requirement as here:

\medskip\begin{quote}
\begin{minipage}{.8\linewidth}
If a loop body is guaranteed never to decrease the expected value of a bounded random variable, i.e.\ has the sub-martingale property, then that random variable's (conditional) expected value on termination (if it occurs) is no less than the value it had in the initial state.
\footnotemark
\end{minipage}
\hfill\emph{(Is not true.)}
\end{quote}
\footnotetext{Here for comparison is the actual (informal) requirement:
\begin{quote}
\medskip\begin{minipage}{.85\linewidth}
If a loop body is guaranteed never to decrease the expected value of a bounded random variable, and the loop terminates \AS\ from a given initial state, then that random variable's expected value on termination is no less than the value it had in that initial state.
\end{minipage}\hfill\emph{(Is true.)}
\end{quote}}

\medskip
But here is a counter-example to that weaker requirement:
\begin{align*}
	& \ASSIGN{x}{1} \\	
	& \WHILE{x \neq 0} \\
	& \qquad \IFELSE{x{=}1}{\PCHOICE{\ASSIGN{x}{0}}{\NF{1}{2}}{\ASSIGN{x}{2}}}{\ASSIGN{x}{2}} \\
	& \}\quad .
\end{align*}
We take variant $x$, and note the loop invariant $x{\in}\{0,1,2\}$ so that $x$ is bounded. The loop terminates  from initial state $x{=}1$ with probability only $\NF{1}{2}$ --- thus it \emph{might} terminate from there, but its termination is not \AS.

Both branches of the conditional establish a final expected value of $x$ that is no less than (in fact is equal to) its initial value for the conditional --the invariant and guard ensure the input value is either 1 or 2-- so that the loop body's sub-martingale property is satisfied. Yet, for the whole loop, the initial value of $x$ is 1 and the (conditional) expected value on termination (if it occurs) is 0.

\newpage
\section{Remarks on the utility of $\PGCL$}
\subsection{Compositionality\hfill[from \Sec{s0940}]}\label{a1221}
It is reasonable to ask why statements like \Eqn{e0804} in \Sec{s0940}, that is
\[
	p\cdot\iverson{A}~\leq~\wp{\mathit{Com}}{\iverson{B}}~,
\]
could not be written more directly and intuitively
\begin{equation}\label{e0628}
 p\vdash\{A\}\mathit{Com}\{B\}~,
\end{equation} 
meaning by analogy with Hoare logic ``the Hoare triple $\{A\}\mathit{Com}\{B\}$ holds with probability $p$.'' Why bother with all the machinery of expectation transformers?

As shown by \citet[App A]{McIver:05a}, the reason is that the approach of \Eqn{e0628} is not compositional if $\mathit{Com}$ contains demonic nondeterminism. The more general expectation-transformer generalisation we use here \emph{is} compositional for probabilistic- and demonic choice together. 

\subsection{Linearity (or not) of Expectation Transformers \hfill[from \Sec{s0940}]}\label{a0950}
At \Eqn{e1752} in \Sec{s0940} we stated that expectation transformers are scaling --that is they distribute multiplication by a scalar-- and that that property was the analogue of multiplication's distributing through expected value in elementary probability theory.

Note however that the elementary property of distribution of \emph{addition} through expected value does not translate directly into a probabilistic $\wpsymbol$ rule if demonic choice is present. This additivity failure is the analogue of disjunctivity's failure for standard demonic programs: for program $\mathit{Com} = \NDCHOICE{\ASSIGN{x}{\True}}{\ASSIGN{x}{\False}}$ both $\wp{\mathit{Com}}{x}$ and $\wp{\mathit{Com}}{\neg x}$ are $\False$; but $\wp{\mathit{Com}}{(x\lor\neg x)}$ is $\True$, i.e.\ not equal to $\False\lor\False$.
\par A probabilistic version would be
\begin{quote}
Consider program $\mathit{Com} = \NDCHOICE{\PCHOICE{\ASSIGN{x}{\True}}{\NF{1}{3}}{\ASSIGN{x}{\False}}}{\PCHOICE{\ASSIGN{x}{\False}}{\NF{1}{3}}{\ASSIGN{x}{\True}}}$.\\
Both $\wp{\mathit{Com}}{\iverson{x}}$ and $\wp{\mathit{Com}}{\iverson{\neg x}}$ are $\NF{1}{3}$; but $\wp{\mathit{Com}}{(\iverson{x}{+}\iverson{\neg x})}$ is $1$, i.e.\ not equal to $\NF{1}{3}{+}\NF{1}{3}$.
\end{quote}

\subsection{Semantic vs.\ Syntactic Arguments\hfill[from \Sec{s0906}]}\label{s1031}
One way of comparing our new rule with others (including our own earlier 
\Thm{thm:lem-2-7-1}) is simply to say that whereas others often require progress to be bounded away from zero (when the state-space is infinite), we require only that progress be non-zero but \emph{provided that}, if the state-space is indeed infinite, the variant have no accumulation points.

It is the antitone restriction on $p,d$, and their interpretation via the program logic as progress conditions, that allows the proof to be carried out on the program text directly: i.e.\ it helps to avoid having to prove non-accumulation by a separate semantic-based argument that e.g.\ would begin by determining the reachable states and then continue with a mathematical analysis outside the program text.

\newpage
\section{Sketch of Foster's construction\hfill[from \Sec{s1544}]}
\label{a1551}
We sketch the proof of Foster's construction \citeyear{Foster:1952ab} for the existence of an unbounded super-martingale in the case that the transition system satisfies the conditions set out in \Sec{s1544}. This historical work supports our contention in \Cor{thm:ddrw} that \Thm{t1651} will work for the two-dimensional random walk.

We use the notation and definitions from \Sec{s1530} to present Foster's Theorem 2, but adapt the notation to Foster's for ease of checking his proof steps.

Recall that we have assumed that $S_0{=}\{s_0\}$, i.e.\ that termination occurs in a single state, and that we have adjusted (the assumed deterministic) transition system so that it takes $s_0$ to itself.

Write $f^{(t)}_i$ for the probability that $T$ started from $s_i$ reaches $s_0$ for the first time in the $t$-th step and (as Foster does) write $p_{ij}$ for  the probability of transitioning from $s_i$ to $s_j$; more generally write $p^{(t)}_{ij}$ for the probability that it takes $t$ steps to do that. Foster remarks that a simple special case is where time-to-termination is bounded, but notes that such an assumption excludes the symmetric random walk and moves immediately to the more general case.
\footnote{Also \citet{Fioriti:2015} treat the bounded-termination case explicitly.}

For the more general case we note first that for $i{>}0$ we have $f^{(t+1)}_i=\sum_j p_{ij}\cdot f^{(t)}_j$.
So if we were hopefully to proceed simply by setting $V(s_0){=}0$ and $V(s_i) = \sum_{1\leq t} f^{(t)}_i$ for $i{>}0$, then in the latter case we would check the super-martingale property  by calculating
\begin{Reason}
\Step{}{\sum_j p_{ij}\cdot V(s_j)}
\Step{$=$}{\sum_j p_{ij}\cdot \sum_{1\leq t} f^{(t)}_j}
\Step{$=$}{\sum_{1\leq t}\sum_j p_{ij} \cdot f^{(t)}_j}
\StepR{$=$}{above and $i{>}0$}{\sum_{1\leq t}f^{(t+1)}_i}
\StepR{$\leq$}{(actually equal unless $f^{(1)}_i{>}0$)}{
 \sum_{1\leq t}f^{(t)}_i ,
}
\Step{$=$}{V(s_i)~,}
\end{Reason}
so that $V$ would in fact be an exact martingale.
\footnote{Think of the symmetric random walk, where everywhere-1 is an exact martingale except when $|x|{=}1$, where it is a proper super-martingale.}
But this looks too good to be true, and indeed it is: in fact $ \sum_{1\leq t} f^{(t)}_i = 1$ by assumption, so this is just the special case where $V$ is 1 everywhere except at $s_0$; and the martingale property is exact everywhere, except at states one step away from $s_0$. And this trivial $V$ does not satisfy the progress condition.
\footnote{It is trivial in Blackwell's sense \citeyear{Blackwell:55}, a constant solution.}

Still, the above is the seed of a good idea. Using ``a theorem of Dini'' \cite[Foster's citation (4)]{Knopp:1928aa},
\footnote{There seems to be a typographical error here in Foster's paper, where he writes $\sum_{r=1}^\infty\lambda^{(r)}f^{(r)}_i$ instead of $\sum_{r=1}^\infty\lambda^{(r)}f^{(r)}_1$.}
that
\begin{quote}\it
If $c_n$ is a sequence of positive terms with $\sum_n{c_n}<\infty$, then also
\[
 \sum_n \frac{c_n}{(c_n{+}c_{n+1}+\cdots)^\alpha} \Wide{<} \infty
\]
when $\alpha{<}1$,
\end{quote}
Foster \emph{increases} the $f^{(t)}_i$ terms above by dividing them by {\scriptsize$\sqrt{f^{(t)}_1+f^{(t+1)}_1+\cdots}$}\,, which is non-zero but no more than one,
\footnote{It is the square-root of the probability that $s_1$ does not reach $s_0$ in fewer than $t$ steps.}
and still (as we will see) the new, larger terms still have a finite sum. (A minor detail is that he must show that the sum $f^{(t)}_1+f^{(t+1)}_1+\cdots$ does not become zero at some large $t$ and make terms from then on infinite: his assumption (F7) prevents that by ensuring that from no state does a single transition step go entirely into $S_0$.) With the revised $V$ replacing the earlier ``hopeful'' definition, the calculation above becomes instead
\begin{Reason}
\Step{}{\sum_j p_{ij}\cdot V(s_j)}
\StepR{$=$}{revised definition of $V$, \\ and $V(s_0){=}0$}{
 \sum_{j\geq1} p_{ij}\cdot \sum_{1\leq t}\NF{f^{(t)}_j}{\sqrt{f^{(t)}_1+f^{(t+1)}_1+\cdots}}
}
\Step{$=$}{
 \sum_{1\leq t}\sum_j p_{ij}\cdot  \NF{f^{(t)}_j}{\sqrt{f^{(t)}_1+f^{(t+1)}_1+\cdots}}
}
\Step{$=$}{
 \sum_{1\leq t}\NF{f^{(t+1)}_j}{\sqrt{f^{(t)}_1+f^{(t+`)}_1+\cdots}}
}
\StepR{$=$}{denominator is not increased}{
 \sum_{1\leq t}\NF{f^{(t+1)}_j}{\sqrt{f^{(t+1)}_1+f^{(t+2)}_1+\cdots}}
}
\Step{$\leq$}{
 \sum_{1\leq t}\NF{f^{(t)}_j}{\sqrt{f^{(t)}_1+f^{(t+1)}_1+\cdots}}
}
\Step{$=$}{V(s_i)~.}
\end{Reason}
This is encouraging: but we still must prove (F3) for our revised definition
\footnote{Note the $f$'s in the denominator are subscripted ``1'', not ``$i$''.}
\begin{Equation}\label{e0838}
 V(s_i)\Wide{=}\sum_{1\leq t}\frac{f^{(t)}_i}{\sqrt{f^{(t)}_1+f^{(t+1)}_1+\cdots}}~,
 \mbox{\quad for $i{\geq}1$}
\end{Equation}%
i.e.\ that it's finite for all $i$ and not only for the $i{=}1$ that Dini gave us; and we must show that it approaches infinity as $i$ does.

For the first, Foster proves that $V(s_i){\leq}V(s_1)/p^{(t')}_{1i}$ for any $i{>}1$ and some $t'{>}0$ with $p^{(t')}_{1i}{>}0$, which is one place he uses \Sec{s1544}(\ref{i1543}), in particular that every $s_i$ is accessible from $s_1$.
\par Specifically, he reasons as follows:
\begin{enumerate}
\item For that $t'$ and any $t$ we have $f^{(t'+t)}_1\geq p^{(t')}_{1i}f^{(t)}_i$, because we know that $s_1$'s journey to $s_0$ can go via $s_i$. 
\item The numerator $f_i^{(t)}$ in \Eqn{e0838} can therefore be replaced by $f^{(t'+t)}_1/p^{(t')}_{1i}$ provided $(\leq)$ replaces the equality.
\item The sum in the denominator of \Eqn{e0838} can be adjusted to start at $t'{+}t$ rather than $t$, still preserving the inequality.
\item The overall sum in \Eqn{e0838} of non-negative terms for $V(s_i)$ is now the ``drop the first $t'$ terms suffix'' of that same sum for $V(s_1)$, which we already know to be finite (from Dini), but divided by $p^{(t')}_{1i}$ which we know to be non-zero.
\end{enumerate}

For the second, Foster uses the $\delta$ from \Sec{s1544}(\ref{i1544}), showing that $V(s_i)$ is at least $\NF{(1{-}\delta)}{\sqrt{f^{(t_i)}_1+f^{(t_i+1)}_1+\cdots}}$ where $t_i$ is the number of steps after which $s_i$ reaches $s_0$ with probability at least $\delta$ for the first time. By \Sec{s1544}(\ref{i1544}) that $t_i$ approaches infinity as $i$ does, and thus so does $V(s_i)$.
\par His detailed reasoning is as follows:
\begin{enumerate}
\item Since $t_i$'s tending to infinity is all that is required, any at-most-finite number of $i$'s where $t_i{=}0$ can be ignored. Thus pick $t_i{\geq}1$.
\Cf{I'm not sure why $t_i{\geq}1$ helps, though.}
\item Then $V(s_i)$ is at least $\sum_{1{+}t_i\leq t}\NF{f^{(t)}_i}{\sqrt{f^{(t)}_1+f^{(t+1)}_1+\cdots}}$\,, a suffix of its defining series \Eqn{e0838}.
\item Since the denominators only decrease, we can replace all of the denominators by {\scriptsize${\sqrt{f^{(t_i)}_1+f^{(t_i+1)}_1+\cdots}}$}\, while making the sum only smaller.
\item From (F8) however and the choice of $t_i$ we know that $\sum_{t_i\leq t}f_0^{(t)}$ is no more than $1{-}\delta$. Thus similarly we can replace $f^{(t)}_i$ by $1{-}\delta$ and remove the summation.
\item We are left with $V(s_i)\geq \NF{(1{-}\delta)}{\sqrt{f^{(t_i)}_1+f^{(t_i+1)}_1+\cdots}}$\,, as appealed to above.
\end{enumerate}

\bigskip That completes the proof sketch.

The symmetric two-dimensional random walk satisfies Foster's conditions, and so there is a variant in the style of our \Thm{t1651} --- indeed he constructs it in general terms at \Eqn{e0838}. But it is not in closed form: it depends on the probabilities $f_i^{(t)}$ that surely exist, even though we do not know what they are.

\newpage
\section{Some properties of \Thm{t1928}\hfill[from \Sec{s1521}]} \label{a1339} 

In earlier work \citet{Chakarov:2013}  and \citet{Fioriti:2015} use ``ranking'' super-martingales to prove almost-sure termination.  

\begin{definition}[Ranking super-martingale]\label{s1448}
Expectation $V$ in $\E$, with $V{<}\infty$, is a \emph{ranking} super-martingale for\quad$\WHILEDO{G}{\mathit{Com}}$\quad if it is a super-martingale with the extra condition that there is some $\epsilon{>}0$ such that
\footnote{As in the Case Studies \Sec{s0918}, we use $\awpsymbol$ here.}
\begin{Equation}\label{e1533-1}
V{-} \epsilon \Wide{\geq} \iverson{G\land I}\cdot\awp{\mathit{Com}}{V}~.
\end{Equation}
\end{definition}
We now show that any program that  has a ranking super-martingale for some $\epsilon$ also can be proved with our \Thm{t1928}, because the ranking property of the super-martingale guarantees the existence of $p,d$ that satisfy our progress condition.

\begin{lemma}[Ranking super-martingale and progress]\label{t1525}
Let $V$ in $\E$ be a ranking super-martingale for program \quad$\WHILEDO{G}{\mathit{Com}}$\,. Then there are $p,d$ functions such that $V,p,d$ satisfy the $p,d$-progress condition of \Thm{t1928}.
\begin{proof}
Let $\epsilon{>}0$ satisfy \Eqn{e1533-1}.  Observe first that \Eqn{e1533-1} implies that for any state $\sigma$ satisfying $G\land I$, we must have $V(\sigma){\geq}\epsilon$. Let $R^\ast$ be the infimum of the image of $V$, so that we have $R^\ast{\geq}\epsilon{>}0$.

Define $p,d$ so that for any $R{>}0$ we have $d(R) = \epsilon/2$ and
\begin{align}
  p(R) =\quad&\epsilon/(2R{-}\epsilon)\quad \textit{if}\quad R{\geq}R^\ast\\
          =\quad&\epsilon/(2R^\ast{-}\epsilon)~. \quad \textit{otherwise}
\end{align} 
We show that these definitions satisfy the conditions in \Thm{t1928} for $p, d$ progress.

Given any $\sigma$, set $R=V(\sigma)$; then since $(R{-}\epsilon/2)\cdot{\iverson{V>R{-}\epsilon/2}} \leq V$, and $\awp{\mathit{Com}}{}$ is monotone and scaling,
\footnote{The first inequality is actually an instance of \emph{Markov's Inequality \citep{Grimmett:86}}.}
we have
\begin{Equation}\label{e1751}
(R{-}\epsilon/2)\cdot\iverson{G\land I}\cdot\awp{\mathit{Com}} {\iverson{V>R{-}\epsilon/2}} \Wide{\leq}  \iverson{G\land I}\cdot\awp{\mathit{Com}} {V}  \Wide{\leq} V {-} \epsilon~,
\end{Equation}
where the second inequality follows from \Eqn{e1533-1}.
We now reason:
\begin{Reason}
\Step{}{\wp{\mathit{Com}} {\iverson{V\leq R{-}d(R)}}}
\StepR{$=$}{definition $d$ above}{\wp{\mathit{Com}} {\iverson{V\leq R{-}\epsilon/2}}}
\StepR{$\geq$}{$\one \geq \iverson{G\land I \land V{=}R }$}
{\iverson{G\land I \land V{=}R }\cdot\wp{\mathit{Com}} {\iverson{V\leq R{-}\epsilon/2}}}
\WideStepR{$\geq$}{$\awp{\mathit{Com}}{\iverson{X}} + \wp{\mathit{Com}}{\iverson{\neg X}}\leq 1$; see below.}
{\iverson{G\land I \land V{=}R }\cdot(1 - \awp{\mathit{Com}} {\iverson{V>R{-}\epsilon/2}})}
\StepR{$=$}{\Eqn{e1751}}
{\iverson{G\land I \land V{=}R }\cdot(1 -(V {-} \epsilon)/(R{-}\epsilon/2))}
\StepR{$=$}{non-zero only when $V{=}R$}
{\iverson{G\land I \land V{=}R }\cdot(1 -(R {-} \epsilon)/(R{-}\epsilon/2))}
\StepR{$=$}{arithmetic}
{\iverson{G\land I \land V{=}R }\cdot(\epsilon/(2R{-}\epsilon))}
\StepR{$=$}{definition of $p(R)$, and $R{\geq}R^\ast$}
{\iverson{G\land I \land V{=}R }\cdot p(R)~.}
\end{Reason}
\end{proof}
\end{lemma}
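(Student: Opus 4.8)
The plan is to turn the ranking condition \Eqn{e1533-1} into the $p,d$-progress shape of \Thm{t1928} by a Markov-inequality argument. First I would note that \Eqn{e1533-1} forces $V(\sigma)\geq\epsilon$ at every $\sigma$ satisfying $G\land I$, since $\awp{\mathit{Com}}{V}\geq0$ there; hence the infimum $R^\ast$ of $V$'s image satisfies $R^\ast\geq\epsilon>0$. I would then take $d(R)\Defs\epsilon/2$ (constant, hence strictly positive and antitone) and $p(R)\Defs\epsilon/(2R{-}\epsilon)$ for $R\geq R^\ast$, extended as the constant $\epsilon/(2R^\ast{-}\epsilon)$ for $0<R<R^\ast$. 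A quick check confirms $p$ has type $\Rpos{\to}(0,1]$ and is antitone: on $[R^\ast,\infty)$ the formula $\epsilon/(2R{-}\epsilon)$ is decreasing, and because $R\geq R^\ast\geq\epsilon$ we have $2R{-}\epsilon\geq\epsilon>0$, so $0<p(R)\leq1$ there; the flat left piece matches it continuously at $R^\ast$, so the two pieces together remain antitone and stay in $(0,1]$.

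For the progress inequality itself, fix $R\in\Rspos$; if $R<R^\ast$ then $\iverson{G\land I\land V{=}R}$ is identically zero and there is nothing to prove, so assume $R\geq R^\ast$. From the pointwise bound $(R{-}\epsilon/2)\cdot\iverson{V>R{-}\epsilon/2}\leq V$ I would apply $\awp{\mathit{Com}}{\cdot}$ (monotone and scaling), multiply by $\iverson{G\land I}$, and invoke \Eqn{e1533-1} to obtain
\begin{align*}
	(R{-}\epsilon/2)\cdot\iverson{G\land I}\cdot\awp{\mathit{Com}}{\iverson{V>R{-}\epsilon/2}}\ \leq\ \iverson{G\land I}\cdot\awp{\mathit{Com}}{V}\ \leq\ V{-}\epsilon~.
\end{align*}
Then, prefixing $\wp{\mathit{Com}}{\iverson{V\leq R{-}\epsilon/2}}$ with the harmless $\iverson{G\land I\land V{=}R}\leq\one$, using the complementarity bound $\awp{\mathit{Com}}{\iverson{X}}+\wp{\mathit{Com}}{\iverson{\neg X}}\leq1$ with $X$ the event $V>R{-}\epsilon/2$, and finally substituting the displayed bound and evaluating everything on the support $V{=}R$, I arrive at $\wp{\mathit{Com}}{\iverson{V\leq R{-}d(R)}}\geq\iverson{G\land I\land V{=}R}\cdot\bigl(1-(R{-}\epsilon)/(R{-}\epsilon/2)\bigr)$, whose last factor is exactly $\epsilon/(2R{-}\epsilon)=p(R)$. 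That is the required $p(R)\cdot\iverson{G\land I\land V{=}R}\leq\wp{\mathit{Com}}{\iverson{V\leq R{-}d(R)}}$.

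I expect the main obstacle to be the careful shuffling between the two transformers: the Markov step naturally uses the angelic $\awpsymbol$ (so that demonic choice is resolved to maximise the ``bad'' event $V>R{-}\epsilon/2$ before it is bounded from above), while \Thm{t1928}'s progress condition is stated with the demonic $\wpsymbol$, so the inequality $\wp{\mathit{Com}}{\iverson{\neg X}}\geq1-\awp{\mathit{Com}}{\iverson{X}}$ must be applied at precisely the right place and justified — it holds because on each resolution of the demonic choice $\wpsymbol$ and $\awpsymbol$ agree and the two indicator probabilities sum to at most one, the loop body being straight-line and hence terminating. A secondary, purely analytic point is checking that $p$ is everywhere positive and antitone on all of $\Rpos$, not merely on the values actually attained by $V$; the flat extension below $R^\ast$ is exactly what secures this, and it is needed because \Thm{t1928} demands those properties on the whole positive real line.
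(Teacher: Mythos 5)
Your proposal is correct and follows essentially the same route as the paper's own proof: the same choice $d(R)=\epsilon/2$, the same piecewise $p(R)=\epsilon/(2R{-}\epsilon)$ capped below $R^\ast$, the same Markov-inequality step through $\awpsymbol$ yielding \Eqn{e1751}, and the same complementarity bound $\awp{\mathit{Com}}{\iverson{X}}+\wp{\mathit{Com}}{\iverson{\neg X}}\leq1$ to pass back to $\wpsymbol$. The only differences are cosmetic — you explicitly dispatch the case $R<R^\ast$ and verify $p$'s antitonicity, which the paper leaves implicit.
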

For ``see below'' we note that the property can easily be established for straight-line programs, using a structural induction similar to the one in \Sec{s0918}.

A consequence of \Lem{t1525}'s \Eqn{e1751} is that if a \pGCL\ loop terminates in finite expected time, then there exists a super-martingale satisfying \Thm{t1928}; this follows from the existence of a ranking super-martingale \citep{Fioriti:2015}. 


\newpage
\section{Appendix: Countable- vs.\ Uncountable Branching\hfill[from \Sec{s1130}]}\label{a1713}


\begin{figure}[t]
	\begin{center}
		\let\oldarraycolsep\arraycolsep
		\arraycolsep=0pt
		\begin{tikzpicture}[every state/.append style={thick, inner sep=0pt, minimum size=1.5cm}]
			\path[use as bounding box] (0,-1) rectangle (14,7);
			\path	node at (6,4) {\includegraphics[scale=0.2]{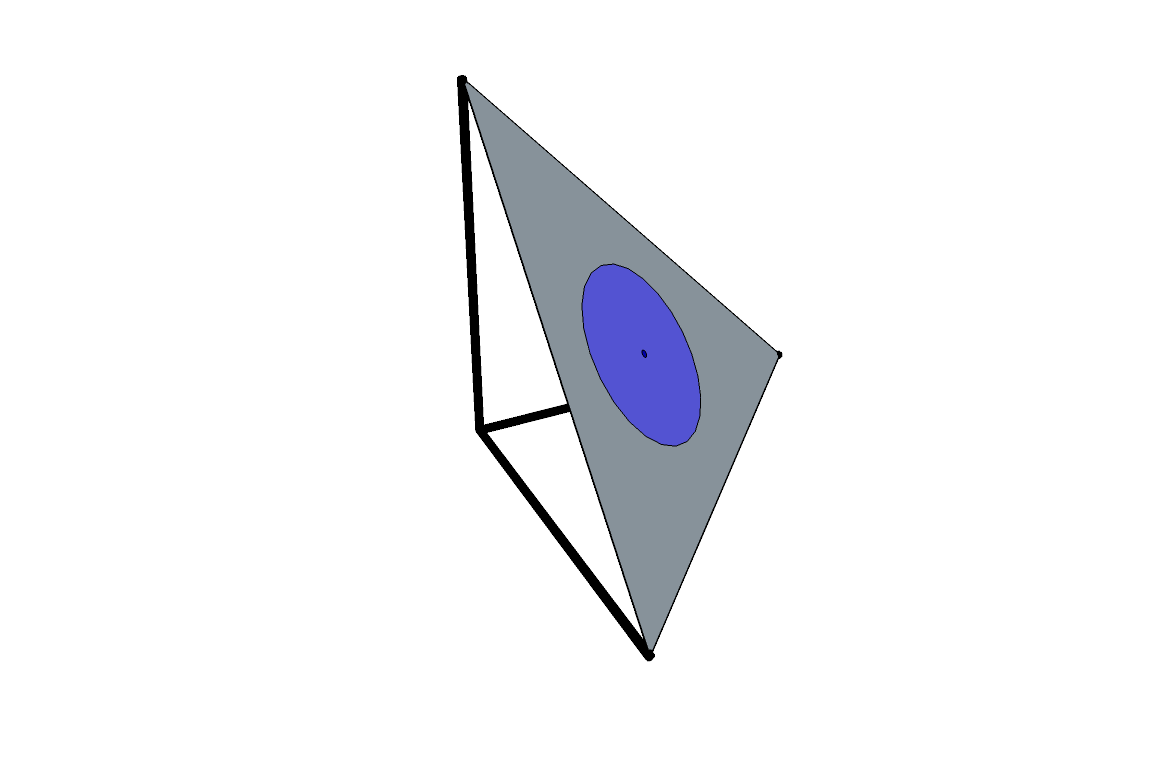}};
			\draw[->] (7.8,6.4) -- +(-2.15,-1);
			\draw[->] (7.8,5.4) -- +(-1.25,-1.1);
			\draw[->] (7.8,2.955) -- +(-1.25,0.9);

			\path	node at (8,6.5) {\makebox[0pt][l]{\small All (full) distributions on $\Sigma$}}
				node at (8,5.5) {\makebox[0pt][l]{\small Uniform distribution $(\NF{1}{3},\NF{1}{3},\NF{1}{3})$ on $\Sigma$}}
				node at (8,3) {\makebox[0pt][l]{\small \parbox{15em}{Uncountable-but-closed subset (blue) of distributions on $\Sigma$}}}
				node at (3,5) {$\Sigma = \{0,1,2\}$}
				node at (4.5,3.65) {$(0,0,0)$}
				node at (5.2,6.55) {$\sigma=2$}
				node at (8.1,4.24) {$\sigma=1$}
				node at (6.5,1.7) {$\sigma=0$}
				node at (7,0) {	\parbox{.9\linewidth}{\small
							In general a point $(x,y,z)$ represents the distribution $\sigma{=}0$ with probability $x$\ldots\ etc.
							Thus point $(1,0,0)$ represents ``$\sigma$ is definitely 0.'' \\
							The disk of distributions (in blue, and including its border) is closed but uncountable;
							and it is not the convex closure of any countable subset.}};
		\end{tikzpicture}
		\arraycolsep=\oldarraycolsep
	\end{center}
	\caption{An uncountable yet still closed set of discrete distributions.}
	\label{f1722}
\end{figure}


The blue set of distributions in \Fig{f1722}, including its border, is a closed disk on the plane $x{+}y{+}z=1$ in the space of discrete distributions over $\Sigma{=}\{0,1,2\}$, where the $x$-co{\"o}rdinate of a point is the probability that the distribution it represents assigns to element $\sigma{=}0$ of the state-space $\Sigma$ etc. Thus point $(1,0,0)$ represents the point distribution where $\sigma$ is 0 (with probability one). The uniform distribution is therefore at point $(\NF{1}{3},\NF{1}{3},\NF{1}{3})$, the centre of mass of the grey triangle; indeed, considering the triangle on its own, we see that the set of full (i.e.\ summing to one) distributions are represented barycentrically.

All \AST\ programs' final distributions over $\Sigma$ lie on this plane; however loops whose termination probability was less than one would produce sub-distributions ``below'' the plane, more precisely points lying in the proper interior of the tetrahedron whose base is that plane $x{+}y{+}z=1$ and whose apex is the origin $(0,0,0)$. The origin in particular represents the program $\ABORT$, equivalently\quad$\WHILEDO{\true}{\SKIP}$\quad whose termination probability is actually 0.
\footnote{In fact \pGCL\ incorporates Smyth-style upwards closure \cite{Smyth:78}, so that $\ABORT$ would be the whole tetrahedron.}

We believe that \Fig{f1722} is not the output of any \pGCL\ program with finite (or even countably infinite) demonic choice; but a transition system of this kind would still fall within the scope of our termination rule, because the set is closed.

A full account of this geometric view of probabilistic/demonic programs is given is \cite[Ch.~6]{McIver:05a}.

{\Fx
\newpage
\section{Proof of Equivalence of \Eqn{e1801} and \Eqn{e1814} \hfill[from \Sec{s0836}]}\label{s1315}
In this section we establish (actually, confirm) a loop- de-nesting identity that is familiar from non-probabilistic, i.e.\ standard programming; and indeed the proof has the same structure as in the standard case.

We proceed slightly more generally than in \Sec{s0836}, considering two variables $a,b$ that are tested by predicates $a\textrm{?},b\textrm{?}$ and updated by commands $\Delta a,\Delta b$. The identity we establish is that
\[
	\WHILEDO{a\textrm{?}}{\IFELSE{b\textrm{?}}{\Delta b}{\Delta a}}
	\Wide{=}
	\WHILEDO{a\textrm{?}}{\COMPOSE{\WHILEDO{b\textrm{?}}{\Delta b}}{\Delta a}} ~,
\]
and the principal reasoning step is that if the $\Delta b$ path is taken on the left then $a\textrm{?}$ is still true afterwards, so that the re-entry test for the loop is redundant. That idea is captured by the inner loop on the right, where the unnecessary $a\textrm{?}$-tests have been removed. (The remaining steps are routine unfolding and folding of loops.)

In this presentation, to exploit the connection with standard program algebra, we allow an $\mathtt{if}$-test to be a probability, so that $\PCHOICE{\Delta b}{b\textrm{?}}{\Delta a}$ can be written $\IFELSE{b\textrm{?}}{\Delta b}{\Delta a}$.
\footnote{This works nicely because we have already that
$\quad
	\IFELSE{c}{P}{Q} = \PCHOICE{P}{\iverson{c}}{Q}
\quad$
without any hand-waving at all. (Recall that $\iverson{c}$ for Boolean $c$ is 1 if $c$ else 0.))}
This means for example that the defining identity
\[
 \WHILEDO{p}{C} \Wide{=} \IF{p}{\{\COMPOSE{C}{\WHILEDO{p}{C}}\}} \Wide{=} \PCHOICE{\COMPOSE{C}{\WHILEDO{p}{C}}}{p}{\SKIP}\quad,
\]
for a probabilistically guarded loop, is very natural.
We calculate first
\begin{align*}
&\WHILEDO{a\textrm{?}}{\IFELSE{b\textrm{?}}{\Delta b}{\Delta a}} \tag*{\Eqn{e1801}} \\
=\quad&\RECDO{P}{\IF{a\textrm{?}}{\{\COMPOSE{\{\IFELSE{b\textrm{?}}{\Delta b}{\Delta a}\}}{P}\}}} \tag*{defn.\ $\mathtt{while}$} \\
\makebox[0pt][r]{$\dagger$\quad}
=\quad&\RECDO{P}{\IF{a\textrm{?}}{\{\IFELSE{b\textrm{?}}{\{\COMPOSE{\Delta b}{P}\}}{\{\COMPOSE{\Delta a}{P}\}}\}}}\quad, \tag*{move $P$ into $\mathtt{if}$}
\end{align*}
and then continue with
\begin{align*}
&\WHILEDO{a\textrm{?}}{\COMPOSE{\WHILEDO{b\textrm{?}}{\Delta b}}{\Delta a}} \tag*{\Eqn{e1814}}\\
=\quad& \RECDO{Q}{\IF{a\textrm{?}}{\COMPOSE{\{\COMPOSE{\WHILEDO{b\textrm{?}}{\Delta b}}{\Delta a}\}}{Q}}} \tag*{defn.\ $\mathtt{while}$} \\
=\quad& \RECDO{Q}{\IF{a\textrm{?}}{\COMPOSE{\{\COMPOSE{\RECDO{R}{\IF{b\textrm{?}}{\{\COMPOSE{\Delta b}{R}\}}}}{\Delta a}\}}{Q}}} \tag*{defn.\ $\mathtt{while}$} \\
=\quad& \RECDO{Q}{\IF{a\textrm{?}}{\COMPOSE{\RECDO{R}{\IFELSE{b\textrm{?}}{\{\COMPOSE{\Delta b}{R}\}}{\Delta a}}}{Q}}} \tag*{see below\makebox[0pt][l]{\quad\$}} \\
=\quad& \RECDO{Q}{\IF{a\textrm{?}}{\RECDO{R}{\IFELSE{b\textrm{?}}{\{\COMPOSE{\Delta b}{R}\}}{\{\COMPOSE{\Delta a}{Q}\}}}}} \tag*{again below\makebox[0pt][l]{\quad\$}} \\
=\quad& \RECDO{Q}{\IF{a\textrm{?}}{\RECDO{R}{\IFELSE{b\textrm{?}}{\{\COMPOSE{\Delta b}{Q}\}}{\{\COMPOSE{\Delta a}{Q}\}}}}} \tag*{$R{=}Q$ when $a\textrm{?}$} \\
=\quad& \RECDO{Q}{\IF{a\textrm{?}}{\{\IFELSE{b\textrm{?}}{\{\COMPOSE{\Delta b}{Q}\}}{\{\COMPOSE{\Delta a}{Q}\}}}\}} \tag*{remove unused $R$} \quad,
\end{align*}
whence by alpha conversion to $(\dagger)$ above we have the equality $\textrm{\Eqn{e1801}}$=$\textrm{\Eqn{e1814}}$ we sought.
\footnotemark 

The ``see below'' remarks at \$\ above refer to the routine identity
\[
	\COMPOSE{\RECDO{X}{\IFELSE{B}{(\COMPOSE{\mathit{then}}{X})}{\mathit{else}}}}{\mathit{after}}
	\quad=\quad
	\RECDO{}{\IFELSE{X}{(\COMPOSE{\mathit{then}}{X})}{(\COMPOSE{\mathit{else}}{\mathit{after}})}} \quad,
\]
shown by equating the iterates in the $\sup$-expression for the least fixed point, just as in standard program algebra.
}

\newpage
\begin{acks}                            
McIver and Morgan are grateful to David Basin and the Information Security Group at ETH Z{\"u}rich for hosting a six-month stay in Switzerland, during part of which this work began. And thanks particularly to Andreas Lochbihler, who shared with us the probabilistic termination problem that led to it. They acknowledge the support of \grantnum{ARC grant}{ARC grant DP140101119}.
  
Part of this work was carried out during the Workshop on Probabilistic Programming Semantics at McGill University's Bellairs Research Institute on Barbados organised by Alexandra Silva and Prakash Panangaden. 

Kaminski and Katoen are grateful to Sebastian Junges for spotting a flaw in \Sec{s0838}.
\end{acks}
\footnotetext{
The explicitly recursive versions of \Eqn{e1801} and \Eqn{e1814} are
\begin{Equation}\label{e1811}
	\begin{array}{l}
		\{n\geq1\} \\
		\ASSIGN{x}{1} \\
		\REC{P} \\
		\quad\IF{x{\neq}0} \\
		\quad\quad\PCHOICE{
					\PCHOICE{\ASSIGN{x}{x{-}1}
					}{\NF{1}{2}
					}{\ASSIGN{x}{x{+}1}
					}
				}{\NF{1}{n}
				}{\ASSIGN{n}{n{+}1}
				}\\
		\quad\quad P \\
		\}
	\end{array}
\end{Equation}
and
\begin{Equation}\label{e1813}
	\begin{array}{l}
		\{n\geq1\} \\
		\ASSIGN{x}{1} \\
		\REC{Q} \\
		\quad\IF{x{\neq}0} \\
		\quad\quad	\RECDO{R}{\PCHOICE{\SKIP}{\NF{1}{n}}{\COMPOSE{\ASSIGN{n}{n{+}1}}{R}}} \\
		\quad\quad	\PCHOICE{\ASSIGN{x}{x{-}1}
						}{\NF{1}{2}
						}{\ASSIGN{x}{x{+}1}
					} \\
		\quad\quad Q \\
		\}\quad,
	\end{array}
\end{Equation}
}

\newpage

\end{document}

\EndDocument
\newpage
\MakeGreenRoom 
\ifNoGreenRoom\relax\else 

\bigskip\Ct{In the Green Room, the coloured horizontal lines separate contributions, with the colour indicating who's responsible for the material that follows it.}

\bigskip
\A{\Divider}

\section{Notes on the 2 dimensional random walk}\label{s1743}

To tie up the last little piece, we need to show that the variant constructed by the two dimensional random walk specifically satisfies the progress condition \Itm{i1651-3} for $V$. This will follow if, for example, there is a non-zero probability that the mover lands on a state with strictly smaller variant. This is what we shall show for a \emph{slight variation} of the 2dSRW.

From Foster's paper, we find the following equation on page 3, used to show the super-martingale property of the constructed variant. I will use Foster's notation to make sure that nothing is lost in translation: it is
\[
 	\sum_{j\geq 0} p_{ij}x_j
	= \sum_{j\geq 0} p_{ij} \sum_{r\geq 1}\lambda^{(r)}f^{(r)}_j
	= \sum_{r\geq 1}\lambda^{(r)}f^{(r+1)}_i
	\leq \sum_{r\geq 1}\lambda^{(r+1)}f^{(r+1)}_i
	\leq  \sum_{r\geq 1}\lambda^{(r)}f^{(r)}_i = x_i ~,
\]
where we must show that one of the two inequalities is strict for the special case of the 2dSRW, which would mean that at least one one of the neighbours of state labelled $s_i$ has one of its variant values ${s_j}$ with $V(s_i){>}V(s_j)$.
In fact the first inequality is strict if  $\lambda^{(r)}{>}\lambda^{(r')}$ whenever $r{>}r'$, because then we are comparing two sums of terms where each corresponding terms are strictly larger --- we just need that $f^{(r)}_i{>}0$ for at least one $r$. For that we recall the definition
\[
	\lambda^{(r)} \Wide{=} 1/\sqrt{\sum_{s \geq r} f_1^{(s)}}\quad,
\]
and then argue that $\lambda^{(r)}{<}\lambda^{(r')}$ if and only if $\sum_{s \geq r'} f_1^{(s)}<\sum_{s \geq r} f_1^{(s)}$ when $r'{>}r$. This for example is true if $f_1^{r}{>}0$ for all $r{>}1$. But recall that $f_1^{r}$ is the probability that starting from state labelled $1$ the walker reaches the origin for the first time at the $r$'th step. 

In fact this is not true for the walker described in Foster,
\footnote{\Cx\ldots because $f_1^{r}$ is zero for all even $r$.}
but it is true for e.g.\ the walker that has a $1/2$ chance of staying where it is, and a $1/8$ chance of moving to each of its four neighbours. This slight variation of the walker does have $f_1^{r}{>}0$ for all $r{>}1$.
\Cf{That's a nice, symmetric and elegant modification. But it would be enough even if only State 1 was equipped with a loper-loop, admittedly a bit of a hack: for then the walker could ``dally'' there for $r{-}1$ steps, and on the very last, i.e.\ the $r$th step, move finally to the origin. Nicht wahr?}

Therefore this version of the walker satisfies our rule, and still does not seem to be within reach of others' rules.

Oh, but wait! If we use this $V$ on the original 2dSRW then we still obtain the strict inequality, so this also works for the original presentation.
 \Af{Finally we also note that \Thm{t1928} can be used in compositional arguments. What do we mean here exactly?}

\bigskip
\B{\Divider}
\section{Material about the $\nabla$-rule}
\begin{definition}[$\nabla$-Rule]
	A quasi variant $V$ for $\WHILEDO{G}{C}$ is called a \emph{$\nabla$-Variant for $\WHILEDO{G}{C}$}, iff there exists a function $\nabla\colon \Rpos \To \Rpos$, such that
	\begin{enumerate}
		\item
			$\nabla$ evaluates to zero exactly at zero, i.e.\
			\begin{align*}
				\nabla(x) ~{}={}~ 0 \quad\text{iff}\quad x ~{}={}~ 0~,\quad \ and
			\end{align*}
		\item 
			$\nabla$ is antitone, i.e.\
			\begin{align*}
				\forall\, 0 < v \leq v'\colon\quad \nabla(v') ~{}\leq{}~ \nabla(v)
			\end{align*}
		\item
			$V$, $p$, and $d$ satisfy a decrease condition, namely
			\begin{align*}
				\iverson{G} \cdot \wp{C}{V} ~{}\leq{}~ V -  \lambda \sigma \textbf{.}\, \nabla\big(V(\sigma)\big)~. \tag*{$\triangle$}
			\end{align*}
	\end{enumerate}
\end{definition}

\begin{theorem}[$\nabla$-Variants Witness Almost-Sure Termination]
\label{thm:n}
If $V$ is a $\nabla$-variant for $\WHILEDO{G}{C}$, then $\WHILEDO{G}{C}$ terminates universally almost-surely.
\end{theorem}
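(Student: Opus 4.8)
The plan is to show that every $\nabla$-variant induces a quasi-variant together with functions $p,d$ meeting all four hypotheses of \Thm{t1651}, and then to invoke that theorem. Throughout I read the decrease condition (item~3 of the $\nabla$-Variant definition) with the angelic $\awpsymbol$ of \Sec{s0918}, exactly as the ranking super-martingale condition \Eqn{e1533-1} in \Lem{s1448} is written; this reading is what is needed and is sound, whereas the demonic $\wpsymbol$ version would be too weak once $\mathit{Com}$ contains $\Box$ (the demon could concentrate on a branch that both raises $V$ in expectation and keeps it large). Take $I \eeq \true$, so $I$ is trivially a standard invariant (the loop body being terminating), and keep the same $V$; condition \Itm{i1651-1}, $G\land I\Implies V{>}0$, we take to be part of $V$'s being a quasi-variant (it must be assumed — otherwise $V\equiv 0$ would be a $\nabla$-variant of every loop). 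The super-martingale condition \Itm{i1651-4} then follows \emph{a fortiori}: since $\nabla\geq 0$, item~3 gives $\iverson{G}\cdot\awp{\mathit{Com}}{V}\leq V - \lambda\sigma\textbf{.}\,\nabla(V(\sigma))\leq V$, and \Eqn{e0909} (for a straight-line body, or the $H{\ominus}V$ reformulation via \Lem{l1331} in general) turns this into $\iverson{G\land I}\cdot(H{\ominus}V)\leq \wp{\mathit{Com}}{(H{\ominus}V)}$ for every $H$.

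The substantive step is manufacturing $p,d$ so that the progress condition \Itm{i1651-3} holds. Fix $R\in\Rspos$ and any state $\sigma$ with $G(\sigma)$ and $V(\sigma){=}R$. Item~3 gives $\awp{\mathit{Com}}{V}(\sigma)\leq R - \nabla(R)$ along every demonic branch, and since expectations are non-negative this already forces $\nabla(R)\leq R$ at every such $\sigma$. I would put $d(R)\eeq\nabla(R)/2$ and run Markov's inequality exactly as in the proof of \Lem{t1525}: from $(R - \nabla(R)/2)\cdot\iverson{V > R - \nabla(R)/2}\leq V$, monotonicity and scaling of $\awp{\mathit{Com}}{\cdot}$, and $\awp{\mathit{Com}}{V}(\sigma)\leq R - \nabla(R)$, one gets $\wp{\mathit{Com}}{\iverson{V\leq R - d(R)}}(\sigma)\geq 1 - \tfrac{R-\nabla(R)}{R - \nabla(R)/2} = \tfrac{\nabla(R)/2}{R - \nabla(R)/2}$ (using $\awp{\mathit{Com}}{\iverson{X}} + \wp{\mathit{Com}}{\iverson{\neg X}}\leq 1$). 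So $p(R)\eeq \tfrac{\nabla(R)/2}{R - \nabla(R)/2}$ works at such states, and it lies in $(0,1]$ precisely because $0 < \nabla(R)\leq R$.

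It remains to see that $p$ and $d$ can be chosen \emph{globally} — defined, positive and antitone on \emph{all} of $\Rspos$, not merely on variant-values attained at reachable guard-states — which is the subtlety of \Sec{s1643}. On the set $\{R : \nabla(R)\leq R\}$ the formulas above are already fine: $d(R)=\nabla(R)/2$ is antitone since $\nabla$ is, and $p(R)$ is an antitone numerator over a nondecreasing, strictly positive denominator, hence antitone. The complementary set $\{R : \nabla(R) > R\}$ is downward closed — if $\nabla(R)>R$ and $R'<R$ then $\nabla(R')\geq\nabla(R)>R>R'$ — so it is an interval $(0,R^{\ast})$; but on it no $G$-state has $V{=}R$, so \Itm{i1651-3} is vacuous there, and I would simply extend $p,d$ on $(0,R^{\ast})$ by their threshold values $p(R^{\ast}),d(R^{\ast})$ (or by arbitrary constants if $R^{\ast}{=}\infty$, which forces $G$ to hold only where $V{=}0$), keeping both functions antitone and strictly positive. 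All four hypotheses of \Thm{t1651} are then met, and that theorem delivers $\iverson{I}\leq\wp{\WHILEDO{G}{\mathit{Com}}}{\one}$, i.e.\ universal almost-sure termination.

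The main obstacle I anticipate is precisely this last point — pinning down the Markov constants so the resulting $p,d$ are defined, non-zero and antitone on the \emph{whole} positive real line rather than only on realized variant-values; this is the soundness-critical gap flagged in \Sec{s1643}, and it is why the $\nabla\leq\mathrm{id}$ edge case must be handled explicitly. A secondary care point, already absorbed above, is insisting on the angelic reading of the decrease condition, without which neither the super-martingale hypothesis nor the progress hypothesis of \Thm{t1651} would follow in the demonic setting.
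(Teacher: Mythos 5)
The paper itself offers no proof of \Thm{thm:n} (the proof body is a placeholder deferring to the arXiv version), so the fair comparison is with the paper's closest analogue, \Lem{t1525} in \App{a1339}, which shows that a \emph{constant}-$\epsilon$ ranking super-martingale satisfies $p,d$-progress via Markov's inequality; your proof is exactly that argument generalised to the state-dependent decrease $\nabla(V)$, followed by an appeal to \Thm{t1651}, and it is correct. Your choices $d(R)=\nabla(R)/2$ and $p(R)=\frac{\nabla(R)/2}{R-\nabla(R)/2}$ specialise to the paper's $\epsilon/2$ and $\epsilon/(2R-\epsilon)$ when $\nabla\equiv\epsilon$, and your treatment of the region $\{R:\nabla(R)>R\}$ correctly addresses the ``$p,d$ must be defined and antitone on \emph{all} of $\Rspos$'' subtlety of \Sec{s1643}, where \Lem{t1525} uses the analogous constant extension below $R^\ast$. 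Your insistence on the angelic reading of the decrease condition is not merely convenient but necessary: with the literal demonic $\wpsymbol$ in item~3, the body $\NDCHOICE{\ASSIGN{x}{x-1}}{\SKIP}$ with $V=x$ and $\nabla\equiv 1$ on positives would qualify as a $\nabla$-variant although the demon can refuse to terminate, so under that reading the statement fails for demonic bodies; your repair matches the paper's own practice (e.g.\ \Eqn{e1533-1}) of stating such decrease conditions with $\awpsymbol$. Two minor points, neither a real gap: your quantities rest on $\awp{\mathit{Com}}{\iverson{X}}+\wp{\mathit{Com}}{\iverson{\neg X}}\leq 1$ and on $\true$ being an invariant, both of which the paper itself only justifies for terminating straight-line bodies, so the same tacit restriction applies to you; and at the threshold of the extension region, if $\nabla(R^\ast)>R^\ast$ then $p(R^\ast),d(R^\ast)$ as given by your formulas are undefined or out of range, so one should extend by the suprema (right-hand limits) of $p,d$ over $\{R:\nabla(R)\leq R\}$ rather than by the values \emph{at} $R^\ast$ --- a one-line patch that preserves positivity, antitonicity and $p\leq 1$.
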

\begin{proof}
See arXiv?!
\end{proof}

\newpage\A{\Divider}
\section{Relation between \Thm{t1651} and Chatterjee and Fu \Cite{C \& F}}

Let $V$ in $\E$ be a ranking super-martingale for program \quad$\WHILEDO{G}{\mathit{Com}}$\, and suppose that there is some $\delta >0$ such that, for all $s$:
\begin{equation}\label{e1723}
\wp{\mathit{Com}}{|V{-}R|}(s) ~~\geq~~ \delta~,
\end{equation}
whenever $V(s)=R$ and $s$ satisfies $G$.   Assume also that there is no non-determinism in ${\mathit{Com}}$.
\Af{I don't think we have to assume this, but I've left it like this for now...}
 Under these assumptions,  $V$ satisfies $p, d$ progress: \Def{def:pd-var}.

\begin{proof}
Let $s^{-}$ be the subset of states defined $s'\in s^{-}$ if and only if the transition probability $p_{s'}= \wp{\mathit{Com}}{\iverson{s'}}(s)>0$ and $V(s') < V(s)$. Similarly let 
 $s^{+}$ be the subset of states defined $s'\in s^{+}$ if and only if $p_{s'}>0$ and $V(s') \geq V(s)$.  Thus $s^{+}\cup s^{-}$ is the subset of states reachable from $s$ in  single execution of ${\mathit{Com}}$, partitioned into those that strictly decrease the super-martingale ($s^{-}$), and those that increase it ($s^{+}$).

Now,  the super-martingale condition, and \Eqn{e1723} imply the following:
\begin{enumerate}
\item $s^{-}$ is not empty;
\item $\sum_{s'\in s^{-}}p_{s'}\cdot V(s') + \sum_{s'\in s^{+}}p_{s'}\cdot V(s') \leq V(s)$~;
\item $\sum_{s'\in s^{-}}p_{s'}\cdot(V(s){-}V(s')) + \sum_{s'\in s^{+}}p_{s'}\cdot (V(s'){-}V(s)) \geq \delta$~.
\end{enumerate}
\medskip

Observe that (2) is equivalent to
\[
\sum_{s'\in s^{-}}p_{s'}\cdot(V(s){-}V(s')) -  \sum_{s'\in s^{+}}p_{s'}\cdot (V(s'){-}V(s)) \geq 0~,
\]
and therefore that (2) and (3) together imply:

\begin{equation}\label{e1753-b}
\sum_{s'\in s^{-}}p_{s'}\cdot(V(s){-}V(s')) \geq \delta/2~.
\end{equation}
\end{proof}

Next we further partition $s^{-}$ into two sets $A$ and $B$, defined $s'\in A$ if $V(s){-}V(s') > \delta/4$ and $s'\in B$ if $V(s){-}V(s') \leq \delta/4$. By \label{e1753} and that $\sum_{s'\in s^{-}}p_{s'} \leq 1$ we must have that $A$ is not empty. We now reason that $\sum_{s'\in A}p_{s'} \geq \delta/(4\cdot V(s))$, as follows:

\begin{Reason}
\StepR{}{\Eqn{e1753-b} and $s^{-} = A \cup B$}
{\sum_{s'\in A}p_{s'}\cdot(V(s){-}V(s')) +  \sum_{s'\in B}p_{s'}\cdot(V(s){-}V(s')) \geq \delta/2}
\StepR{implies}{$(V(s)-V(s') \leq V(s)$}
{\sum_{s'\in A}p_{s'}\cdot V(s) +  \sum_{s'\in B}p_{s'}\cdot(V(s){-}V(s')) \geq \delta/2}
\StepR{implies}{$s'\in B \Rightarrow V(s){-}V(s') \leq \delta/4$}
{\sum_{s'\in A}p_{s'}\cdot V(s) +  \sum_{s'\in B}p_{s'}\cdot \delta/4 \geq \delta/2}
\StepR{implies}{$\sum _{s'\in B }p_{s'} \leq 1$}
{\sum_{s'\in A}p_{s'}\cdot V(s) +  \delta/4 \geq \delta/2}
\StepR{implies}{Arithmetic}
{\sum_{s'\in A}p_{s'}\cdot V(s) \geq \delta/4}
\StepR{implies}{Arithmetic}
{\sum_{s'\in A}p_{s'} \geq \delta/(4\cdot V(s))~.}
\end{Reason}

Observe now that $\sum_{s'\in A}p_{s'}$ is the probability that the super-martingale $V$ is decreased under execution of ${\mathit{Com}}$ by at least $\delta/4$ from a given $s$.  

Finally we conclude that $p(v)= \delta/(4\cdot v) \min 1$ and $d(v)= \delta/4$ together satisfy $p,d$ progress.

\fi 

\newpage
\MakeEndNotes 

\newpage
\MakeSargasso 
\ifNoSargasso\relax\else 
\bigskip
\Ct{\emph{Sargasso} contains stuff that probably won't make it into this paper (or book), and  might even be thrown away eventually. But we keep it  handy in case we change our minds or want to copy-paste bits of it. Putting it \emph{here} during development is really important: it means you don't have to wonder which (other) file it was put in for ``safe keeping". If Sargasso gets to big, is can always be suppressed by putting an \texttt{$\backslash$end\{document\}} just before it. So you don't have to see it every time: but still you know where to look for that interesting thing you wrote a few months ago\ldots}

\C{\include{EndNoteH}}
\B{\include{old-hacky-proof}}

\newpage
{\Cx
\begin{definition}[Characteristic Functionals of $\mathtt{while}$-Loops]
Let $G \subseteq \Sigma$, $C \in \PGCL$, and $f \in \E$. Then we call
\begin{align*}
	\charwp{G}{C}{f}(X) ~{}={}~ \iverson{\neg G}\cdot f + \iverson{G} \cdot \wp{C}{X}
\end{align*}
the characteristic functional of $\wp{\WHILEDO{G}{C}}{f}$.
Note that by the Kleene Fixedpoint Theorem \cite{kleene}, we have
\begin{align*}
	\sup_{n \in \Nats}~\charwpn{G}{C}{f}{n}(\zero) ~{}={}~ \wp{\WHILEDO{G}{C}}{f}~. \tag*{$\triangle$} 
\end{align*}
\end{definition}

\begin{proof}[Proof of \autoref{thm:facts}]
For proving \autoref{thm:facts}.(\ref{thm:facts-1}), consider 
\begin{align*}
	\charwp{G}{C}{\one}(\zero) 	&~{}={}~ \iverson{\neg G}\cdot \one + \iverson{G} \cdot \wp{C}{X} ~{}={}~ \iverson{\neg G}\cdot \iverson{\neg G} + \iverson{G} \cdot \wp{C}{X} ~{}={}~ \charwp{G}{C}{\iverson{\neg G}}(\zero)
\end{align*}
and therefore
\begin{align*}
	\wp{\WHILEDO{G}{C}}{\iverson{\one}}  ~{}={}~ \sup_{n \in \Nats}~\charwpn{G}{C}{\one}{n}(\zero) ~{}={}~ \sup_{n \in \Nats}~\charwpn{G}{C}{\iverson{\neg G}}{n}(\zero) ~{}={}~ \wp{\WHILEDO{G}{C}}{\iverson{\neg G}}~.
\end{align*}
For proving  \autoref{thm:facts}.(\ref{thm:facts-2}), consider the two characteristic functionals
\begin{align*}
	\charwp{0 < f \leq H}{C}{\iverson{f = 0}}(X) &~{}={}~ \iverson{f = 0 \Jr{\land}{\lor} H < f}\cdot \iverson{f = 0} + \iverson{0 < f \leq H} \cdot \wp{C}{X}\\
	&~{}={}~ \iverson{f = 0} + \iverson{0 < f \leq H} \cdot \wp{C}{X}
	\intertext{and}
	\charwp{0 < f}{C}{\iverson{f = 0}}(X) &~{}={}~ \iverson{f = 0} + \iverson{0 < f} \cdot \wp{C}{X}~.
\end{align*}
Since $\iverson{0 < f \leq H} \leq \iverson{0 < f}$, we see that $\charwp{0 < f \leq H}{C}{\iverson{f = 0}}(X) \leq \charwp{0 < f}{C}{\iverson{f = 0}}(X)$ for every $X \in \E$ and thus
\begin{align*}
	\wp{\WHILEDO{0 < f \leq H}{C}}{\iverson{f=0}}  &~{}={}~ \sup_{n \in \Nats}~\charwpn{0 < f \leq H}{C}{\iverson{f = 0}}{n}(\zero) \\
	&~{}\leq{}~ \sup_{n \in \Nats}~\charwpn{0 < f}{C}{\iverson{f = 0}}{n}(\zero) ~{}={}~ \wp{\WHILEDO{0 < f}{C}}{\iverson{f = 0}}~.\tag*{\qedsymbol}
\end{align*}
\renewcommand{\qedsymbol}{}
\end{proof}

{\Cx
\begin{proof}[Alternative proof 1 of \autoref{thm:facts}.(\ref{thm:facts-2})]
We show that 
\begin{align}
	& \wp{\WHILEDO{G\land V{\leq}H}{\mathit{Com}}}{\iverson{\neg G}} \label{e1057-1} \\
	\leq\quad & \wp{\WHILEDO{G}{\mathit{Com}}}{\iverson{\neg G}} \label{e1057-2}
\end{align}
using the general rule for fixed points that $F(g){\leq}g\Implies \mu F{\leq}g$. In this case $f$ is \Eqn{e1057-1} and $F$ its defining functional, with $g,G$ for \Eqn{e1057-2}; and we are showing that $f{\leq}g$.
To apply the general fixed-point rule, we must therefore establish
\begin{align}
	& \wp{(\IF{G\land V{\leq}H}(\COMPOSE{\mathit{Com}}{\WHILEDO{G}{\mathit{Com}})})}{\iverson{\neg G}} \label{e1127-1} \\
	\leq\quad & \wp{\WHILEDO{G}{\mathit{Com}}}{\iverson{\neg G}} \label{e1127-2} \\
	=\quad & \wp{(\IF{G}(\COMPOSE{\mathit{Com}}{\WHILEDO{G}{\mathit{Com}})})}{\iverson{\neg G}}~, \label{e1127-3}
\end{align}
where we have used $g{=}G(g)$ to make rewrite \Eqn{e1127-2} to \Eqn{e1127-3}, whose comparison with \Eqn{e1127-1} is easier.

Now if $V{\leq}H$ in the (initial) state to which these $\wpsymbol$'s are applied, then \Eqn{e1127-1} and \Eqn{e1127-3} are equal, since the\quad$\IF{G\land V{\leq}H}$\quad of the former reduces to the\quad$\IF{G}$\quad of the latter, and they are otherwise textually identical. But if initially $V{>}H$ then \Eqn{e1127-1} is $\iverson{\neg G}$, and we know that \Eqn{e1127-3} is at least $\iverson{\neg G}$ in any state.
\end{proof}
}

{\Cx
\begin{proof}[Alternative proof  2 of \autoref{thm:facts}.(\ref{thm:facts-2})]
Let $A,B$ be any two predicates on the state. We show that 
\begin{align}
	& \wp{\WHILEDO{A\land B}{\mathit{Com}}}{\iverson{\neg A}} \label{e1057B-1} \\
	\leq\quad & \wp{\WHILEDO{A}{\mathit{Com}}}{\iverson{\neg A}} \label{e1057B-2}
\end{align}
using the general rule for fixed points that $F(g){\leq}g\Implies \mu F{\leq}g$. In this case $f$ is \Eqn{e1057B-1} and $F$ its defining functional, with $g,G$ for \Eqn{e1057B-2}; and we are showing that $f{\leq}g$.
To apply the general fixed-point rule, we must therefore establish
\begin{align}
	& \wp{(\IF{A\land B}(\COMPOSE{\mathit{Com}}{\WHILEDO{A}{\mathit{Com}})})}{\iverson{\neg A}} \label{e1127B-1} \\
	\leq\quad & \wp{\WHILEDO{A}{\mathit{Com}}}{\iverson{\neg A}} \label{e1127B-2} \\
	=\quad & \wp{(\IF{A}(\COMPOSE{\mathit{Com}}{\WHILEDO{A}{\mathit{Com}})})}{\iverson{\neg A}}~, \label{e1127B-3}
\end{align}
where we have used $g{=}G(g)$ to make rewrite \Eqn{e1127B-2} to \Eqn{e1127B-3}, whose comparison with \Eqn{e1127B-1} is easier.

Now if $B$ holds in the (initial) state to which these $\wpsymbol$'s are applied, then \Eqn{e1127B-1} and \Eqn{e1127B-3} are equal, since the\quad$\IF{A\land B}$\quad of the former reduces to the\quad$\IF{A}$\quad of the latter, and they are otherwise textually identical. But if initially $\neg B$ then \Eqn{e1127B-1} is exactly $\iverson{\neg A}$, and we know that \Eqn{e1127B-3} is at least $\iverson{\neg A}$ from any state.
\end{proof}

The result might surprise at first, that a stronger loop-guard could induce less- rather than more termination: so let $A$ be ``in the desert'' and $B$ be ``still have water''. Then indeed $f$ terminates more probably than $g$, the wrong way 'round for our claimed inequality, because you might run out of water. But termination alone is not enough: it's loop $g$ that is more likely to get you out of the desert.
}
}

\newpage
{\Cx
\begin{enumerate}
\item\label{ztest} Our argument in this paper is done over semantic, that is $\wpsymbol$-images of $\PGCL$ (syntactic) programs. As is well known, such images for \emph{standard} programs have only finite nondeterminism (given that $\GCL$ itself has only binary demonic choice).
\item Our semantic model \cite{McIver:05a} however uses a semantic space that is characterised directly: the ``output set'' of (sub-)distributions is up-closed, convex-closed and Cauchy closed. All the wp-facts we use here are valid in that model: that is, rather than restrict ourselves to wp-images, instead we are using the (proved) fact that all wp-images of $\PGCL$ are in that more general space.
\item Although it's tempting to think that Cauchy closure (i.e.\ topological/metric) corresponds to countable nondeterminism, it is not so. Consider first
\begin{align*}
	& \ASSIGN{x}{1} \\
	& \WHILE{x\neq0} \\
	& \qquad \PCHOICE{\ASSIGN{x}{0}}{\NF{1}{2}}{\ASSIGN{x}{x + 2}} \\
	& \} \\
	& \NDCHOICE{\ASSIGN{x}{x-1}}{\SKIP}~,
\end{align*}
which has uncountably many demonic choices (over geometric-style discrete distributions)
\footnote{\Cx
Let $b$ be any real number in the unit interval $[0,1]$, and consider its binary expansion $0.b_1b_2\cdots b_n\cdots$. Construct the discrete (but in fact countably infinite) distribution
\begin{align*}
	0{+}b_1 &\quad \AT \NF{1}{2} \\
	2{+}b_2 &\quad \AT \NF{1}{4} \\
	\vdots \\
	n{+}b_n &\quad \AT \NF{1}{2^n} \quad.\\
	\vdots
\end{align*}
\par For every such $b$ the above distribution is a possible result of the program: and they are all different distributions. And there are uncountably many of them; but still the set of them all is Cauchy-closed.
}~
yet is still Cauchy closed. Thus it remains within our scope even though it has uncountable demonically branching.
\par On the other hand, the program\quad``choose $x$ from the natural numbers''\quad has only countably infinite branching (and cannot be written in the language of \autoref{table:wp}). Yet when embedded in the probabilistic model \cite{McIver:05a} the output set of distributions is not closed, and hence it is not within the scope of our result.\par Thus the boundary of our result is not countable vs.\ uncountable branching, but rather closed vs.\ not-closed topologically.
\footnote{\Cx We are also restricted to only discrete distributions (but they can be over a countably infinite set) --- but that is a different issue.}
\item We must have a section along these lines because it is in principle possible that someone will come up with a counter-example to our rule (if indeed there is one) by choosing a transition system which is not Cauchy-closed. Without this section it would not be easy for them to see why our whole paper was not flawed. (Remember, the reason the paper would not be invalidated by such a counter-example is that not-Cauchy-closed programs cannot be written in $\PGCL$ with only finite demonic choice, as we present in \autoref{table:wp}. We must make sure that is understood.)
\item So maybe to think about\ldots\ (but not on the critical path!) Can we give an explanation of the difference between Cauchy-closed and countable demonic choice? It's a nice point, because most people thinking intuitively would pick the second criterion (countable) as the important one; but our maths suggests that it's the first criterion that matters (Cauchy-closure).
\item Also mention the apparent anomalies: (1) the variant is bounded, and it looks like the rule is unsound (example 2/3-1/3 random walker in the arXiv); (2) the rule looks unnecessarily weak (when $d(V)>V$, so the process ``cannot possibly decrease $V$ that much, since it would end up less than zero).
\item Explain why we don't use $\awpsymbol$ everywhere.
\end{enumerate}
}

\newpage
{\Bx
\section{Preliminaries}
\Cf{This section copied here 170613.}
Let $\Vars$ denote the set of program variables and let
\begin{align*}
	\Sigma ~{}={}~ \{\sigma ~|~ \sigma\colon \Vars \To \mathbb{Q}\}
\end{align*}
denote the set of program states.
We call a subset $G \subseteq \Sigma$ of program states a \emph{predicate}.
We use Iverson bracket notation $\iverson{G}$ to denote the indicator function of a predicate $G$\C{, that is 1 on those states where $G$ holds and 0 otherwise}.
\Cf{I know this as ``characteristic'' function: hadn't heard of its being named after Iverson \smiley.}
An \emph{expectation} is a random variable that maps program states to non-negative reals or $\infty$.
\Cf{In our program logic we do not allow expectations to be infinite or even unbounded: there must be a single upper bound for the whole state space $\Sigma$. Is that problematic? \B{I think not. We just have to ensure that in your proof rule (Theorem~\ref{thm:lem-2-4-1}) the invariant is bounded, but I already added that condition in this paper. \Cx I agree --- but we might want to put a footnote somewhere to save readers from having to figure that out for themselves. Perhaps leave this footnote here util we do that.}}
More formally, we have the following definition:
\begin{definition}[Expectations~\textnormal{\cite{DBLP:series/mcs/McIverM05}}]
\label{def:expectations}

	The set of expectations on $\Sigma$, denoted by $\E$, is defined as
	\begin{align*}
		\E ~{}={}~ \left\{f ~\middle|~ f\colon \Sigma \To \Rposinf\right\} ~.
	\end{align*}
	For $f\in \E$, we write $f < \infty$ iff $f$ is everywhere non-infinite, i.e.\ $\forall\, \sigma \in \Sigma\colon~f(\sigma) \neq \infty$.
	We furthermore say that $f \in \E$ is bounded iff there exists an upper bound $b \in \Rpos$, s.t.\
	\Cf{If we run short of space, we might consider running-in small bits of maths like this one.}
	\begin{align*}
		\forall \sigma \in \Sigma\colon \quad f(\sigma) \leq b~.
	\end{align*}
	
	There is a natural complete partial order $\leq$ on $\E$ obtained by pointwise lifting the complete partial order on $\Rposinf$, i.e.\
	\begin{align*}
		f_1 ~{}\leq{}~f_2 \quad\text{iff}\quad \forall \sigma\in\Sigma\colon~~ f_1(\sigma) ~{}\leq{}~ f_2(\sigma) ~. \tag*{$\triangle$}
	\end{align*}
\end{definition}
\Cf{Need to say what $\PGCL$ is before writing a definition about it. Just say ``it's the programming language defined in \autoref{table:wp}''? What has to be clear before reading \Def{def:expectations} is that $\PGCL$ is syntax.}
\begin{definition}[The $\wpsymbol$-Transformer~\textnormal{\cite{DBLP:series/mcs/McIverM05}}]
\label{def:wp}
	The weakest pre-expectation transformer $\wpsymbol\colon \PGCL \To \E \To \E$ is defined by structural induction on all $\PGCL$ programs according to the rules given in \autoref{table:wp}.
	\hfill$\triangle$
\end{definition}
\begin{definition}[Universal Almost-Sure Termination]
	\label{def:uast}
	A program $C \in \PGCL$ is said to terminate universally almost-surely iff
	\begin{align*}
		\wp{C}{\one} ~{}={}~ \one\tag*{$\triangle$}\quad.
	\end{align*}
\end{definition}
\begin{table}[t]
\renewcommand{\arraystretch}{1.5}
\begin{tabular}{@{\hspace{1em}}l@{\hspace{2em}}l@{\hspace{1em}}}
	\hline\hline
	$\boldsymbol{C}$			& $\boldsymbol{\textbf{\textsf{wp}}\,\textbf{.}\,C\,\textbf{.}\, f}$\\
	\hline\hline
	$\SKIP$					& $f$ \\
	$\ASSIGN{x}{e}$			& $f\subst{x}{e}$ \\
	$\ITE{G}{C_1}{C_2}$		& $\iverson{G} \cdot \wp{C_1}{f} + \iverson{\neg G} \cdot \wp{C_2}{f}$ \\
	$\PCHOICE{C_1}{p}{C_2}$	& $p \cdot \wp{C_1}{f} + (1 - p) \cdot \wp{C_2}{f}$ \\
	$\NDCHOICE{C_1}{C_2}$	& $\min\left\{ \wp{C_1}{f},\, \wp{C_2}{f} \right\}$ \\
	$\COMPOSE{C_1}{C_2}$		& $\wp{C_1}{\big(\wp{C_2}{f}\big)}$ \\
	 \makebox[0pt][r]{\Ct{Consider $\lfp X\Spot~xxx$~?}\quad}%
	 $\WHILEDO{G}{C'}$			& $\lfp X\C{\textbf{.}}~\iverson{\neg G} \cdot f + \iverson{G} \cdot \wp{C'}{X}$\\
	\hline
\end{tabular}

\vspace{1ex}
\Ct{Say somewhere that $\cdot$ is multiplication.}
\caption{Rules for the $\wpsymbol$-transformer.}
\label{table:wp}
\end{table}
\begin{theorem}[Basic Properties of $\wpsymbol$~\textnormal{\cite{DBLP:series/mcs/McIverM05}}]
	\label{thm:basic-prop}
	For all $C \in \PGCL$ and $f, f_1, f_2 \in \E$, the following holds:
	\begin{enumerate}
		\item \qquad \label{thm:basic-prop-mon}
			$f_1 ~\leq~ f_2 \quad\text{implies}\quad \wp{C}{f_1} ~{}\leq{}~ \wp{C}{f_2}$ \qquad(monotonicity)\\[-.75em]
		\item \qquad \label{thm:basic-prop-scaling}
			$\forall r \in \Rpos\colon \quad \wp{C}{(r \cdot f)} ~{}\leq{}~ r \cdot \wp{C}{f}$ \qquad(scaling)\\[-.75em]
		\item \qquad \label{thm:basic-prop-feasibility}
			If function $f$ satisfies $f ~{}\leq{}~ b$ for some scalar $b \in \Rpos$, then also $\wp{C}{f} ~{}\leq{}~ b $ \qquad(feasibility)
	\end{enumerate}
\end{theorem}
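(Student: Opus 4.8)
The plan is to prove all three clauses by a single structural induction on the $\PGCL$ program $C$, following the defining equations of $\wpsymbol$ in \autoref{table:wp}. The induction hypothesis is that monotonicity, scaling, and feasibility all hold for every strict subprogram of $C$; I would then discharge each syntactic construct in turn, carrying the three properties together so that the sequential-composition and loop cases can appeal to whichever of them they need.

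For the atomic and straight-line constructs the verifications are immediate and nearly identical across the three properties. For $\SKIP$ and $\ASSIGN{x}{e}$ the transformer is the identity and the substitution $f\mapsto f\subst{x}{e}$ respectively, both of which preserve the pointwise order, commute with multiplication by a scalar $r\geq0$, and preserve any constant upper bound $b$. For $\ITE{G}{C_1}{C_2}$ and $\PCHOICE{C_1}{p}{C_2}$ the output is an indicator- resp.\ probability-weighted combination of $\wp{C_1}{f}$ and $\wp{C_2}{f}$ whose weights are non-negative and sum to $1$ pointwise, so each property passes through from the branches by the induction hypothesis (e.g.\ $p\cdot\wp{C_1}{f}+(1{-}p)\cdot\wp{C_2}{f}\leq p\,b+(1{-}p)\,b=b$ for feasibility). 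For $\NDCHOICE{C_1}{C_2}$ one uses that $\min$ of monotone functions is monotone, that $r\cdot\min\{a,b\}=\min\{r\,a,r\,b\}$ for $r\geq0$ (scaling), and that $\min\{a,b\}\leq a\leq b$ (feasibility). For $\COMPOSE{C_1}{C_2}$ each property composes: for feasibility, $\wp{C_2}{f}\leq b$ by the inner hypothesis and then $\wp{C_1}{(\wp{C_2}{f})}\leq b$ by the outer one, and similarly for the other two.

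The only substantive case is the loop $\WHILEDO{G}{C'}$, whose meaning is the least fixed point $\lfp X\textbf{.}\,\Phi_f(X)$ of the characteristic functional $\Phi_f(X)=\iverson{\neg G}\cdot f+\iverson{G}\cdot\wp{C'}{X}$. Here all three properties reduce to the standard fixed-point rule already used in this paper (cf.\ \Lem{l0814}): if $\Phi(y)\leq y$ then $\lfp\Phi\leq y$. For monotonicity, from $f_1\leq f_2$ and the evident monotonicity of $\Phi$ in its parameter one gets $\Phi_{f_1}(\lfp\Phi_{f_2})\leq\Phi_{f_2}(\lfp\Phi_{f_2})=\lfp\Phi_{f_2}$, so $\lfp\Phi_{f_2}$ is a pre-fixed point of $\Phi_{f_1}$ and hence $\lfp\Phi_{f_1}\leq\lfp\Phi_{f_2}$. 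For scaling, the inner hypothesis gives $\wp{C'}{(r\cdot\lfp\Phi_f)}\leq r\cdot\wp{C'}{\lfp\Phi_f}$, from which $\Phi_{rf}(r\cdot\lfp\Phi_f)\leq r\cdot\Phi_f(\lfp\Phi_f)=r\cdot\lfp\Phi_f$, whence $\lfp\Phi_{rf}\leq r\cdot\lfp\Phi_f$. For feasibility the constant $b$ satisfies $\Phi_f(b)\leq\iverson{\neg G}\,b+\iverson{G}\,b=b$ (using $f\leq b$ and $\wp{C'}{b}\leq b$ from the inner hypothesis), so $\lfp\Phi_f\leq b$.

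I expect the genuine care to lie not in any one case but in the scaling clause being stated as an inequality. The straight-line and choice cases actually yield scaling as an equality, and it is only the loop that forces $(\leq)$: the pre-fixed-point argument above delivers $\lfp\Phi_{rf}\leq r\cdot\lfp\Phi_f$ but not the reverse, which would require continuity of the transformers to commute multiplication by $r$ with the Kleene suprema defining the loop (the full equality is the separately stated \Eqn{e1752}). I would flag this explicitly and note that the weaker sub-scaling direction proved here is all that the rest of the development uses. The remaining routine point to keep track of is the handling of $\infty$-valued expectations under the convention $0\cdot\infty=0$, so that the $r=0$ instance of scaling and the indicator products remain well defined.
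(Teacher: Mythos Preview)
The paper does not give its own proof of this theorem: it is stated with a citation to \citet{McIver:05a} and imported as a known fact about the $\wpsymbol$ calculus, so there is no in-paper argument to compare against. Your structural-induction proof is the standard route and is essentially how these properties are established in the cited source; the case analysis you sketch is correct, including the loop case via the pre-fixed-point rule $\Phi(y)\leq y\Rightarrow\lfp\Phi\leq y$.

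One small inaccuracy worth flagging, though it does not affect the stated $(\leq)$ direction: your remark that the reverse scaling inequality ``would require continuity of the transformers to commute multiplication by $r$ with the Kleene suprema'' overstates the difficulty. For $r>0$ a symmetric pre-fixed-point argument works directly: from the induction hypothesis $\wp{C'}{(\tfrac{1}{r}\cdot Y)}\leq\tfrac{1}{r}\cdot\wp{C'}{Y}$ one gets $\Phi_f(\tfrac{1}{r}\cdot\lfp\Phi_{rf})\leq\tfrac{1}{r}\cdot\Phi_{rf}(\lfp\Phi_{rf})=\tfrac{1}{r}\cdot\lfp\Phi_{rf}$, whence $\lfp\Phi_f\leq\tfrac{1}{r}\cdot\lfp\Phi_{rf}$, i.e.\ the other half of \Eqn{e1752}. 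So scaling is in fact an equality already from your inductive setup (and the $r{=}0$ case is trivial since $\wp{C}{0}=0$). This is consistent with the main text, which states scaling as the equality \Eqn{e1752} rather than the one-sided form you were given.
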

\begin{theorem}[Technical Facts about $\wpsymbol$]
	\label{thm:facts}
	For all $G \subseteq \Sigma$, $C \in \PGCL$, $f \in \E$, and $H \in \Rpos$, we have:
	\begin{enumerate}
		\item \qquad \label{thm:facts-1}
			$\wp{\WHILEDO{G}{C}}{\one} ~{}={}~ \wp{\WHILEDO{G}{C}}{\iverson{\neg G}}$\\[-.75em]
		\item \qquad \label{thm:facts-2}
			$\wp{\WHILEDO{0 < f \leq H}{C}}{\iverson{f = 0}} ~{}\leq{}~ \wp{\WHILEDO{0 < f}{C}}{\iverson{f = 0}} $ 
	\end{enumerate}
\end{theorem}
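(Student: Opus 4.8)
The plan is to prove both facts by the same route: each side of each (in)equality is the least fixed point of a loop's characteristic functional $\Phi(X) = \iverson{\neg G}\cdot f + \iverson{G}\cdot\wp{C}{X}$, so I would first compare the two relevant functionals pointwise and then transfer that comparison to their least fixed points via the Kleene characterisation $\lfp\,\Phi = \sup_{n}\Phi^{n}(\zero)$.

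For \Itm{thm:facts-1} the comparison is in fact an \emph{equality} of functionals. Writing $\Phi_{\one}$ and $\Phi_{\iverson{\neg G}}$ for the functionals built from post-expectations $\one$ and $\iverson{\neg G}$ respectively, the post-expectation enters only through the term $\iverson{\neg G}\cdot f$. Since $\iverson{\neg G}\cdot\one = \iverson{\neg G}$ and, by idempotence of an indicator under multiplication, $\iverson{\neg G}\cdot\iverson{\neg G} = \iverson{\neg G}$, both functionals equal $\iverson{\neg G} + \iverson{G}\cdot\wp{C}{X}$. Identical functionals have identical least fixed points, which is the claimed equality; no fixed-point induction is needed here.

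For \Itm{thm:facts-2} I would write out the two functionals $\Phi^{L}(X) = \iverson{\neg(0{<}f{\leq}H)}\cdot\iverson{f{=}0} + \iverson{0{<}f{\leq}H}\cdot\wp{C}{X}$ (stronger guard) and $\Phi^{R}(X) = \iverson{\neg(0{<}f)}\cdot\iverson{f{=}0} + \iverson{0{<}f}\cdot\wp{C}{X}$ (weaker guard). The crucial step is simplifying the two ``terminal'' summands. Because $f$ is an expectation and hence non-negative, $\neg(0{<}f)$ is exactly $f{=}0$, so $\iverson{\neg(0{<}f)}\cdot\iverson{f{=}0}=\iverson{f{=}0}$; and since $\{f{=}0\}\subseteq\{\neg(0{<}f{\leq}H)\}$ we likewise get $\iverson{\neg(0{<}f{\leq}H)}\cdot\iverson{f{=}0}=\iverson{f{=}0}$. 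Thus both terminal summands collapse to the \emph{same} $\iverson{f{=}0}$, and the comparison reduces to the recursive summand, where $\iverson{0{<}f{\leq}H}\leq\iverson{0{<}f}$ pointwise (the stronger guard holds on fewer states) and $\wp{C}{X}\geq 0$. Hence $\Phi^{L}(X)\leq\Phi^{R}(X)$ for every $X\in\E$.

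Finally I would lift pointwise domination of the functionals to their least fixed points: a routine induction on $n$ gives $(\Phi^{L})^{n}(\zero)\leq(\Phi^{R})^{n}(\zero)$ — the inductive step uses $\Phi^{L}\leq\Phi^{R}$ pointwise together with monotonicity of $\Phi^{R}$ — and taking suprema yields the desired inequality between the two least fixed points. Equivalently one may invoke the rule $F(g)\leq g\Implies\lfp F\leq g$ at $g=\lfp\,\Phi^{R}$, the same device used for the more general \Lem{l0814} in the appendix. I expect the only genuinely delicate point in the whole argument to be the terminal-summand simplification in \Itm{thm:facts-2}: the inequality holds precisely because the post-expectation is $\iverson{f{=}0}$, whose support lies in the termination set of \emph{both} loops, so that shrinking the guard leaves the terminal contribution unchanged. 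For a general post-expectation the two functionals would disagree on the extra terminating region $f{>}H$ and the comparison could fail; everything else is mechanical.
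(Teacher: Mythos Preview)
Your proposal is correct and matches the paper's own proof essentially step for step: for \Itm{thm:facts-1} the paper observes that the two characteristic functionals coincide because $\iverson{\neg G}\cdot\one=\iverson{\neg G}=\iverson{\neg G}\cdot\iverson{\neg G}$, and for \Itm{thm:facts-2} it simplifies both terminal summands to $\iverson{f{=}0}$, compares the recursive summands via $\iverson{0{<}f{\leq}H}\leq\iverson{0{<}f}$, and transfers the pointwise inequality to the least fixed points via Kleene iteration. Your remark that the alternative fixed-point rule $F(g)\leq g\Implies\lfp F\leq g$ would also work is exactly the route the paper takes in \Lem{l0814} for the more general version.
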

\begin{proof}
See Appendix.
\end{proof}
%
%


%
%
\begin{definition}[Probabilistic Invariants~\protect{~\textnormal{\cite[\textnormal{p.\ 39, Definition 2.2.1}]{DBLP:series/mcs/McIverM05}}}]
Let $G$ be a predicate and $C$ be a $\pGCL$ program.
Then $I \in \E$ is a probabilistic invariant of the loop $\WHILEDO{G}{C}$, iff $I$ is \emph{bounded} and
\begin{align*}
	\iverson{G} \cdot I ~\leq~ \wp{C}{I}~. \tag*{$\triangle$}
\end{align*}
\C{In this case we say that $I$ is \emph{preserved} by each iteration of $\WHILEDO{G}{C}$.}
\footnote{\Cx Compare standard Hoare/Dijkstra -style weakest preconditions, where $I$'s being preserved is $G\land I\Implies\wp{C}{I}$ \Cite{ADoP}, and recall that $A{\Implies}B$ just when $\iverson{A}{\leq}\iverson{B}$.}
\end{definition}
\begin{theorem}[Total Correctness for Probabilistic Loops~\protect{~\textnormal{\cite[\textnormal{p.\ 43, Lemma 2.4.1, Case 2.}]{DBLP:series/mcs/McIverM05}}}]
\label{thm:lem-2-4-1}
	Let $\mathit{Term}$ be a predicate such that $\iverson{\mathit{Term}} \leq \wp{\WHILEDO{G}{C}}{\one}$, i.e.\ from any initial state satisfying $\mathit{Term}$ the loop terminates almost-surely,
	\Cf{Maybe remark on the difference between this and \Def{def:uast}.}
	and let $I \in \E$ be a probabilistic invariant of $\WHILEDO{G}{C}$.
	Then 
	\begin{align*}
		\iverson{Term} \cdot I ~\leq~ \wp{\WHILEDO{G}{C}}{(\iverson{\neg G} \cdot I)}~.
	\end{align*}
\end{theorem}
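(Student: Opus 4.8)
The plan is to reduce the theorem to a single \emph{partial-correctness decomposition} of the loop and then specialise it to the terminating states. Write $W$ for $\WHILEDO{G}{C}$ and let $b$ be a bound with $I \leq b$ (such a $b$ exists since $I$ is assumed bounded). Recall from \autoref{table:wp} that $\wp{W}{f} = \lfp X.\, \iverson{\neg G}\cdot f + \iverson{G}\cdot\wp{C}{X}$, and that (by the Kleene fixed-point theorem, exploiting continuity of $\wp{C}{\cdot}$) this least fixed point is the supremum of its finite iterates. Accordingly I would introduce the two characteristic functionals $\Phi(X) = \iverson{\neg G}\cdot I + \iverson{G}\cdot\wp{C}{X}$ and $\Theta(X) = \iverson{\neg G} + \iverson{G}\cdot\wp{C}{X}$, whose iterates $\psi_n = \Phi^n(\zero)$ and $\tau_n = \Theta^n(\zero)$ increase to $\wp{W}{(\iverson{\neg G}\cdot I)}$ and $\wp{W}{\one}$ respectively.

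The heart of the argument is the claim that for every $n$,
\[ I \;\leq\; \psi_n + b\cdot(\one - \tau_n)\,, \]
which I would prove by induction on $n$. The base case is immediate since $\psi_0 = \tau_0 = \zero$, leaving $I \leq b\cdot\one$. For the step, one evaluates the right-hand side at stage $n{+}1$ and splits on $G$: on $\neg G$ states it collapses to exactly $I$, so the inequality is trivial; on $G$ states it equals $\wp{C}{\psi_n} + b\cdot(\one - \wp{C}{\tau_n})$, and there I would chain the invariant hypothesis $\iverson{G}\cdot I \leq \wp{C}{I}$, monotonicity of $\wp{C}{\cdot}$ applied to the induction hypothesis $I \leq \psi_n + b(\one-\tau_n)$, and the healthiness (sublinearity) properties of $\wp{C}{\cdot}$ --- superadditivity together with scaling and the constant-shift bound $\wp{C}{(f + c)} \leq \wp{C}{f} + c$ --- exploiting the pointwise identity $(\psi_n + b(\one-\tau_n)) + b\tau_n = \psi_n + b\one$. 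Passing to the limit $n\to\infty$ then yields the decomposition $I \leq \wp{W}{(\iverson{\neg G}\cdot I)} + b\cdot(\one - \wp{W}{\one})$.

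Finally I would specialise to $\mathit{Term}$. By feasibility $\wp{W}{\one} \leq \one$, and by hypothesis $\iverson{\mathit{Term}} \leq \wp{W}{\one}$, so wherever $\iverson{\mathit{Term}} = 1$ we have $\wp{W}{\one} = \one$ and hence $\one - \wp{W}{\one} = \zero$; the decomposition there reduces to $I \leq \wp{W}{(\iverson{\neg G}\cdot I)}$. Multiplying by $\iverson{\mathit{Term}}$ (and using that the target expectation is non-negative off $\mathit{Term}$, where the left-hand side is zero) gives $\iverson{\mathit{Term}}\cdot I \leq \wp{W}{(\iverson{\neg G}\cdot I)}$, as required.

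I expect the inductive step to be the main obstacle: it is the only place where boundedness of $I$ is genuinely used and where the subtraction $b(\one - \tau_n)$ must be handled, and getting the sublinearity inequalities to point the right way --- superadditivity applied to $\psi_n + b\tau_n$ versus the constant-shift bound applied to $\psi_n + b\one$ --- is delicate. Some care is also needed to confirm that the iterates stay bounded ($\tau_n \leq \one$ and $\psi_n \leq b$), so that $\one - \tau_n$ is a genuine non-negative expectation and the limit passage is valid; both bounds follow from feasibility.
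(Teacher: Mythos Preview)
The paper does not actually prove this theorem: it is imported wholesale as an ``existing'' result from \cite[Lemma~2.4.1, Case~2]{McIver:05a} and is only \emph{used} (in Step~B of the proof of \Thm{t1651}). So there is no in-paper proof to compare against.

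Your argument is nonetheless correct and is essentially the proof one finds in the cited source. The crucial inductive inequality $I \leq \psi_n + b(\one-\tau_n)$ goes through exactly as you outline: on $G$-states, writing $h=\psi_n+b(\one-\tau_n)$ and using the identity $h+b\tau_n=\psi_n+b\one$, superadditivity gives $\wp{C}{h}\leq\wp{C}{(\psi_n+b)}-b\,\wp{C}{\tau_n}$, and the constant-shift bound $\wp{C}{(\psi_n+b)}\leq\wp{C}{\psi_n}+b$ closes the step. These are precisely the ``sublinearity'' healthiness conditions of the $\PGCL$ semantics, so your worries about the direction of the inequalities are well-placed but resolve correctly. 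The limit passage is fine pointwise even though $\psi_n+b(\one-\tau_n)$ is not monotone: both $\psi_n$ and $\tau_n$ are increasing and convergent, so the sum converges. One small point worth making explicit is that $\omega$-continuity of $\wp{C}{\cdot}$ (needed for Kleene iteration to reach the least fixed point) is itself a semantic fact about $\PGCL$ programs rather than something you can read off \autoref{table:wp} directly; the paper takes this for granted by relying on \cite{McIver:05a}.
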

Intuitively, this theorem states that if $I$ is a probabilistic invariant that is preserved in each iteration of the loop body, then also the whole loop ``preserves" $I$ from all states from which the loop terminates almost-surely.
\begin{theorem}[Variant Rule for Loops~\protect{~\textnormal{\cite[\textnormal{p.\ 55, Lemma 2.7.1}]{DBLP:series/mcs/McIverM05}}}]
\label{thm:lem-2-7-1}
	Let $\mathit{Inv}, \mathit{Guard} \subseteq \Sigma$ be predicates,
	let $\mathit{VInt}\colon \mathit{Inv} \To \Ints$ be an integer-valued expression,
	let $\mathit{Low}, \mathit{High} \in \Ints$ be fixed integers,
	let $\varepsilon \in (0,\, 1]$ be a fixed probability, and
	let $\mathit{Com}$ be a $\pGCL$ program. \Cf{Why $\mathit{Com}$ rather thank our earlier $C$?}
	Then
	\begin{enumerate}
		\item\label{i1657-1}
			$\iverson{\mathit{Guard}} \cdot \iverson{Inv} ~\leq~ \iverson{\mathit{Low} \leq \mathit{VInt} < \mathit{High} }$, and
		\item\label{i1657-2}
			$\iverson{\mathit{Inv}}$ being a probabilistic invariant for the loop $\WHILEDO{\mathit{Guard}}{\mathit{Com}}$, and
		\item\label{i1657-3}
			$\forall N \in \Ints\colon \quad \varepsilon \cdot \iverson{\mathit{Guard}} \cdot \iverson{\mathit{Inv}} \cdot \iverson{\mathit{VInt} = N} ~\leq~ \wp{\mathit{Com}}{\iverson{\mathit{VInt} < N}}$
	\end{enumerate}
	all together imply
	\begin{align*}
		\iverson{\mathit{Inv}} ~\leq~ \wp{\WHILEDO{\mathit{Guard}}{\mathit{Com}}}{1}~.
	\end{align*}
\end{theorem}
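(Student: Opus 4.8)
The plan is to argue in the operational model --- where, by the soundness of the logic on which Theorems~\ref{thm:lem-2-4-1} and~\ref{thm:lem-2-7-1} are themselves established, $\wp{\WHILEDO{\mathit{Guard}}{\mathit{Com}}}{1}(\sigma)$ is the infimum, over all resolutions of the demonic choice in $\mathit{Com}$ into a Markov chain, of the probability of eventually reaching a $\neg\mathit{Guard}$-state from $\sigma$ --- and to show this infimum is $1$ at every $\sigma\models\mathit{Inv}$. First the bookkeeping. If $\mathit{High}\leq\mathit{Low}$ then condition~(1) makes $\mathit{Guard}\land\mathit{Inv}$ empty, so $\iverson{\mathit{Inv}}\leq\iverson{\neg\mathit{Guard}}\leq\wp{\WHILEDO{\mathit{Guard}}{\mathit{Com}}}{1}$ trivially; otherwise set $H:=\mathit{High}-\mathit{Low}\geq 1$ and replace $\mathit{VInt}$ by $\mathit{VInt}-\mathit{Low}$, so that by~(1) we have $\mathit{Guard}\land\mathit{Inv}\Implies 0\leq\mathit{VInt}\leq H-1$. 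From condition~(2), for any $\mathit{Guard}\land\mathit{Inv}$-state the expected value of $\iverson{\mathit{Inv}}$ after $\mathit{Com}$ is at least $1$, hence exactly $1$ by feasibility; so from such a state $\mathit{Com}$ terminates with probability one and leaves $\mathit{Inv}$ true with probability one. Hence, under any resolution, a run started in $\mathit{Inv}$ stays in $\mathit{Inv}$ forever (almost surely), so each run either eventually reaches $\neg\mathit{Guard}$ or loops forever through $\mathit{Guard}\land\mathit{Inv}$.

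The heart of the argument is a \emph{block estimate}: for every resolution $\delta$ and every $\sigma\models\mathit{Inv}$, the probability of reaching a $\neg\mathit{Guard}$-state within $H$ steps is at least $\varepsilon^{H}$. One reads condition~(3), with the captured $N$ instantiated to the current value of $\mathit{VInt}$, as ``from any $\mathit{Guard}\land\mathit{Inv}$-state a single step decreases $\mathit{VInt}$ strictly with (one-step conditional) probability at least $\varepsilon$'', and then shows, by induction on $j\leq H$, that the probability of the event ``by step $j$, either $\neg\mathit{Guard}$ has been reached, or $\mathit{VInt}$ has strictly decreased at each of the first $j$ steps'' is at least $\varepsilon^{j}$ --- the inductive step splitting a sample path according to whether $\neg\mathit{Guard}$ has already been seen (contributing with full conditional probability) or not (in which case condition~(3) gives a further factor $\geq\varepsilon$). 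Taking $j=H$: $\mathit{VInt}$ started at most $H-1$, and $H$ strict integer decreases drive it below $0$, whence by condition~(1) the guard must have turned false at some step $\leq H$; so $\neg\mathit{Guard}$ is reached within $H$ steps with probability at least $\varepsilon^{H}$.

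To finish, iterate. Conditioned on not having reached $\neg\mathit{Guard}$ after $kH$ steps, the run is almost surely at a state satisfying $\mathit{Guard}\land\mathit{Inv}$, so the block estimate applies afresh, giving probability at least $\varepsilon^{H}$ of reaching $\neg\mathit{Guard}$ in the next $H$ steps. Hence the probability of not having reached $\neg\mathit{Guard}$ after $kH$ steps is at most $(1-\varepsilon^{H})^{k}$ for every $k$; as $\varepsilon>0$ and $H$ is a fixed \emph{finite} number (this is exactly where boundedness of $\mathit{VInt}$ is used), letting $k\to\infty$ gives that $\neg\mathit{Guard}$ is reached with probability $1$. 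Since $\delta$ and $\sigma\models\mathit{Inv}$ were arbitrary, $\wp{\WHILEDO{\mathit{Guard}}{\mathit{Com}}}{1}(\sigma)=1$, i.e.\ $\iverson{\mathit{Inv}}\leq\wp{\WHILEDO{\mathit{Guard}}{\mathit{Com}}}{1}$.

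The main obstacle I expect is the block estimate: condition~(3) bounds only the one-step conditional probability of a decrease, not independent decreases, so the ``$\varepsilon^{H}$'' must come from the careful induction on the compound event above rather than from ``multiplying independent events''. Equally delicate --- and the reason this is intrinsically a semantic rather than a syntactic argument --- is the passage between the logic-level hypotheses (1)--(3), which are inequalities between expectations involving $\iverson{\cdot}$, the demonic $\min$, and the Hoare-style captured variable $N$, and their operational reading (``for every resolution, at every step\ldots''); this is precisely the validity of the $\pGCL$ logic in the transition-system model, and the closure properties of that model, that the paper relies on throughout and that legitimise the ``infimum over resolutions'' characterisation of $\wp{\cdot}{1}$.
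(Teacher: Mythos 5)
You should know at the outset that the paper contains no proof of this statement to compare yours against: \Thm{thm:lem-2-7-1} is explicitly marked ``existing'' and imported from \citet{McIver:05a} (Lemma~2.7.1, p.~55), and it is then used as a black box inside the proof of the new rule \Thm{t1651}. So the comparison can only be between your argument, the cited source, and the paper's general methodology.

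On its own terms your proof is correct in outline, and it is essentially the classical argument behind the cited lemma. The bookkeeping is right (if $\mathit{High}\leq\mathit{Low}$ the hypothesis~(1) empties $\mathit{Guard}\land\mathit{Inv}$; otherwise normalise $\mathit{VInt}$ into $\{0,\dots,H{-}1\}$); condition~(2) plus feasibility does give $\wp{\mathit{Com}}{\iverson{\mathit{Inv}}}=1$ on $\mathit{Guard}\land\mathit{Inv}$, hence a.s.\ termination of the body and a.s.\ preservation of $\mathit{Inv}$ under every resolution; and your nested-event induction (rather than a bogus independence argument) correctly turns condition~(3) into the uniform block estimate $\varepsilon^{H}$, since $H$ strict integer decreases from a value in $[0,H{-}1]$ while $\mathit{Guard}\land\mathit{Inv}$ persists would force $\mathit{VInt}<0$, contradicting~(1). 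The two load-bearing appeals are exactly the ones you flag, and a full write-up would have to discharge them: (i) the identification of $\wp{\WHILEDO{\mathit{Guard}}{\mathit{Com}}}{\one}$ with the infimum over schedulers of the operational reachability probability of $\neg\mathit{Guard}$, which this paper never proves and which must be borrowed (e.g.\ from the operational-correspondence results of \citet{GretzKM14} or the semantic chapters of \citet{McIver:05a}); and (ii) the iteration of the block estimate, which needs the conditional process after a finite history of a (possibly history-dependent) scheduler to be again a resolution of the loop from the current $\mathit{Guard}\land\mathit{Inv}$ state --- standard, but not free. Note also the methodological contrast: the source proves this rule inside the expectation-transformer logic (via its zero--one law for loops), in keeping with this paper's Theme~B of not ``descending into the model'', whereas yours is a semantic MDP argument that leans on the logic's soundness for transition systems. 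That is a legitimate and arguably more elementary route, but it trades away precisely the source-level character that the paper values in such rules.
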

Intuitively the theorem states that if a standard invariant $\mathit{Inv}$ (i.e.\ $\mathit{Inv}$ is a predicate and a probabilistic invariant) and a bounded integer-valued variant $\mathit{VInt}$ are given and if one iteration of the loop body $\mathit{Com}$
\Cf{\label{n1646}``\ldots is guaranteed with probability at least $\varepsilon$ to decrease the variant by at least 1'' is how this is usually said. Are we saying it this way deliberately? Are they actually the same?}
\C{[}decreases the variant by at least a constant $\varepsilon$ in expectation\C{]}, then the loop $\WHILEDO{\mathit{Guard}}{\mathit{Com}}$ terminates almost-surely from any state satisfying $\mathit{Inv}$.

\section{Proof Rules for Almost-Sure Termination}
\Cf{This section copied here 170613.}

%
%
\begin{definition}[Quasi Variants]
\label{def:quasi-var}
$V \in \E$, with $V < \infty$, is called a \emph{quasi variant for $\WHILEDO{G}{C}$}, iff
\begin{enumerate}
		\item
			$V$ indicates termination, i.e.\ 
			\Cf{More conventional (and just as good) is $V{=}0 \Implies \neg G$, i.e.\ that driving $V$ to 0 ensures loop exit.}
			\begin{align*}
				\iverson{\neg G} ~{}={}~ \iverson{V = 0}~, \quad\text{and}
			\end{align*}
		\item
			$V$ satisfies a super-martingale property, that is
			\begin{align*}
				\iverson{G} \cdot \awp{C}{V} ~{}\leq{}~ V~
			\end{align*}
			which is equivalent to
			\begin{align*}
				\forall r \in \Rpos\colon \quad \iverson{0 < V \leq r} \cdot (r - V) ~\leq~ \wp{C}{(r \ominus V)}~,
			\end{align*}
			where $f_1 \ominus f_2 = \Max{f_1 - f_2}{0}$. \hfill$\triangle$
			\Cf{I think maybe the second definition should be given first, and then explained in English. The connection, definition and proof of its awp formulation should be self-contained, possibly in the appendix. \B{I would argue that the awp version should be used in all examples because it is much nicer in practice.}}
	\end{enumerate}
\end{definition}
\Jf{I guess we'll stick to one definition at some point and then have a lemma asserting that there is an alternative equivalent characterisation. Personally, I feel that the earlier one is easier to grasp at the expense that one has define awp (which however is easy). \Cx Annabelle and I are tending to the opposite view, which we'll explain in more detail. But basically although the advantage of awp is indeed that it's easy to understand, its disadvantage however is that we do not have a published, logic complete with loop rules, that includes both wp and awp. We have it only for wp. The great advantage of using our logic exactly (and its loop rules) is (1) that we get shorter, more high-level proofs and (2) we do not have to build (or make proofs in this paper about) how nondeterminism interacts with probability. Recall that in Holger's paper they did have to do that, and it increased the complexity of their exposition considerably. So we are going to suggest that we don't use awp in the proof, but that we do use Benjamin's nice proof that the super-martingale property can equivalently be expressed using awp. \Jx Fine. \CBar}
\begin{definition}[$p,d$-Variants]
\label{def:pd-var}
	A quasi variant $V$ for $\WHILEDO{G}{C}$ is called a \emph{$p,d$-variant for $\WHILEDO{G}{C}$}, iff there exist two functions $p\colon \Rpos \To [0,\, 1]$ and $d\colon \Rpos \To \Rpos$, such that
\Cf{Both $p,d$ should be strictly positive. \Bx Both $p$ and $d$ can be $0$ on input $0$. It makes everything more defined. \Cx But then they are not antitone. If we never use them at 0, maybe we shouldn't define them there. \B{They are antitone above zero as currently defined. The problem I had with undefined-at-0 is: what is the value of $\iverson{G}\cdot \lambda \sigma. p(V(\sigma))$ if $\sigma \models \neg G$? Strictly speaking $0 \cdot p(0) = 0 \cdot \text{undefined} = \text{undefined}$, which I found undesirable... :-/}}	
	\begin{enumerate}
		\item \label{def:pd-var-1}
			$p$ and $d$ are strictly positive above 0, i.e.
			\begin{align*}
				\forall\, 0 < v\colon \qquad p(v) ~{}>{}~ 0 \quad\text{and}\quad d(v) ~{}>{}~ 0~,
			\end{align*}
		\item \label{def:pd-var-2}
			$p$ and $d$ are antitone above 0, i.e.\
			\begin{align*}
				\forall\, 0 < v \leq v'\colon\qquad p(v') ~{}\leq{}~ p(v) \quad\text{and}\quad d(v') ~{}\leq{}~ d(v)~,
			\end{align*}
		\item \label{def:pd-var-3}
			$V$, $p$, and $d$ satisfy a progress condition, namely
			\begin{align*}
				\iverson{G} \cdot \lambda \sigma \textbf{.}\, p\big(V(\sigma)\big) \Wide{\leq} \lambda \sigma\textbf{.}\, \wp{C}{\iverson{V \leq V(\sigma){-}d\big(V(\sigma)\big)}}~. \tag*{$\triangle$}
			\end{align*}
	\end{enumerate}
\end{definition}

\Jf{
To what extent does the result below enable us to cover a larger set of programs to be handled as Charakov and Sankarananayan could handle in their CAV'13 paper using super-martingales? Would be nice of at least an example could be given that can be handled now, but not with their approach. Or obtain a result that this rules captures a strictly larger set of programs. \Cx{ I believe this kind of comparison is covered in our \textit{arXiv} paper: Annabelle will check. }
\Ax We have looked at this a bit, and have the following results (as of January at least!)

Wrt.\ Chakarov, the example which we can do that they can't is the 1-dimensional random walk which goes up or down with probability 1/2. I believe also that Chakarov's rules do not treat nondeterminism.

Moreover, via a corollary of a result by Blackwell we can show that if the super-martingale is in fact a non-constant  and bounded \C{\emph{exact martingale} (do you mean, Annabelle?)}, then the program \emph{does not} terminate with probability 1.  We have two random-walker type examples to contrast with some work of Chatterjee et al.\ (on refuting martingales). The first one terminates with probability 1, but has unbounded time to termination; we can show termination using our rule.
The second example is a walker which does not terminate at all: we can show this by exhibiting a bounded, non-constant martingale. In both cases the refutation result of Chatterjee et al.\ imply \emph{only} that neither terminates in finite expected time.

There is also some even more recent work by Chatterjee which I have to look at\ldots

I have looked at this paper again --- it's called ``Termination of Nondeterministic Recursive Probabilistic Programs'' and was put in the archive on 11 January 2017. Maybe they will submit it to POPL too?

 It seems to do the following: it proves soundness and completeness for ranking super-martingales for programs that terminate in finite expected time. (Apparently they found some problem with Hermanns et al.'s result on this.) They also look again at almost-sure termination and produce some results for MDP's using what they call ``difference-bounded'' martingales and super-martingales, thus expanding the martingale-type techniques for almost-sure termination. Using these ideas they are able to prove termination of the 1-dimensional random walk; however it needs a result on ``tail probabilities''.
 
In contrast \underline{our rule does not need tail probabilities} --- i.e.\ it is designed to use local reasoning on the program code.  We also explore the question of completeness and have some illuminating results there too.
 }

\begin{theorem}[$p,d$-Variants Witness Universal Almost-Sure Termination]
\label{xthm:pd}
If $C$ terminates universally almost-surely and if $V$ is a $p,d$-variant for $\WHILEDO{G}{C}$, then $\WHILEDO{G}{C}$ terminates universally almost-surely.
\end{theorem}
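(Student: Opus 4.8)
The plan is to reduce the new rule to the two already-available results \Thm{thm:lem-2-7-1} (the integer-variant rule) and \Thm{thm:lem-2-4-1} (partial correctness plus termination), by a ``guard-strengthening then limit'' argument. First I would fix an arbitrary real $H > 0$ and consider the loop whose guard is strengthened to $G \land V{\le}H$. On this restricted loop the quasi-variant $V$ takes values only in $[0,H]$, and since $p,d$ are antitone, every iteration that can actually run (so $0 < V \le H$) decreases $V$ by at least $d(H)$ with probability at least $p(H)$. Discretising by $\mathit{VInt} := \lceil V/d(H)\rceil$ turns $V$ into an integer variant bounded strictly below by $0$ and above by $\lceil H/d(H)\rceil$, whose down-steps are by at least $1$ and still occur with probability at least $p(H)$. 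Instantiating \Thm{thm:lem-2-7-1} with these choices --- invariant $I$ (still an invariant of the strengthened loop by condition~\Itm{i1651-2}), and $\varepsilon := p(H)$ --- then gives that the guard-strengthened loop is \AST\ from every state satisfying $I$. The routine bookkeeping here is checking that $V(\sigma') \le V(\sigma) - d(V(\sigma))$ forces $\lceil V(\sigma')/d(H)\rceil < \lceil V(\sigma)/d(H)\rceil$, which uses $V(\sigma) \le H$ together with antitonicity of $d$ and monotonicity of $\lceil-\rceil$.

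Next I would use the super-martingale condition~\Itm{i1651-4}, which says exactly that $H{\ominus}V$ is a sub-martingale on $G\land I$ and is bounded (by $H$). Since the guard-strengthened loop is \AST\ from $I$, \Thm{thm:lem-2-4-1} applies with $\mathit{Term} := I$ and $\mathit{Sub} := H{\ominus}V$, yielding $\iverson{I}\cdot(H{\ominus}V) \le \wp{\WHILEDO{G\land V{\le}H}{\mathit{Com}}}{(\iverson{\neg(G\land V{\le}H)}\cdot(H{\ominus}V))}$. Because $V > H$ makes $H{\ominus}V = 0$, the post-expectation simplifies to $\iverson{\neg G}\cdot(H{\ominus}V)$; scaling by $1/H$ via the scaling law \Eqn{e1752} and then dropping the harmless leftover factor $1{\ominus}V/H$ from the post-expectation by monotonicity gives $(1{\ominus}V/H)\cdot\iverson{I} \le \wp{\WHILEDO{G\land V{\le}H}{\mathit{Com}}}{\iverson{\neg G}}$. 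In words: from any $I$-state the guard-strengthened loop exits ``the good way'', to $\neg G$, with probability at least $1 - V/H$.

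Finally I would remove the artificial bound. I need the monotonicity fact that a stronger guard can only lower the probability of establishing the negation of the original guard, i.e.\ $\wp{\WHILEDO{G\land V{\le}H}{\mathit{Com}}}{\iverson{\neg G}} \le \wp{\WHILEDO{G}{\mathit{Com}}}{\iverson{\neg G}}$; this follows from the least-fixed-point induction $F(g)\le g \Rightarrow \mathrm{lfp}\,F \le g$, mirroring the standard-programming version. Combining it with the previous bound gives $(1{\ominus}V/H)\cdot\iverson{I} \le \wp{\WHILEDO{G}{\mathit{Com}}}{\iverson{\neg G}}$ for every $H$; taking the supremum over $H$ (letting $H\to\infty$) yields $\iverson{I} \le \wp{\WHILEDO{G}{\mathit{Com}}}{\iverson{\neg G}} \le \wp{\WHILEDO{G}{\mathit{Com}}}{\one}$, which is precisely \AST\ from $I$.

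The main obstacle I expect is not any single calculation but the ``inversion'' around $H{\ominus}V$: the $\PGCL$ logic bounds expected values only from below (demonic choice is $\min$), so $V$'s super-martingale property is only usable in the equivalent sub-martingale form for $H{\ominus}V$, and one must ensure the boundedness hypothesis of \Thm{thm:lem-2-4-1} is met --- it is, since $H{\ominus}V \le H$ --- which is exactly why the possibly-unbounded $V$ cannot be fed into that theorem directly. A second, subtler point is that the limit step forces $p$ and $d$ to be positive and antitone on \emph{all} of $\Rspos$, not merely on values $V(\sigma)$ attained at reachable states, since the argument evaluates $p(H),d(H)$ for arbitrarily large $H$; overlooking this would make the rule unsound (e.g.\ for the biased random walk).
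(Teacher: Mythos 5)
Your proposal is correct and follows essentially the same route as the paper's own proof: strengthen the guard with $V{\leq}H$, discretise via $\lceil V/d(H)\rceil$ to invoke \Thm{thm:lem-2-7-1}, then feed the bounded sub-martingale $H{\ominus}V$ into \Thm{thm:lem-2-4-1}, compare against the original loop by the guard-monotonicity lemma, and let $H\to\infty$. Even your closing caveats (boundedness of $\mathit{Sub}$, and $p,d$ being positive and antitone on all of $\Rspos$ rather than only on reachable values) match the paper's own discussion, so there is nothing to add.
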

\begin{proof}
Our proof consists of two parts:
\begin{enumerate}
	\item 
		We fix any strictly positive real number $H \in \Rpos \setminus \{0\}$ and consider the set of states described by the predicate $0 < V \leq H$.
		We then prove that the loop $\WHILEDO{G}{C}$ will almost-surely leave that set, by proving that another, related loop terminates almost-surely.
		For that we will exploit \Thm{thm:lem-2-7-1}.
	
	\item
		We prove that almost-sure escape from the set described by $0 < V \leq H$ for any $H$, as just above, implies \emph{universal almost-sure termination} of the loop $\WHILEDO{G}{C}$. For that we will exploit \Thm{thm:lem-2-4-1}.
\end{enumerate}
Here are the details.
Let $V$ be the $p,d$-variant for $\WHILEDO{G}{C}$.
Fix any arbitrary strictly positive $H \in \Rpos \setminus \{0\}$.
We can express the probability that the execution of $\WHILEDO{G}{C}$ leaves the set described by $0 < V \leq H$ in terms of the termination probability of another loop, namely
\begin{Equation}\label{e1412A}
	\WHILEDO{0 < V \leq H}{C}~,
\end{Equation}
and we now show that probability to be 1.

\paragraph{\textbf{(1) The loop \Eqn{e1412A} escapes $\boldsymbol{\iverson{0 < V \leq H}}$ almost-surely for any $\boldsymbol{H}$:}}

We will now prove that the termination probability of the \emph{above} loop is 1, meaning that the \emph{original} loop escapes the set described by $0 < V \leq H$ almost-surely.
For that, we instantiate the variables of Theorem~\ref{thm:lem-2-7-1} as follows:
\begin{align*}
	\mathit{Inv} ~=~ \true \qquad 
	\mathit{Guard} ~=~ 0 < V \leq H \qquad 
	\mathit{VInt} ~=~ \left\lceil \frac{V}{d(H)} \right\rceil \\
	\mathit{Low} ~=~ 0 \qquad 
	\mathit{High} ~=~ \left\lceil \frac{H}{d(H)} \right\rceil + 1 \qquad 
	\varepsilon ~=~ p(H) \qquad 
	\mathit{Com} ~=~ C
\end{align*}
%
%
%
%
%
%
%
%
Intuitively, $\mathit{VInt}$ can be thought of as a \emph{discretized} version of $V$. 
On the interval $(0,\, H]$ this discretized variant $\mathit{VInt}$ will decrease \Cd{in expectation}\Cf{See \Note{n1646}.}  by at least 1 \Cd{(or drop to 0)}\Cf{If it drops to 0, then it must have decreased by at least 1\ldots?} with probability $\varepsilon = p(H)$.

We now verify that our above choices satisfy the preconditions of Theorem~\ref{thm:lem-2-7-1}:
\begin{enumerate}
	\item 
		First observe that $V \leq H$ implies $\left\lceil\sfrac{V}{a} \right\rceil \leq \left\lceil\sfrac{H}{a} \right\rceil$ for any strictly positive $a$.
		Since $d(H)$ is indeed strictly positive, we therefore have
		\Cf{The calculations just here could just as easily be over predicates, and there would then be less ``Iverson clutter''.}
		\begin{align*}
			& \iverson{V \leq H} ~\leq~ \iverson{\left\lceil \frac{V}{d(H)} \right\rceil \leq \left\lceil \frac{H}{d(H)} \right\rceil } \\
			\implies \quad & \iverson{V \leq H} ~\leq~ \iverson{\left\lceil \frac{V}{d(H)} \right\rceil < \left\lceil \frac{H}{d(H)} \right\rceil + 1 }\\
			\Longleftrightarrow \quad & \iverson{V \leq H} ~\leq~ \iverson{\mathit{VInt} < \mathit{High} } 
			\tag*{$\left( \mathit{VInt} = \left\lceil \frac{V}{d(H)} \right\rceil \text{ and } \mathit{High} = \left\lceil \frac{H}{d(H)} \right\rceil + 1 \right)$}\\
			\Longleftrightarrow \quad & \iverson{0 \leq V \leq H} ~\leq~ \iverson{0 \leq \mathit{VInt} < \mathit{High} } 
			\tag{$V \geq 0$ and $\mathit{VInt} \geq 0$} \\
			\implies \quad & \iverson{0 < V \leq H} ~\leq~ \iverson{0 \leq \mathit{VInt} < \mathit{High} } \\
			\Longleftrightarrow \quad & \iverson{\mathit{Guard}} ~\leq~ \iverson{\mathit{Low} \leq \mathit{VInt} < \mathit{High} } 
			\tag{$\mathit{Guard} = 0 < V \leq H$ and $\mathit{Low} = 0$}\\
			\Longleftrightarrow \quad & \iverson{\mathit{Guard}} \cdot \iverson{\true}~\leq~ \iverson{\mathit{Low} \leq \mathit{VInt} < \mathit{High} } \\
			\Longleftrightarrow \quad & \iverson{\mathit{Guard}} \cdot \iverson{\mathit{Inv}}~\leq~ \iverson{\mathit{Low} \leq \mathit{VInt} < \mathit{High} }~, 
			\tag{$\mathit{Inv} = \true$}
		\end{align*}
		which is an instance of \Thm{thm:lem-2-7-1}\Itm{i1657-1}.

	\item \Cf{Put this first.}
		We assume that the loop body $C$ terminates universally almost-surely. Thus
		\begin{align*}
			&1 ~=~ \wp{\mathit{C}}{1}\\
			\implies \quad & \iverson{\mathit{Guard}} \cdot \iverson{\mathit{Inv}} ~\leq~ \wp{\mathit{C}}{1}\\
			\Longleftrightarrow \quad & \iverson{\mathit{Guard}} \cdot \iverson{\mathit{Inv}} ~\leq~ \wp{\mathit{C}}{\iverson{\true}}\\
			\Longleftrightarrow \quad & \iverson{\mathit{Guard}} \cdot \iverson{\mathit{Inv}} ~\leq~ \wp{\mathit{Com}}{\iverson{\mathit{Inv}}}~, \tag{$\mathit{Com} = C$ and $\mathit{Inv} = \true$}
		\end{align*}
		which is an instance of \Thm{thm:lem-2-7-1}\Itm{i1657-2}.

		The second condition we have to ensure in order for $\iverson{\mathit{Inv}}$ to be a probabilistic invariant is that $\iverson{\mathit{Inv}}$ is bounded, which is trivially true since it is bounded by 1:
		\begin{align*}
			\iverson{\mathit{Inv}} \eeq \iverson{\true} \eeq 1 \lleq 1
		\end{align*}

	\item Starting from our progress condition imposed by $V$ being a $p,d$-variant, we verify the last precondition:
	\Cf{Slightly puzzled by the role of the $\lambda\sigma$'s ---
	for example, why do we not write on the left of the first line $\lambda \sigma \textbf{.}\, \iverson{G(\sigma)} \cdot p(V(\sigma))$?}
		\begin{align*}
			&\iverson{G} \cdot \lambda \sigma \textbf{.}\, p\big(V(\sigma)\big) ~\leq~ \lambda \sigma\textbf{.}\, \wp{C}{\iverson{V \leq V(\sigma) - d\big(V(\sigma)\big)}} \tag{the progress condition}\\
			\Longleftrightarrow \quad &\iverson{0 < V} \cdot \lambda \sigma \textbf{.}\, p\big(V(\sigma)\big) ~\leq~ \lambda \sigma\textbf{.}\, \wp{C}{\iverson{V \leq V(\sigma) - d\big(V(\sigma)\big)}} \tag{$\iverson{V = 0} = \iverson{\neg G}$} \\
			\C{\implies} \quad &\iverson{0 < V \leq H} \cdot \lambda \sigma \textbf{.}\, p\big(V(\sigma)\big) ~\leq~ \lambda \sigma\textbf{.}\, \wp{C}{\iverson{V \leq V(\sigma) - d\big(V(\sigma)\big)}} \tag{$\iverson{0 < V \leq H} \leq \iverson{0 < V}$} \\
			\implies \quad &\iverson{0 < V \leq H}\cdot \lambda \sigma \textbf{.}\, p\big(V(\sigma)\big) ~\leq~ \lambda \sigma\textbf{.}\, \wp{C}{\iverson{\left\lceil \frac{V}{d(H)} \right\rceil \leq \left\lceil \frac{V(\sigma)}{d(H)} - \frac{d\big(V(\sigma)\big)}{d(H)}\right\rceil}} \tag{\Cx monotonicity}\\
			\implies \quad &\iverson{0 < V \leq H}\cdot \lambda \sigma \textbf{.}\, p\big(V(\sigma)\big) ~\leq~ \lambda \sigma\textbf{.}\, \wp{C}{\iverson{\left\lceil \frac{V}{d(H)} \right\rceil \leq \left\lceil \frac{V(\sigma)}{d(H)}\right\rceil - 1}} \tag{$V \leq H$\C{(?)} implies $\lambda \sigma\textbf{.}\, d\big(V(\sigma)\big) \geq d(H)$} \\
			\Longleftrightarrow \quad &\iverson{0 < V \leq H}\cdot \lambda \sigma \textbf{.}\, p\big(V(\sigma)\big) ~\leq~ \lambda \sigma\textbf{.}\, \wp{C}{\iverson{\mathit{VInt} \leq \mathit{VInt}(\sigma) - 1}} 
			\tag*{$\left( \mathit{VInt} = \left\lceil \frac{V}{d(H)} \right\rceil \right)$}\\
			\C{\Longleftrightarrow} \quad &\iverson{0 < V \leq H}\cdot \lambda \sigma \textbf{.}\, p\big(V(\sigma)\big) ~\leq~ \lambda \sigma\textbf{.}\, \wp{C}{\iverson{\mathit{VInt} < \mathit{VInt}(\sigma)}} \\
			\implies \quad &\iverson{0 < V \leq H}\cdot p(H) ~\leq~ \lambda \sigma\textbf{.}\, \wp{C}{\iverson{\mathit{VInt} < \mathit{VInt}(\sigma)}} \tag{$V \leq H$\C{(?)} implies $\lambda \sigma\textbf{.}\, p\big(V(\sigma)\big) \geq p(H)$} \\
			\implies \quad &\forall N \in \Ints\colon \quad \iverson{0 < V \leq H}\cdot p(H) \cdot \iverson{\mathit{VInt} = N}~\leq~  \iverson{\mathit{VInt} = N} \cdot \lambda \sigma\textbf{.}\, \wp{C}{\iverson{\mathit{VInt} < \mathit{VInt}(\sigma)}} \\
			\Longleftrightarrow\textrm{\Cx?} \quad &\forall N \in \Ints\colon \quad \iverson{0 < V \leq H}\cdot p(H) \cdot \iverson{\mathit{VInt} = N}~\leq~  \iverson{\mathit{VInt} = N} \cdot \wp{C}{\iverson{\mathit{VInt} < N}} \tag{\Cx choose $\sigma$?}\\
			\implies \quad &\forall N \in \Ints\colon \quad \iverson{0 < V \leq H}\cdot p(H) \cdot \iverson{\mathit{VInt} = N}~\leq~ \wp{C}{\iverson{\mathit{VInt} < N}} \\
			\Longleftrightarrow \quad &\forall N \in \Ints\colon \quad p(H) \cdot \iverson{0 < V \leq H} \cdot 1 \cdot \iverson{\mathit{VInt} = N}~\leq~ \wp{C}{\iverson{\mathit{VInt} < N}} \\
			\Longleftrightarrow \quad &\forall N \in \Ints\colon \quad \varepsilon \cdot \iverson{\mathit{Guard}} \cdot \iverson{\mathit{Inv}} \cdot \iverson{\mathit{VInt} = N}~\leq~ \wp{\mathit{Com}}{\iverson{\mathit{VInt} < N}} \tag{$p(H) = \varepsilon$, $\mathit{Guard} = 0 < V \leq H$, $\mathit{Inv} = \true$, and $\mathit{Com} = C$}
		\end{align*}
\end{enumerate}
We have thus shown that all preconditions of Theorem~\ref{thm:lem-2-7-1} are satisfied which in conclusion gives us
\begin{align*}
	&\iverson{\mathit{Inv}} ~\leq~ \wp{\WHILEDO{\mathit{Guard}}{\mathit{Com}}}{1} \tag{consequence of Theorem~\ref{thm:lem-2-7-1}}\\
	\Longleftrightarrow \quad &\iverson{\true} ~\leq~ \wp{\WHILEDO{0 < V \leq H}{C}}{1} \tag{$\mathit{Inv} = \true$, $\mathit{Guard} = 0 < V \leq H$, and $\mathit{Com} = C$} \\
	\Longleftrightarrow \quad &\iverson{\true} ~=~ \wp{\WHILEDO{0 < V \leq H}{C}}{1}~.
\end{align*}
\paragraph{\textbf{(2) The loop terminates almost-surely:}}


We now use \autoref{thm:lem-2-4-1} to show that almost-sure escape from $0 < V \leq H$ for arbitrary $H$ implies escape from $0 < V$ in general, i.e.\ with no upper bound $H$ at all.
For the probabilistic invariant $I$ in \autoref{thm:lem-2-4-1} we choose $I = H \ominus V$.
The fact that $H \ominus V$ is a probabilistic invariant follows from the fact that $I$ is bounded by $H$ and from the super-martingale property of $V$: for any non-negative real $r$ we have
\begin{align*}
	&
	\iverson{0 < V \leq r} \cdot (r - V) ~\leq~ \wp{C}{(r \ominus V)} \tag{$V$ is a super-martingale} \\
	~\implies\quad & \iverson{0 < V \leq H} \cdot (H - V) ~\leq~ \wp{C}{(H \ominus V)} \tag{instantiate $r$ with our particular choice $H$} \\
	~\Longleftrightarrow\quad & \iverson{0 < V \leq H} \cdot (H \ominus V) ~\leq~ \wp{C}{(H \ominus V)} \\
	~\Longleftrightarrow\quad & H \ominus V \text{ is a probabilistic invariant of } \WHILEDO{0 < V \leq H}{C}~.
\end{align*}
Thus $\mathit{Term} = \true$, $G = (0 < V \leq H)$, $C$, and $I = H \ominus V$ satisfy the preconditions of \autoref{thm:lem-2-4-1}, and we conclude
\begin{align*}
	&\iverson{\true} \cdot H \ominus V ~\leq~ \wp{\WHILEDO{0 < V \leq H}{C}}{\big(\iverson{\neg (0 < V \leq H)} \cdot H \ominus V\big)} \tag{by \autoref{thm:lem-2-4-1}} \\
	\Longleftrightarrow\quad & H \ominus V ~\leq~ \wp{\WHILEDO{0 < V \leq H}{C}}{\big(\iverson{0 = V ~\vee~ H < V} \cdot H \ominus V\big)} \\
	\Longleftrightarrow\quad & H \ominus V ~\leq~ \wp{\WHILEDO{0 < V \leq H}{C}}{\big(\iverson{0 = V } \cdot H\big)} \\
	\Longleftrightarrow\quad & \one \ominus \NF{V}{H} ~\leq~ \wp{\WHILEDO{0 < V \leq H}{C}}{\big(\iverson{0 = V }\big)}~. \tag{$\dagger$; by \autoref{thm:basic-prop}.(\ref{thm:basic-prop-scaling})}
\end{align*}
Now finally we can reason about the loop $\WHILEDO{G}{C}$ in which we were originally interested. Consider the following:
\begin{align*}
			&\wp{\WHILEDO{G}{C}}{\one}\\
	{}={}~	&\wp{\WHILEDO{G}{C}}{\iverson{\neg G}} \tag{by \autoref{thm:facts}.(\ref{thm:facts-1})}\\
	{}={}~	&\wp{\WHILEDO{0 < V}{C}}{\iverson{V = 0}} \tag{$V$ indicates termination}\\
	\geq{}~	&\wp{\WHILEDO{0 < V \leq H}{C}}{\iverson{V = 0}} \tag{by \autoref{thm:facts}.(\ref{thm:facts-2})}\\
	{}={}~	&\one \ominus \NF{V}{H} \tag{by Equation $\dagger$ above}
\end{align*}

Overall, we have established $\wp{\WHILEDO{G}{C}}{\one} \geq \one \ominus \frac{V}{H}$ for any $H \in \Rpos$, in particular for arbitrarily large $H$.
If we now fix any initial state $\sigma$, we can take the supremum over $H$ and get
\begin{align*}
	(\wp{\WHILEDO{G}{C}}{\one}) (\sigma) ~{}\geq{}~ \underbrace{\sup_{H \in \Rpos} 1 - \NF{V(\sigma)}{H}}_{{}=\one} ~.
\end{align*}

Finally, since $\wp{C'}{\one} \leq \one$ for any $C'$ (see \autoref{thm:basic-prop}.(\ref{thm:basic-prop-feasibility})),
in particular for $C' = \WHILEDO{G}{C}$, and $\sup_{H \in \Rpos}~ 1 - \NF{V(\sigma)}{H} = 1$, we obtain $(\wp{\WHILEDO{G}{C}}{\one}) (\sigma) = 1$ for all states $\sigma$ and hence
\begin{align*}
	\wp{\WHILEDO{G}{C}}{\one} ~{}={}~ \one~,
\end{align*}
i.e.\ $\WHILEDO{G}{C}$ terminates universally almost-surely.
\end{proof}
}

\fi 

\end{document}